\newtheorem{theorem}{Theorem}[section]
\newtheorem{lemma}[theorem]{Lemma}
\newtheorem{proposition}[theorem]{Proposition}
\newtheorem{definition}{Definition}[section]
\newtheorem{remark}{Remark}[section]
\DeclarePairedDelimiter\rbra{\lparen}{\rparen}
\DeclarePairedDelimiter\sbra{\lbrack}{\rbrack}
\DeclarePairedDelimiter\cbra{\{}{\}}
\DeclarePairedDelimiter\abs{\lvert}{\rvert}
\DeclarePairedDelimiter\Abs{\lVert}{\rVert}
\DeclarePairedDelimiter\ceil{\lceil}{\rceil}
\DeclarePairedDelimiter\floor{\lfloor}{\rfloor}
\DeclarePairedDelimiter\ket{\lvert}{\rangle}
\DeclarePairedDelimiter\bra{\langle}{\rvert}
\newcommand{\set}[2] {\left\{\, #1 \colon #2 \,\right\}}
\newcommand{\footremember}[2]{%
    \footnote{#2}
    \newcounter{#1}
    \setcounter{#1}{\value{footnote}}%
}
\newcommand{\qisheng}[1]{{\color{red} [\textit{Qisheng: } #1]}}
\begin{document}

\title{Quantum Algorithm for Lexicographically Minimal String Rotation}
\author{
    Qisheng Wang 
    \footremember{1}{Qisheng Wang is with the Graduate School of Mathematics, Nagoya University, Nagoya, Japan (e-mail: \url{QishengWang1994@gmail.com}). Part of the work was done when the author was at the Department of Computer Science and Technology, Tsinghua University, Beijing, China.}
    \and Mingsheng Ying \footremember{7}{Mingsheng Ying is with the State Key Laboratory of Computer Science, Institute of Software, Chinese Academy of Sciences, Beijing, China, and also with the Department of Computer Science and Technology, Tsinghua University, Beijing, China (e-mail: \url{yingms@ios.ac.cn}).}
}
\date{}

\maketitle

\begin{abstract}
    Lexicographically minimal string rotation (LMSR) is a problem to find the minimal one among all rotations of a string in the lexicographical order, which is widely used in equality checking of graphs, polygons, automata and chemical structures.
    In this paper, we propose an $O(n^{3/4})$ quantum query algorithm for LMSR. In particular, the algorithm has average-case query complexity $O(\sqrt n \log n)$, which is shown to be asymptotically optimal up to a polylogarithmic factor, compared to its $\Omega\left(\sqrt{n/\log n}\right)$ lower bound. Furthermore, we show that our quantum algorithm outperforms any (classical) randomized algorithms in both worst and average cases. As an application, it is used in benzenoid identification and disjoint-cycle automata minimization.

\end{abstract}

\textbf{Keywords: quantum computing, quantum algorithms, quantum query complexity, string problems, lexicographically minimal string rotation.}

    \newpage

    \tableofcontents
    \newpage

\section{Introduction}
\subsection{Lexicographically Minimal String Rotation}
    Lexicographically Minimal String Rotation (LMSR) is the problem of finding the lexicographically smallest string among all possible cyclic rotations of a given input string \cite{Boo80}. It has been widely used in equality checking of graphs \cite{Col80}, polygons \cite{Ili89, Mae91}, automata \cite{Pup10} (and their minimizations \cite{Alm08}) and chemical structures \cite{Shi79}, and in generating de Bruijn sequences \cite{Saw16,DHS+18} (see also \cite{Gro03, Lin01}). Booth \cite{Boo80} first proposed an algorithm in linear time for LMSR based on the Knuth-Morris-Pratt string-matching algorithm \cite{Knu77}. Shiloach \cite{Shi81} later improved Booth's algorithm in terms of performance. A more efficient algorithm was developed by Duval \cite{Duv83} from a different point of view known as Lyndon Factorization. All these algorithms for LMSR are deterministic and have worst-case time complexity $\Theta(n)$. After that, several parallel algorithms for LMSR were developed. Apostolico, Iliopoulos and Paige \cite{Apo87} found an $O(\log n)$ time CRCW PRAM (Concurrent Read Concurrent Write Parallel Random-Access Machine) algorithm for LMSR using $O(n)$ processors, which was then improved by Iliopoulos and Smyth \cite{Ili92} to use only $O(n/\log n)$ processors.

The LMSR of a string $s$ can also be computed by finding the lexicographically minimal suffix of $ss\$$, i.e. the concatenation of two occurrences of $s$ and an end-marker $\$$, where $\$$ is considered as a character lexicographically larger than every character in $s$. The minimal suffix of a string can be found in linear time with the help of data structures known as suffix trees \cite{Wei73, Aho74, McC76} and suffix arrays \cite{Man90, Gon92, Kar06}, and alternatively by some specific algorithms \cite{Duv83, Cro92, Ryt03} based on Duval's algorithm \cite{Duv83} or the KMP algorithm \cite{Knu77}.

    \subsection{Quantum Algorithms for String Problems}

    Although a large number of new quantum algorithms have been found for various problems (e.g., \cite{Sho94, Gro96, BS06, Amb07, MSS07, HHL09, BS17}), only a few of them solve string problems.

    Pattern matching is a fundamental problem in stringology, where we are tasked with determining whether a pattern $P$ of length $m$ occurs in a text $T$ of length $n$. In classical computing, it is considered to be closely related to LMSR.
    The Knuth-Morris-Pratt algorithm \cite{Knu77} used in Booth's algorithm \cite{Boo80} for LMSR mentioned above is one of the first few algorithms for pattern matching, with time complexity $\Theta(n + m)$. Recently, several quantum algorithms have been developed for pattern matching; for example, Ramesh and Vinay \cite{Ram03} developed an $O\left(\sqrt n \log (n/m) \log m + \sqrt{m}\log^2 m\right)$ quantum pattern matching algorithm based on a useful technique for parallel pattern matching, namely deterministic sampling \cite{Vis90}, and Montanaro \cite{Mon17} proposed an average-case $O\left((n/m)^d 2^{O\left(d^{3/2}\sqrt{\log m}\right)}\right)$ quantum algorithm for $d$-dimensional pattern matching.
    However, it seems that these quantum algorithms for pattern matching cannot be directly generalized to solve LMSR.

    Additionally, quantum algorithms for reconstructing unknown strings with nonstandard queries have been proposed; for example, substring queries \cite{CILG+12} and wildcard queries \cite{AM14}.
    Recently, a quantum algorithm that approximates the edit distance within a constant factor was developed in \cite{BEG+18}. Soon after, Le Gall and Seddighin \cite{LGS22} studied quantum algorithms for several other string problems: longest common substring, longest palindrome substring, and Ulam distance.

    \subsection{Main Contributions of This Paper}

    A naive quantum algorithm for LMSR (see Definition \ref{def:lmsr} for its formal definition) is to find the LMSR of the string within $O(\sqrt{n})$ comparisons of rotations by quantum minimum finding \cite{Dur96, Ahu99} among all rotations. However, each comparison of two rotations in the lexicographical order costs $O(\sqrt{n})$ and is bounded-error. Combining the both, an $\tilde O(n)$ quantum algorithm for LMSR is obtained, which has no advantages compared to classical algorithms.

    In this paper, however, we find a more efficient quantum algorithm for LMSR. Formally, we have:

    \begin{theorem} [Quantum Algorithm for LMSR] \label{thm-main2}
        There is a bounded-error quantum query algorithm for LMSR, for which the worst-case query complexity is $O\left(n^{3/4}\right)$ and average-case query complexity is $O\left(\sqrt{n} \log n\right)$.
    \end{theorem}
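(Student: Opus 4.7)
The plan is to design a two-phase nested quantum algorithm, parametrised by a truncation length $\ell$ to be optimised at the end. In Phase~1, I will invoke the bounded-error quantum minimum finding (the earlier theorem in this paper) over the $n$ starting positions, using an inner comparator that compares two rotations only on their length-$\ell$ prefixes. The truncated comparator is implemented as a Grover-style search for the first position of mismatch at cost $O(\sqrt{\ell})$ per call, with constant error. Since the bounded-error minimum finding uses $O(\sqrt n)$ comparator calls overall, Phase~1 costs $O(\sqrt{n\ell})$ queries and returns a position $i^\ast$ whose length-$\ell$ rotation prefix is lexicographically minimal.

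In Phase~2, I will refine $i^\ast$ to a true LMSR by a second application of quantum minimum finding, restricted to the set $S$ of positions whose length-$\ell$ rotation prefix agrees with that of $i^\ast$, now using full length-$n$ rotation comparisons at cost $O(\sqrt n)$ each. A Fine--Wilf periodicity argument will establish that either $|S| = O(n/\ell)$, or else all rotations indexed by $S$ are literally identical as strings and any element of $S$ is already a correct answer. In either case, Phase~2 costs $O(\sqrt{n\cdot|S|}) = O(n/\sqrt{\ell})$ queries. Balancing $\sqrt{n\ell} \asymp n/\sqrt{\ell}$ by setting $\ell = \sqrt n$ gives the worst-case bound $O(n^{3/4})$.

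For the average-case claim, I will exploit the fact that on uniformly random inputs any two distinct rotations agree on at most $O(\log n)$ leading characters with high probability, so that each inner comparator actually terminates in $O(\sqrt{\log n})$ base queries. A single (non-nested) invocation of bounded-error quantum minimum finding then uses $O(\sqrt{n\log n})$ queries; the extra $\sqrt{\log n}$ factor in the stated bound $O(\sqrt n\,\log n)$ arises from invoking the paper's error-reduction lemma to drive each of the $O(\sqrt n)$ comparator calls below failure probability $O(1/\sqrt n)$, which contributes the additional $\sqrt{\log(1/\varepsilon)} = O(\sqrt{\log n})$ overhead.

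The main obstacle I foresee is the structural lemma controlling $|S|$: I will need a careful Fine--Wilf periodicity analysis to show that a large collection of positions sharing the same length-$\ell$ rotation prefix forces the underlying string to be so strongly periodic that either $|S|$ collapses to at most $O(n/\ell)$ genuinely distinct candidates, or the rotations indexed by $S$ all coincide and any representative suffices. Carefully implementing the quantum oracle that tests membership in $S$ without spending extra $\log$ factors, and threading it through the nested minimum-finding framework while preserving bounded-error guarantees at every level, is the delicate part of turning the above sketch into a complete proof.
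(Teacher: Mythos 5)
Your Phase~1 is essentially the paper's Line~5 of Algorithm~\ref{algo-main} and is fine, but your Phase~2 rests on a structural claim that is false. You assert that either $|S| = O(n/\ell)$ or all rotations indexed by $S$ coincide as strings. Take $s = (ab)^k c (ab)^k c$ with $a < b < c$ and $n = 4k+2$. Then $\mathit{SCR}(s) = (ab)^k c (ab)^k c$, and for any even $\ell \le 2k$ the prefix $p$ of $\mathit{SCR}(s)$ of length $\ell$ is $(ab)^{\ell/2}$, which occurs (cyclically) at about $n/2$ positions, so $|S| = \Theta(n) \gg n/\ell$; yet the rotations at positions $0$ and $2$, namely $(ab)^k c (ab)^k c$ and $(ab)^{k-1} c (ab)^k c\, ab$, are different strings. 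Neither branch of your dichotomy holds, and already by your own cost formula Phase~2 would run in $O(\sqrt{n\cdot |S|}) = O(n)$, giving no quantum speedup. (There is a second, independent problem you flagged yourself: $S$ is only given implicitly, and running D\"urr--H{\o}yer over $[n]$ with a comparator that treats non-$S$ positions as $+\infty$ uses $O(\sqrt n)$ comparator calls at unit cost $O(\sqrt n)$ each, i.e.\ $O(n)$ queries even when $|S|$ is small; the $O(\sqrt{n|S|})$ bound would require $S$ to be an explicit list.)

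What actually makes the paper's worst-case bound work is a different structural lemma, the \emph{exclusion rule} (Lemma~\ref{lemma-ex}): if the length-$B$ prefix $p$ of $\mathit{SCR}(s)$ occurs at two overlapping positions $i < j < i+B$, then $\mathit{LMSR}(s) \neq j$. This does \emph{not} bound $|S|$; instead it bounds the number of \emph{surviving} candidates. The paper partitions $[n]$ into $\Theta(n/B)$ blocks of length $L = \floor{B/4}$; any two occurrences of $p$ within a block overlap (since $L < B$), so each block contributes at most one viable candidate, its leftmost occurrence, which is located by a quantum search accelerated with a deterministic sample of $p$. An outer minimum finding then runs over only $O(n/B)$ block-candidates, with each full-length comparison costing $O(\sqrt n)$, for a total of $O(n/\sqrt B)$; balancing against the $O(\sqrt{nB})$ cost of Phase~1 at $B = \Theta(\sqrt n)$ gives $O(n^{3/4})$. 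In the example above this correctly discards all positions $2,4,\ldots$ in favor of position $0$.

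Your average-case sketch is close to the paper's Algorithm~\ref{algo-average} (compute the minimal and second-minimal length-$O(\log n)$ substrings, compare them), but you must also state the fallback: on the $O(1/n)$-probability event that $C(s) > \ceil{3\log_\alpha n}$ the truncated comparator cannot certify the answer, so the algorithm must detect the tie and invoke the worst-case routine to remain bounded-error on every input. The fallback contributes $\frac 1 n \cdot O(n^{3/4}) = o(\sqrt n \log n)$ and the average stays within the claimed bound.
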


 In the top-level design of this algorithm, we are required to find the minimal value of a function, which is given by a bounded-error quantum oracle. To resolve this issue, we develop an efficient error reduction for nested quantum algorithms (see Section \ref{sec:intro-error-reduction} for an outline). With this framework of nested quantum algorithms, we are able to solve problems in nested structures efficiently.
The high level illustrations of the algorithm for the worst and average cases are given in Section \ref{sec:intro-exclusion-rule} and Section \ref{sec:intro-string-sensitivity}, respectively.
 A detailed description of the algorithm is presented in Section \ref{sec:algo}.

 We assume access to a quantum-read/classical-write random access memory (QRAM) and define time complexity as the number of elementary two-qubit quantum gates, input queries and QRAM operations (see Section \ref{sec:qram} for more details).
 Our quantum algorithm uses only $O(\log^2 n)$ classical bits in QRAM and $O(\log n)$ ``actual'' qubits in the quantum computation. Thus, the time complexity of our quantum algorithms in this paper is just an $O\rbra{\log n}$ factor bigger than their query complexity (in both worst and average cases).

    In order to show a separation between classical and quantum algorithms for LMSR, we settle the classical and quantum lower bounds for LMSR in both the worst and average cases. Let $R(\operatorname{LMSR})$ and $R^\mathit{unif}(\operatorname{LMSR})$ be the worst-case and average-case (classical) randomized query complexities for LMSR, and let $Q(\operatorname{LMSR})$ and $Q^\mathit{unif}(\operatorname{LMSR})$ be their quantum counterparts. Then we have:

    \begin{theorem} [Classical and Quantum Lower Bounds for LMSR] \label{thm-main3}
    \begin{enumerate}
        \item For every bounded-error (classical) randomized algorithm for LMSR, it has worst-case query complexity $\Omega(n)$ and average-case query complexity $\Omega(n / \log n)$. That is, $R(\operatorname{LMSR}) = \Omega(n)$ and $R^{\mathit{unif}}(\operatorname{LMSR}) = \Omega(n / \log n)$.
        \item For every bounded-error quantum algorithm for LMSR, it has worst-case query complexity $\Omega\left(\sqrt{n}\right)$ and average-case query complexity $\Omega\left(\sqrt{n / \log n}\right)$. That is, $Q(\operatorname{LMSR}) = \Omega(\sqrt n)$ and $Q^{\mathit{unif}}(\operatorname{LMSR}) = \Omega\left(\sqrt{n / \log n}\right)$.
    \end{enumerate}
    \end{theorem}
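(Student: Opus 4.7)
The plan is to prove each of the four bounds by reducing a suitable unique-search problem to LMSR and invoking the standard worst-case/average-case lower bounds for search with a unique marked element, namely $\Omega(N)$ classical and $\Omega(\sqrt{N})$ quantum.

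For the worst case, I would use the family of inputs $s_k = 0^{k-1} \, 1 \, 0^{n-k}$ for $k = 1, \ldots, n$. The lex-minimal rotation of $s_k$ is $0^{n-1} 1$, starting at position $(k \bmod n) + 1$, so the output of any LMSR algorithm on $s_k$ determines $k$. Thus, an LMSR algorithm using $T$ queries locates the unique $1$ among $n$ positions using $T$ queries, giving $R(\mathit{LMSR}) = \Omega(n)$ and $Q(\mathit{LMSR}) = \Omega(\sqrt{n})$.

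For the average case, I would apply Yao's minimax principle with the uniform distribution $\mu$ on $\{0,1\}^n$. The key structural fact is that, with probability $1-o(1)$ over $s \sim \mu$, the longest cyclic run of $0$'s in $s$ has a specific length $\ell = \Theta(\log n)$ and is \emph{unique}; call this typical event $E$. Under $E$, $\mathit{LMSR}(s)$ is precisely the starting position of that run. I would then reduce size-$N$ unique search, for $N = \Theta(n / \log n)$, to LMSR under $\mu$ via a planted distribution $\nu$: partition $[n]$ into $N$ disjoint slots of length $\ell$, sample a uniform slot index $p \in [N]$, write a $0$-run of length $\ell$ into slot $p$ bordered by $1$'s to certify the run length, and sample every remaining position i.i.d.\ uniformly. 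Tail estimates on runs in random binary strings show that $\nu$ is $o(1)$-close in total variation to $\mu$ conditioned on $E$. Hence any average-case LMSR algorithm under $\mu$ with complexity $T$ solves unique search of size $N$ with uniformly random marked position using $T$ queries, up to $o(1)$ loss in success probability. The target bounds $R^{\mathit{unif}}(\mathit{LMSR}) = \Omega(n/\log n)$ and $Q^{\mathit{unif}}(\mathit{LMSR}) = \Omega(\sqrt{n/\log n})$ follow from the standard average-case unique-search lower bounds.

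The main obstacle is calibrating the planted run length $\ell$ and controlling the coupling: $\ell$ must be large enough that the planted run is the unique longest with high probability (so that LMSR really recovers $p$ under $\nu$), but not so large that conditioning $\mu$ on the typical event pushes $\mu \mid E$ far from $\nu$. This is an extremal computation on runs in random binary strings, and must be tuned to yield $o(1)$ total-variation distance. Once the coupling is in place, the transfer is routine: the classical lower bound $\Omega(N)$ for average-case unique search is a coupon-collector argument, and the quantum lower bound $\Omega(\sqrt{N})$ follows from the standard adversary/polynomial method applied to the uniform marked-position distribution.
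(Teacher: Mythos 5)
Your worst-case reduction is sound: on the family $s_k = 0^{k-1} 1\, 0^{n-k}$, the output of any LMSR algorithm determines the position of the unique $1$, so the standard bounded-error unique-search lower bounds immediately give $R(\mathit{LMSR}) = \Omega(n)$ and $Q(\mathit{LMSR}) = \Omega(\sqrt{n})$. This is a different route from the paper, which takes $x = 1^n$, shows $\mathit{bs}_x(\mathit{LMSR}\bmod 2) \geq \lfloor n/2 \rfloor$ by flipping single odd-indexed bits, and invokes the per-input Ambainis--de~Wolf block-sensitivity bound; both arguments are valid.

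The average-case argument, however, has a genuine gap that the final paragraph of your proposal understates. First, the typical event $E$ (the longest cyclic $0$-run is unique) does \emph{not} hold with probability $1 - o(1)$: by the usual Poisson approximation for extreme run lengths, the number of maximal $0$-runs attaining the extremal length $L_n \approx \log_2 n + O(1)$ is a Poisson variable of mean $\Theta(1)$ conditioned to be positive, so the probability of a tie at the top is bounded away from zero. Second, and more fatally, the coupling $\|\nu - (\mu \mid E)\|_{\mathrm{TV}} = o(1)$ cannot be achieved by the construction you describe. Under $\nu$ the positions outside the planted slot are i.i.d.\ uniform, and $\sim n$ i.i.d.\ positions already produce, with probability $\Omega(1)$, a $0$-run of length $\geq \ell$ whenever $\ell \approx \log_2 n$; thus $\nu$ itself fails event $E$ with constant probability, and no choice of $\ell$ removes this. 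Moreover, even on $E$, the longest-run length under $\mu$ fluctuates over an $O(1)$ window (a $\log_2 n$-periodic, non-converging distribution), while $\nu$ fixes it deterministically, which alone forces constant total-variation distance. Raising $\ell$ to make the planted run dominate destroys closeness to $\mu$; lowering it destroys uniqueness. Repairing this would require conditioning $\mu$ on the exact longest-run length and on a slot-aligned position, at which point the target distribution is no longer uniform and the appeal to off-the-shelf average-case unique-search lower bounds no longer applies. The paper sidesteps the entire coupling problem by arguing directly about block sensitivity: it proves $\mathit{bs}_x(\mathit{LMSR}\bmod 2) \geq \lfloor n/(4C(x)) - 1/4 \rfloor$ via an explicit block-flipping construction, shows $\Pr[C(x) \leq \lceil 3 \log_2 n\rceil] \geq 1 - 1/n$ by a first-moment bound, and then concludes with $R^{\mathit{unif}}(f) = \Omega(\mathbb{E}[\mathit{bs}_x(f)])$ and $Q^{\mathit{unif}}(f) = \Omega(\mathbb{E}[\sqrt{\mathit{bs}_x(f)}])$; no distributional approximation is ever needed. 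Your intuition that the relevant scale is $\Theta(\log n)$ local structure is correct and is essentially what the paper's string-sensitivity notion $C(x)$ captures, but the block-sensitivity framework is the tool that turns that intuition into a proof without the run-statistics calibration you would otherwise have to carry out.
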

    \begin{remark}
        It suffices to consider only bounded-error quantum algorithms for LMSR, as we can show that every exact (resp. zero-error) quantum algorithm for LMSR has worst-case query complexity $\Omega(n)$. This is achieved by reducing the search problem to LMSR (see Appendix \ref{appendix-a}), since the search problem is known to have worst-case query complexity $\Omega(n)$ for exact and zero-error quantum algorithms \cite{Bea01}.
    \end{remark}

     Theorem \ref{thm-main3} is proved in Section \ref{sec:lowerbounds}. Our main proof technique is to reduce a total Boolean function to LMSR and to find a lower bound of that Boolean function based on the notion of block sensitivity. The key observation is that the block sensitivity of that Boolean function is related to the string sensitivity of input (Lemma \ref{lemma-bs-lowerbound}, and see Section \ref{sec:intro-string-sensitivity} for more discussions).

      {\vskip 4pt}

    The results of Theorems \ref{thm-main2} and \ref{thm-main3} can be summarized as Table \ref{tab-bounds}.

    \begin{table}[t]
    \centering
    \begin{tabular}{|c|c|c|c|c|}
    \hline
    \multirow{2}{*}{} & \multicolumn{2}{c|}{Classical} & \multicolumn{2}{c|}{Quantum} \\ \cline{2-5}
                      & Lower bounds  & Upper bounds  & Lower bounds  & Upper bounds \\ \hline
    Worst-case        & $\Omega(n)$    & $O(n)$ \cite{Boo80, Shi81, Duv83}       & $\Omega(\sqrt{n})$        &  $O\left(n^{3/4}\right)$            \\ \hline
    Average-case      & $\Omega(n/\log n)$    & $O(n)$ \cite{Ili94, Bas05}             & $\Omega\left(\sqrt{n/\log n}\right)$        &   $O\left(\sqrt{n}\log n\right)$           \\ \hline
    \end{tabular}
    \caption{Classical (randomized) and quantum query complexities of LMSR. }
    \label{tab-bounds}
    \end{table}

    Note that
    \[
        \Omega\left(\sqrt {n / \log n}\right) \leq Q^\mathit{unif}(\operatorname{LMSR}) \leq O\left(\sqrt n \log n\right).
    \]
    Therefore, our quantum algorithm is asymptotically optimal in the average case up to a logarithmic factor. Moreover, a quantum separation from (classical) randomized computation in both the worst-case and average-case query complexities is achieved:
    \begin{enumerate}
      \item Worst case: $Q(\operatorname{LMSR}) = O\left(n^{3/4} \right)$ but $R(\operatorname{LMSR}) = \Omega(n)$; and
      \item Average case: $Q^\mathit{unif}(\operatorname{LMSR}) = O\left(\sqrt n \log n\right)$ but $R^\mathit{unif}(\operatorname{LMSR}) = \Omega(n / \log n)$.
    \end{enumerate}
    In other words, our quantum algorithm is faster than any classical randomized algorithms both in the worst case and the average case.

     {\vskip 4pt}

    As an application, we show that our algorithm can be used in identifying benzenoids \cite{Bas16} and minimizing disjoint-cycle automata \cite{Alm08} (see Section \ref{sec:app}). The quantum speedups for these problems are illustrated in Table \ref{tab-algo}.

    \begin{table}[!hbp]
    \centering
    \begin{tabular}{|c|c|c|}
    \hline
              & Classical & Quantum (this work) \\ \hline
    LMSR      & $O(n)$ \cite{Boo80, Shi81, Duv83}   & $O(n^{3/4})$  \\ \hline
    Canonical boundary-edges code & $O(n)$ \cite{Bas16}   & $O(n^{3/4})$  \\ \hline
    Disjoint-cycle automata minimization & $O(mn)$ \cite{Alm08}   & $\tilde O(m^{2/3}n^{3/4})$  \\ \hline
    \end{tabular}
    \caption{Classical and quantum query complexities of LMSR, canonical boundary-edges code and disjoint-cycle automata minimization. For disjoint-cycle automata minimization, $m$ indicates the number of disjoint cycles and $n$ indicates the length of each cycle. Here, $\tilde O(\cdot)$ suppresses logarithmic factors.}
    \label{tab-algo}
    \end{table}

    \paragraph{Recent Developments.}

    After the work described in this paper, the worst-case quantum query complexity of LMSR was further improved to $n^{1/2+o(1)}$ in \cite{AJ22} by refining the exclusion rule of LMSR proposed in this paper (see Lemma 4.8 of \cite{AJ22}), and later a quasi-polylogarithmic improvement was achieved in \cite{Wan22}.
    A quantum algorithm for the decision version of LMSR with worst-case query complexity $\tilde O\rbra{\sqrt{n}}$ was proposed in \cite{Chi22} under their quantum divide-and-conquer framework.

    As an application, the near-optimal quantum algorithm for LMSR \cite{AJ22} was then used as a subroutine in finding the longest common substring of two input strings \cite{JN23}.

    \subsection{Overview of the Technical Ideas}

    Our results presented in the above subsection are achieved by introducing the following three new ideas:

    \subsubsection{Optimal Error Reduction for Nested Quantum Minimum Finding} \label{sec:intro-error-reduction}
    Our main algorithm for LMSR is essentially a nest of quantum search and minimum finding.
    A major difficulty in its design is error reduction in nested quantum oracles, which has not been considered in the previous studies of nested quantum algorithms (e.g., nested quantum search analyzed by Cerf, Grover and Williams \cite{Cer00} and nested quantum walks introduced by Jeffery, Kothari and Magniez \cite{Jef13}).

A $d$-level nested classical algorithm needs $O(\log^{d-1} n)$ repetitions to ensure a constant error probability by majority voting. For a $d$-level quantum algorithm composed of quantum minimum finding, it is known that only a small factor $O(\log n)$ of repetitions is required \cite{Cle08}. We show that this factor can be even better; that is, only $O(1)$ of repetitions are required as if there were no errors in quantum oracles:
    \begin{itemize}\item We extend quantum minimum finding algorithm \cite{Dur96, Ahu99} to the situation where the input is given by a bounded-error oracle so that it has query complexity $O(\sqrt{n})$ (see Lemma \ref{lemma-algo-min-find}) rather than $O(\sqrt{n} \log n)$ straightforwardly by majority voting.
  \item We introduce a success probability amplification method for quantum minimum finding on bounded-error oracles, which requires $O\left(\sqrt{n \log(1/\varepsilon)}\right)$ query complexity to obtain the minimum with error probability $\leq \varepsilon$ (see Lemma \ref{lemma-algo-min-find-amp}). In contrast, a straightforward solution by $O(\log(1/\varepsilon))$ repetitions of the $O(\sqrt{n})$ algorithm (by Lemma \ref{lemma-algo-min-find}) has query complexity $O(\sqrt{n} \log (1/\varepsilon))$.
   \end{itemize}
    These ideas are inspired by quantum searching on bounded-error oracles \cite{Hoy03} and amplification of success of quantum search \cite{Buh99}.
The above two algorithms will be used in the main algorithm for LMSR as a subroutine.
  Both of them are optimal because their simpler version (OR function) has lower bound $\Omega(\sqrt n)$ \cite{Ben97, Boy98, Zal99} for quantum bounded-error algorithm and $\Omega(\sqrt{n \log(1/\varepsilon)})$ \cite{Buh99} for its error reduction. To be clearer, we compare the results about quantum searching and quantum minimum finding in the previous literature and ours in Table \ref{tab-error-reduction}.

\begin{table}[!hbp]
\centering
\begin{tabular}{|c|c|c|c|}
\hline
Algorithm type                   & Oracle type   & Search             & Minimum finding                                               \\ \hline
\multirow{2}{*}{Bounded-error}   & exact         & $O(\sqrt{n})$ \cite{Gro96} & $O(\sqrt{n})$ \cite{Dur96, Ahu99} \\ \cline{2-4}
                                 & bounded-error & $O(\sqrt{n})$ \cite{Hoy03} & $O(\sqrt{n})$ (this work)                                                \\ \hline
\multirow{2}{*}{Error reduction} & exact         & $O\left(\sqrt{n \log(1/\varepsilon)}\right)$ \cite{Buh99}  & $O\left(\sqrt{n \log(1/\varepsilon)}\right)$ (this work)                                                \\ \cline{2-4}
                                 & bounded-error & $O\left(\sqrt{n \log(1/\varepsilon)}\right)$ \cite{Hoy03}  & $O\left(\sqrt{n \log(1/\varepsilon)}\right)$ (this work)                                                \\ \hline
\end{tabular}
\caption{Quantum query complexities of bounded-error quantum algorithms and their error reductions for search and minimum finding. }
\label{tab-error-reduction}
\end{table}

 Based on the above results, we develop an $O\left(\sqrt {n \log^{3} n \log \log n}\right)$ quantum algorithm for deterministic sampling \cite{Vis90}, and furthermore obtain an $O\left(\sqrt{n \log m} + \sqrt {m \log^{3} m \log \log m}\right)$ quantum algorithm for pattern matching, which is better than the best known result \cite{Ram03} of
 $O(\sqrt n \log (n/m) \log m + \sqrt m \log^2 m).
 $
 We also develop an optimal $O\left(\sqrt{n^d}\right)$ quantum algorithm for evaluating $d$-level shallow MIN-MAX trees that matches the lower bound $\Omega\left(\sqrt{n^d}\right)$ \cite{Amb00, Bar04} for AND-OR trees, and therefore it is optimal. The best known previous quantum query complexity of MIN-MAX trees is $O\left(W_d(n) \log n\right)$ \cite{Cle08}, where $W_d(n)$ is the query complexity of $d$-level AND-OR trees and optimally $O\left(\sqrt{n^d}\right)$ as known from \cite{Buh98, Hoy03}. Our improvements on these problems are summarized in Table \ref{tab-nested}.
    \begin{table}[!hbp]
    \centering
    \begin{tabular}{|c|c|c|}
    \hline
                           & Previous & Improved \\ \hline
    Deterministic sampling & $O\left(\sqrt{n}\log^2 n\right)$ \cite{Ram03}  & $O\left(\sqrt{n\log^{3}n\log \log n}\right)$         \\ \hline
    Pattern matching       & $O\left(\sqrt{n}\log(n/m)\log m\right)$ \cite{Ram03}        & $O\left(\sqrt{n\log m}\right)$         \\ \hline
    $d$-level MIN-MAX tree & $O\left(\sqrt{n^d} \log n\right)$ \cite{Cle08}        & $O\left(\sqrt{n^d}\right)$         \\ \hline
    \end{tabular}
    \caption{Quantum query complexities improved by our nested quantum algorithm framework.}
    \label{tab-nested}
    \end{table}

    \subsubsection{Exclusion Rule of LMSR} \label{sec:intro-exclusion-rule}
    We find a useful property of LMSR, named \textit{exclusion rule} (Lemma \ref{lemma-ex}): for any two overlapped substrings which are prefixes of the canonical representation of a string, the LMSR of the string cannot be the latter one. This property enables us to reduce the worst-case query complexity by splitting the string into blocks of suitable sizes, and in each block the exclusion rule can apply so that there is at most one candidate for LMSR. This kind of trick has been used in parallel algorithms, e.g., Lemma 1.1 of \cite{Ili92} and the Ricochet Property of \cite{Vis90}. However, the exclusion rule of LMSR used here is not found in the literature (to the best of our knowledge).

    We outline our algorithm as follows:
\begin{enumerate}
  \item[1] Let $B = \floor{\sqrt{n}}$ and $L = \floor{B/4}$. We split $s$ into $\ceil{n/L}$ blocks of size $L$ (except the last block).
  \item[2] Find the prefix $p$ of $\operatorname{SCR}(s)$ of length $B$, where $\operatorname{SCR}(s)$ is the canonical representation of $s$.
  \item[3] In each block, find the \textit{leftmost} index that matches $p$ as the candidate. Only the leftmost index is required because of the exclusion rule of LMSR.
  \item[4] Find the lexicographically minimal one among all candidates in blocks. In case of a tie, the minimal candidate is required.
\end{enumerate}
    A formal description and the analysis of this algorithm is given in Section \ref{sec:algo-basic}. In order to find the leftmost index that matches $p$ (Step 3 of this algorithm) efficiently, we adopt the deterministic sampling \cite{Vis90} trick. That is, we preprocess a deterministic sample of $p$, with which whether an index matches $p$ can be checked within $O(\log \abs{p})$ rather than $O(\abs{p})$ comparisons. Especially, we allow $p$ to be periodic, and therefore extend the definition of deterministic samples for periodic strings (see Definition \ref{def-deterministic-sampling}) and propose a quantum algorithm for finding a deterministic sample of a string (either periodic or aperiodic) (see Algorithm \ref{algo-deterministic-sampling}).

    \subsubsection{String Sensitivity} \label{sec:intro-string-sensitivity}
    We also observe the property of LMSR that almost all strings are of low string sensitivity (Lemma \ref{lemma-prob}), which can be used to reduce the query complexity of our quantum algorithm significantly in the average case. Here, the string sensitivity of a string (see Definition \ref{def-sensitivity}) is a metric showing the difficulty to distinguish its substrings, and helpful to obtain lower bounds for LMSR (see Lemma \ref{lemma-bs-lowerbound}).

    We outline our improvements for better average-case query complexity as follows:
    \begin{enumerate}
      \item[1] Let $s_1$ and $s_2$ be the minimal and the second minimal substrings of $s$ of length $B = O(\log n)$, respectively.
      \item[2] If $s_1 < s_2$ lexicographically, then return the starting index of $s_1$; otherwise, run the basic quantum algorithm given in Section \ref{sec:intro-exclusion-rule}.
    \end{enumerate}

    Intuitively, in the average case, we only need to consider the first $O(\log n)$ characters in order to compare two rotations. The correctness is straightforward but the average-case query complexity needs some analysis. See Section \ref{sec:algo-improved} for a formal description and the analysis of the improvements.

    \subsection{Organization of This Paper}
    We recall some basic definitions about strings and quantum query algorithms, and formally define the LMSR problem in Section \ref{sec:pre}. An efficient error reduction for nested quantum algorithms is developed in Section \ref{sec:qminimum}. An improved quantum algorithm for pattern matching based on the new error reduction technique (given in Section \ref{sec:qminimum}) is proposed in Section \ref{sec:qmatching}. The quantum algorithm for LMSR is proposed in Section \ref{sec:algo}. The classical and quantum lower bounds for LMSR are given in Section \ref{sec:lowerbounds}. The applications are discussed in Section \ref{sec:app}.

\section{Preliminaries} \label{sec:pre}

    For convenience of the reader, in this section, we briefly review the lexicographically minimal string rotation (LMSR) problem,
  quantum query model and several notions of worst-case and average-case complexities used in the paper.

   \subsection{Lexicographically Minimal String Rotation} \label{sec:lmsr-definition}

 For any positive number $n$, let $[n] = \{0, 1, 2, \dots, n-1\}$.
    Let $\Sigma$ be a finite alphabet with a total order $<$. A string $s \in \Sigma^n$ of length $n$ is a function $s: [n] \to \Sigma$. The empty string is denoted $\epsilon$. We use $s[i]$ to denote the $i$-th character of $s$. In case of $i \notin \mathbb{Z} \setminus [n]$, we define $s[i] \equiv s[i \bmod n]$. If $l \leq r$, $s[l \dots r] = s[l] s[l+1] \dots s[r]$ stands for the substring consisting of the $l$-th to the $r$-th character of $s$, and if $l > r$, we define $s[l \dots r] = \epsilon$. A prefix of $s$ is a string of the form $s[0 \dots i]$ for $i \in [n] \cup \{-1\}$. The period of a string $s \in \Sigma^n$ is the minimal positive integer $d$ such that $s[i] = s[i+d]$ for every $0 \leq i < n-d$. String $s$ is called \textit{periodic} if its period $\leq n/2$, and $s$ is \textit{aperiodic} if it is not periodic.

    Let $s \in \Sigma^n$ and $t \in \Sigma^m$. The concatenation of $s$ and $t$ is string $st = s[0] s[1] \dots s[n-1] t[0] t[1] \dots t[m-1]$. We write $s = t$ if $n = m$ and $s[i] = t[i]$ for every $i \in [n]$. We say that $s$ is smaller than $t$ in the \textit{lexicographical order}, denoted $s < t$, if either $s$ is a prefix of $t$ but $s \neq t$, or there exists an index $0 \leq k < \min\{n, m\}$ such that $s[i] = t[i]$ for $i \in [k]$ and $s[k] < t[k]$. For convenience, we write $s \leq t$ if $s < t$ or $s = t$.

    \begin{definition} [Lexicographically Minimal String Rotation] \label{def:lmsr}
    For any string $s \in \Sigma^n$ of length $n$, we call $s^{(k)} = s[k \dots k+n-1]$ the rotation of $s$ by offset $k$. The \textit{lexicographically minimal string rotation} (LMSR) problem is to find an offset $k$ such that $s^{(k)}$ is the minimal string among $s^{(0)}, s^{(1)}, \dots, s^{(n-1)}$ in the lexicographical order. The minimal $s^{(k)}$ is called the string canonical representation (SCR) of $s$, denoted $\operatorname{SCR}(s)$; that is,
    \[
        \operatorname{SCR}(s) = \min\left\{s^{(0)}, s^{(1)}, \dots, s^{(n-1)}\right\}.
    \]
    In case of a tie; that is, there are multiple offsets such that each of their corresponding strings equals to $\operatorname{SCR}(s)$, then the minimal offset is desired, and the goal is to find
    \[
        \operatorname{LMSR}(s) = \min\left\{ k \in [n]: s^{(k)} = \operatorname{SCR}(s) \right\}.
    \]
    \end{definition}

    The LMSR problem has been well-studied in the literature \cite{Col80, Boo80, Shi81, Duv83, Jeu93, Cro94, Cro07}, and several linear time (classical) algorithms for LMSR are known, namely Booth's, Shiloach's and Duval's Algorithms:
    \begin{theorem} [\cite{Boo80, Shi81, Duv83}] \label{thm-classical}
        There is an $O(n)$ deterministic algorithm for LMSR.
    \end{theorem}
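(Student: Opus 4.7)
The plan is to prove the theorem by exhibiting Duval's algorithm and analyzing it, which leverages the theory of Lyndon words. First, I would define a \emph{Lyndon word} as a non-empty string that is strictly smaller (in the lexicographic order) than every one of its proper non-empty suffixes; equivalently, it is strictly smaller than any of its non-trivial rotations. The key structural fact I would invoke is the Chen--Fox--Lyndon theorem: every non-empty string $w$ admits a unique factorization $w = \ell_1 \ell_2 \cdots \ell_m$ with each $\ell_i$ a Lyndon word and $\ell_1 \geq \ell_2 \geq \cdots \geq \ell_m$.

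Next I would reduce LMSR to a Lyndon factorization problem on the doubled string $t = ss \in \Sigma^{2n}$. Since every rotation $s^{(k)}$ for $k \in [n]$ is exactly $t[k \dots k+n-1]$, finding $\mathit{LMSR}(s)$ amounts to finding the lexicographically smallest such length-$n$ substring of $t$, with ties broken by smallest starting index. The central lemma is then that $\mathit{LMSR}(s)$ equals the starting position of the (unique) Lyndon factor of $t$ whose range straddles position $n-1$; intuitively, that factor is the ``longest Lyndon continuation'' beginning at an index $<n$, and its minimality property forces it to be the prefix of $\mathit{SCR}(s)$. I would prove this by contrasting any competing offset $k'$ with the straddling factor and using that a Lyndon word is strictly less than each of its proper suffixes.

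I would then describe Duval's factorization algorithm, which scans $t$ left-to-right maintaining three pointers: $i$ marks the start of the currently forming Lyndon factor, $k$ the character being inspected, and $j$ a pointer trailing by one period. At each step the algorithm compares $t[j]$ with $t[k]$ and performs one of three constant-time actions (extend, restart the period, or emit one or more copies of a completed Lyndon factor and advance $i$). The scan can be halted as soon as $i \geq n$, so it reads only a prefix of $t$ of length at most $2n$. Throughout the scan one tracks the start of the Lyndon factor crossing position $n-1$, and outputs it as $\mathit{LMSR}(s)$.

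The main obstacle is the two correctness arguments, not the code. The first is the amortized linear-time analysis: $k$ is monotonically non-decreasing, and each decrease of $j$ (the ``restart'') is charged against an earlier equal-comparison, giving at most $O(n)$ total queries and comparisons. The second is the precise characterization of $\mathit{LMSR}(s)$ as the starting index of the Lyndon factor of $ss$ straddling position $n-1$, which splits into cases depending on whether $s$ is itself a power of a Lyndon word (in which case the factorization of $ss$ consists of repeated copies and $\mathit{LMSR}(s)=0$) or has a non-trivial factorization. Once these two ingredients are in place, the tie-breaking rule in Definition~\ref{def:lmsr} is satisfied automatically because Duval's algorithm always emits the \emph{leftmost} occurrence of any repeated Lyndon factor, and every remaining step is routine bookkeeping over $O(\log n)$-bit indices.
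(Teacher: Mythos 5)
The paper states this theorem as a citation to Booth \cite{Boo80}, Shiloach \cite{Shi81}, and Duval \cite{Duv83} and gives no in-paper proof, so there is nothing to compare line by line; your sketch is a reconstruction of Duval's route, which is one of the three cited approaches and a sensible choice. The Lyndon-word machinery (Chen--Fox--Lyndon, Duval's three-pointer scan, the amortized $O(n)$ bound, and halting once $i \geq n$) is the correct skeleton.

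However, the correctness lemma as stated is wrong, and the remark that the tie-breaking rule is ``satisfied automatically'' does not repair it. Take $s = baba$, so $n = 4$. Then $ss = babababa$ has Lyndon factorization $b \cdot ab \cdot ab \cdot ab \cdot a$, with factor starts at indices $0, 1, 3, 5, 7$. The unique factor whose range contains (``straddles'') position $n-1 = 3$ is the copy of $ab$ that starts at index $3$, but $\mathit{LMSR}(s) = 1$, not $3$: both $s^{(1)}$ and $s^{(3)}$ equal $abab$, and Definition~\ref{def:lmsr} asks for the smallest offset. Here $s = (ba)^2$ is a square, but $ba$ is not a Lyndon word, so your ``power of a Lyndon word'' case does not catch it; you fall into the ``non-trivial factorization'' branch and the lemma returns the wrong index. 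The true statement, and what Duval's scan is computing when you record the value of $i$ at the start of the phase that pushes $i$ past $n$, is that one must return the starting index of the maximal \emph{run} of equal consecutive Lyndon factors of $ss$ that reaches position $n$, not the start of the individual factor that happens to contain position $n-1$. Duval's algorithm emits an entire run $\ell^k$ in a single phase, so taking $i$ at the beginning of that phase (rather than the start of the final copy of $\ell$ within it) gives the correct answer and satisfies the tie-breaking rule for free. Once the lemma is restated at the level of runs (equivalently, of Duval's phases) and proved in that form, the remainder of your sketch is sound.
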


    \subsection{Quantum Query Algorithms}

    Our computational model is the quantum query model \cite{Amb04, Buh02}. The goal is to compute an $n$-variable function $f(x)=f(x_0, x_1, \dots, x_{n-1})$, where $x_0, x_1, \dots, x_{n-1}$ are input variables. For example, the LMSR problem can be viewed as function $f(x) = \operatorname{LMSR}(x_0,x_1,\dots x_{n-1})$, where $x_0x_1x_2\dots x_{n-1}$ denotes the string of element $x_0, x_1, \dots, x_{n-1}$. The input variables $x_i$ can be accessed by queries to a quantum oracle $O_x$ (which is a quantum unitary operator) defined by $O_x \ket{i, j} = \ket{i, j \oplus x_i}$, where $\oplus$ is the bitwise exclusive-OR operation. A quantum algorithm $A$ with $T$ queries is described by a sequence of quantum unitary operators
    \[
        A: U_0 \to O_x \to U_1 \to O_x \to \dots \to O_x \to U_T.
    \]
    The intermediate operators $U_0, U_1, \dots, U_T$ can be arbitrary quantum unitary operators that are determined independent of $O_x$.
    The computation is performed in a Hilbert space $\mathcal{H} = \mathcal{H}_o \otimes \mathcal{H}_w$, where $\mathcal{H}_o$ is the output space and $\mathcal{H}_w$ is the work space. The computation starts from basis state $\ket 0_o \ket 0_w$, and then we apply $U_0, O_x, U_1, O_x, \dots, O_x, U_T$ on it in that order. The result state is
    \[
        \ket\psi = U_T O_x \dots O_x U_1 O_x U_0 \ket 0_o \ket 0_w.
    \]
    Measuring the output space, the outcome is then defined as the output $A(x)$ of algorithm $A$ on input $x$. More precisely, $\Pr[A(x) = y] = \Abs{M_y \ket\psi}^2$, where $M_y = \ket y_o \bra y$. Furthermore, $A$ is said to be a bounded-error quantum algorithm that computes $f$, if $\Pr[A(x) = f(x)] \geq 2/3$ for every $x$.

    To deal with average-case complexity, following the setting used in \cite{Amb99}, we assume that after each $U_j$, a dedicated flag-qubit will be measured on the computational basis (and this measurement may change the quantum state). The measurement outcome indicates whether the algorithm is ready to halt and return its output. If the outcome is $1$, then we measure the output space with the outcome as the output, and then stop the algorithm; otherwise, the algorithm continues with the next query $O_x$ and $U_{j+1}$.
    Let $T_A(x)$ denote the expected number of queries that $A$ uses on input $x$. Note that $T_A(x)$ only depends on the algorithm $A$ and its given input $x$ (which is fixed rather than from some distribution).

    \subsubsection{Worst-Case and Average-Case Query Complexities} \label{sec:qcomplexity}

    Let $f: \{0, 1\}^n \to \{0, 1\}$ be a Boolean function.
    If $A$ is a (either randomized or quantum) algorithm and $y \in \{0, 1\}$, we use $\Pr[A(x) = y]$ to denote the probability that $A$ outputs $y$ on input $x$. 
    Let $\mathcal{R}(f)$ and $\mathcal{Q}(f)$ be the set of randomized and quantum bounded-error algorithms that compute $f$, respectively:
    \begin{align*}
        \mathcal{R}(f) & = \{ \text{randomized algorithm }A: \forall x \in \{0, 1\}^n,\ \Pr[A(x) = f(x)] \geq 2/3 \}, \\
        \mathcal{Q}(f) & = \{ \text{quantum algorithm }A: \forall x \in \{0, 1\}^n,\ \Pr[A(x) = f(x)] \geq 2/3 \}.
    \end{align*}
    Then the \textit{worst-case} query complexities of $f$ are:
    \begin{align*}
        R(f) & = \inf_{A \in \mathcal{R}(f)} \max_{x \in \{0,1\}^n} T_A(x), \\
        Q(f) & = \inf_{A \in \mathcal{Q}(f)} \max_{x \in \{0,1\}^n} T_A(x).
    \end{align*}
    Let $\mu: \{0, 1\}^n \to [0, 1]$ be a probability distribution. We usually use $\mathit{unif} \equiv 2^{-n}$ to denote the uniform distribution. The \textit{average-case} query complexity of an algorithm $A$ with respect to $\mu$ is
    \[
        T_A^\mu = \mathbb{E}_{x \sim \mu} [T_A(x)] = \sum_{x \in \{0, 1\}^n} \mu(x) T_A(x).
    \]
    Thus, the randomized and quantum average-case query complexities of $f$ with respect to $\mu$ are:
    \begin{align*}
        R^\mu(f) & = \inf_{A \in \mathcal{R}(f)} T_A^\mu, \\
        Q^\mu(f) & = \inf_{A \in \mathcal{Q}(f)} T_A^\mu.
    \end{align*}
    Clearly, $Q^\mu(f) \leq R^\mu(f)$ for all $f$ and $\mu$.

    \subsubsection{Time and Space Efficiency} \label{sec:qram}

    In order to talk about the ``time'' and ``space'' complexities of quantum algorithms, we assume access to a quantum-read/classical-write random access memory (QRAM), where it takes a single QRAM operation to either classically write a bit to the QRAM or make a quantum query to a bit stored in the QRAM. For simplicity, we assume the access to the QRAM is described by a quantum unitary operator $U_{\text{QRAM}}$ that swaps the accumulator and a register indexed by another register:
    \[
        U_\text{QRAM} \ket{i, j} \ket{r_0, r_1, \dots, r_i, \dots, r_{M-1}} = \ket{i, r_i} \ket{r_0, r_1, \dots, j, \dots, r_{M-1}},
    \]
    where $r_0, r_1, \dots, r_{M-1}$ are bit registers that are only accessible through this QRAM operator.

    Let $A$ be a quantum query algorithm, and $t_A(x)$ denote the expected number of two-qubit quantum gates and QRAM operators $U_\text{QRAM}$ composing intermediate operators, and the quantum input oracles $O_x$ that $A$ uses on input $x$. The space complexity of $A$ measures the number of (qu)bits used in $A$. The worst-case and average-case time complexities of a Boolean function $f$ are defined similarly to Section \ref{sec:qcomplexity} by replacing $T_A(x)$ with $t_A(x)$.

    \section{Optimal Error Reduction for Nested Quantum Algorithms} \label{sec:qminimum}

   Our quantum algorithm for LMSR (Theorem \ref{thm-main2}) is essentially a nested algorithm calling quantum search and quantum minimum finding. The error reduction is often crucial for nested quantum algorithms. Traditional probability amplification methods for randomized algorithms can obtain an $O\left(\log^d n\right)$ slowdown for $d$-level nested quantum algorithms by repeating the algorithm $O(\log n)$ times in each level.
    In this section, we introduce an efficient error reduction for nested quantum algorithms composed of quantum search and quantum minimum finding, which only costs a factor of $O(1)$.
This improvement is obtained by finding an $O\left(\sqrt{n}\right)$ quantum algorithm for minimum finding when the comparison oracle can have bounded errors (see Algorithm \ref{algo-min-find}). Moreover, we also show how to amplify the success probability of quantum minimum finding with both exact and bounded-error oracles. In particular, we obtain an $O\left(\sqrt{n \log {(1/\varepsilon)}}\right)$ quantum algorithm for minimum finding with success probability $\geq 1 - \varepsilon$ (see Algorithm \ref{algo-min-find-amp}).
    These two algorithms allow us to control the error produced by nested quantum oracles better than traditional (classical) methods.
    Both of them are optimal because their simpler version (OR function) has lower bound $\Omega(\sqrt n)$ \cite{Ben97, Boy98, Zal99} for quantum bounded-error algorithm and $\Omega(\sqrt{n \log(1/\varepsilon)})$ \cite{Buh99} for its error reduction.
    As an application, we develop a useful tool to find the first solution in the search problem.


    \subsection{Quantum Search}

 Let us start from an $O\left(\sqrt{n}\right)$ quantum algorithm to search on bounded-error inputs \cite{Hoy03}. The search problem is described by a function $f(x_0, x_1, \dots, x_{n-1})$ that finds an index $j \in [n]$ (if exists) such that $x_j = 1$, where $x_i \in \{0, 1\}$ for all $i \in [n]$. It was first shown by Grover \cite{Gro96} that the search problem can be solved by an $O\left(\sqrt n\right)$ quantum algorithm, which
 was found after the discovery of the $\Omega\rbra{\sqrt{n}}$ lower bound \cite{Ben97} (see also \cite{Boy98, Zal99}).

    \subsubsection{Quantum Search on Bounded-Error Oracles}
    A more robust approach for the search problem on bounded-error oracles was proposed by H{\o}yer, Mosca and de Wolf \cite{Hoy03}. Rather than an exact quantum oracle $U_x \ket{i, 0} = \ket{i, x_i}$, they consider a bounded-error one introducing extra workspace $\ket{0}_w$:
    \[
        U_{x} \ket{i, 0} \ket{0}_w = \sqrt{p_i} \ket{i, x_i} \ket{\psi_{i}}_w + \sqrt{1 - p_i} \ket{i, \bar x_i} \ket{\phi_{i}}_w,
    \]
    where $p_i \geq 2/3$ for every $i \in [n]$, $\bar u$ denotes the negation of $u$, and $\ket{\psi_{i}}_w$ and $\ket{\phi_{i}}_w$ are ignorable work qubits. This kind of bounded-error oracles is general in the sense that every bounded-error quantum algorithm and (classical) randomized algorithm can be described by it. A naive way to solve the search problem on bounded-error oracles is to repeat $k = O(\log n)$ times and choose the majority value among the $k$ outputs. This gives an $O\left(\sqrt n \log n\right)$ quantum algorithm. Surprisingly, it can be made better to $O\left(\sqrt n\right)$ as shown in the following:

    \begin{theorem} [Quantum Search on Bounded-Error Oracles, \cite{Hoy03}] \label{thm-search-bounded-error-oracle}
        There is an $O\left(\sqrt n\right)$ bounded-error quantum algorithm for the search problem on bounded-error oracles. Moreover, if there are $t \geq 1$ solutions, the algorithm finds a solution in expected $O\left(\sqrt{n/t}\right)$ queries (even if $t$ is unknown).
    \end{theorem}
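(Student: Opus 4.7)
My plan is to split the argument into two stages: first establishing the $O(\sqrt n)$ bound for the decision/search version on a bounded-error oracle, then upgrading to the output-sensitive $O(\sqrt{n/t})$ expected complexity via a Boyer-Brassard-H{\o}yer-Tapp style doubling scheme. The first move in stage one is to form the approximate phase oracle $\tilde S_f = U_x^\dagger (Z \otimes I_w) U_x$, where $Z$ flips the sign conditioned on the answer register holding $1$. Because $U_x$ is correct with probability $p_i \geq 2/3$, this operator is a ``noisy'' version of the ideal reflection about the marked elements, and the question is whether the standard Grover iterate $\tilde G = -\tilde S_0 \tilde S_f$ still rotates the uniform superposition toward a marked state in $\Theta(\sqrt{n/t})$ rounds. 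A naive triangle-inequality bound on $\|\tilde G - G\|$ is too weak — it is $\Theta(1)$ — and would force an $O(\log n)$-fold majority-vote preprocessing, ruining the complexity. Instead I would track the deviation $\tilde G^k \ket{\psi_0} - G^k \ket{\psi_0}$ by a hybrid argument in the spirit of Bennett-Bernstein-Brassard-Vazirani, exploiting the fact that the ``bad'' component on the work register is not coherently amplified but rather spreads incoherently into orthogonal work states and does not re-interfere with the two-dimensional Grover signal subspace.

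For stage two, I would wrap this robust Grover in the BBHT doubling scheme: run $\tilde G$ for a uniformly random number of iterations in $\{0, 1, \dots, r-1\}$ with $r = 1, 2, 4, \dots, \lceil \sqrt n \rceil$ over successive rounds, measure to obtain a candidate index $j$, and verify $x_j = 1$ by a constant number of invocations of $U_x$ followed by a majority vote (cheap because $j$ is a single fixed index). Stopping at the first confirmed solution, the standard BBHT analysis gives expected total query count $O(\sqrt{n/t})$, with the verification cost absorbed into the constant factor. This wrapper is insensitive to not knowing $t$ and also handles the $t = 0$ case by capping the total work at $O(\sqrt n)$ and declaring ``no solution'' on timeout.

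The main obstacle, and the entirety of the novelty relative to exact-oracle Grover, is the robustness statement in stage one: showing that the per-step perturbation $\tilde S_f - S_f$, which has constant operator norm in the worst case, only contributes $O(1/\sqrt n)$ per iteration to the relevant amplitude because of the effectively decoupled work register. Making this precise requires carefully separating the coherent two-dimensional Grover subspace from the incoherent noise subspace living on the ignorable work qubits, and arguing that amplitude leakage into the latter does not return to interfere with the former within the $O(\sqrt n)$ iterations we run. Once this hybrid/decoherence bound is in hand, the rest is routine bookkeeping and the BBHT wrapper proceeds unchanged.
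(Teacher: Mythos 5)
The paper does not actually prove this statement; it cites it from H{\o}yer, Mosca and de Wolf \cite{Hoy03} and uses it as a black box. So the relevant question is whether your from-scratch argument is sound, and it has a genuine gap at exactly the point you flag as the ``entirety of the novelty.''

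The construction $\tilde S_f = U_x^\dagger (Z\otimes I_w) U_x$ is an involution (it squares to the identity), so $\tilde G = -\tilde S_0\tilde S_f$ is a product of two reflections and therefore acts as a \emph{clean} rotation on the two-dimensional invariant subspace spanned by $\ket{u}=\ket{\mathrm{unif}}\ket{0,0}$ and $P_{-1}\ket{u}$, where $P_{-1}=U_x^\dagger P_{\mathrm{ans}=1}U_x$ projects onto the $(-1)$-eigenspace of $\tilde S_f$. There is no ``incoherent spreading'' to appeal to: the dynamics is a perfectly coherent rotation, and the hybrid-argument framing is the wrong lens (BBBV-style hybrids bound the effect of changing an oracle on a few inputs; they say nothing about a perturbation of constant operator norm per step). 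The actual trouble is that this rotation goes to the wrong place at the wrong speed. The rotation angle $\theta$ satisfies $\sin^2\theta = \frac{1}{n}\big(\sum_{i:x_i=1}p_i + \sum_{i:x_i=0}(1-p_i)\big)$. With $p_i\ge 2/3$ allowed, the second sum can be $\Theta(n)$ even when $t=1$, so $\theta$ can be a constant rather than $\Theta(\sqrt{t/n})$; and the state being rotated toward, $P_{-1}\ket{u}/\Abs{P_{-1}\ket{u}}$, is dominated by the false-positive component and assigns only $O(t/n)$ probability to actual solutions upon measurement. So the naive conjugated phase oracle does not merely ``perturb'' Grover by $O(1/\sqrt n)$ per step — it replaces the rotation by a different one that converges to a useless state. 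This is precisely why the theorem is nontrivial: if one tries to rescue the argument by first driving the oracle error down to $\delta$ so the false-positive contribution is $O(n\delta)$, controlling it over $\Theta(\sqrt n)$ iterations needs $\delta = o(1/n)$, i.e. $\Theta(\log n)$-fold majority voting, giving back the $O(\sqrt n\log n)$ bound you set out to beat.

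The H{\o}yer--Mosca--de Wolf proof avoids this by a structurally different use of amplitude amplification (roughly, it treats the bounded-error algorithm as the state-preparation operator $\mathcal{A}$ and works with a carefully tolerated approximate reflection, rather than conjugating a phase gate by $U_x$ and hoping the junk stays out of the way); it is not a perturbative/hybrid analysis of the raw Grover iterate. Your stage-two BBHT wrapper and the timeout for $t=0$ are fine and standard, but they inherit the stage-one gap. To use this theorem in the paper it suffices to cite \cite{Hoy03}, as the authors do; if you want to prove it, you need the HMW machinery, not a perturbed-Grover argument.
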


    For convenience, we use $\textbf{Search}(U_x)$ to denote the algorithm of Theorem \ref{thm-search-bounded-error-oracle} which, with probability $\geq 2/3$, returns an index $j \in [n]$ such that $x_j = 1$ or reports that no such $j$ exists (we require the algorithm to return $-1$ in this case).

    \subsubsection{Amplification of the Success of Quantum Search}

  Usually, we need to amplify the success probability of a quantum or (classical) randomized algorithm to make it sufficiently large. A common trick used in randomized algorithms is to repeat the bounded-error algorithm $O(\log (1/\varepsilon))$ times and choose the majority value among all outputs to ensure success probability $\geq 1 - \varepsilon$. Buhrman, Cleve, de Wolf and Zalka \cite{Buh99} showed that we can do better for quantum searching.

    \begin{theorem} [Amplification of the success of quantum search, \cite{Buh99}] \label{thm-amp-search}
        For every $\varepsilon > 0$, there is an $O\left(\sqrt{n \log (1/\varepsilon)}\right)$ bounded-error quantum algorithm for the search problem with success probability $\geq 1-\varepsilon$. Moreover, if there is a promise of $t \geq 1$ solutions, the algorithm finds a solution in $O\left(\sqrt n \left(\sqrt{t+\log(1/\varepsilon)} - \sqrt{t}\right)\right)$ queries.
    \end{theorem}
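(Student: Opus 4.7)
The plan is to combine two strategies and use whichever is cheaper for the given parameters; writing $\ell = \log(1/\varepsilon)$, these are (I) classical repetition of Grover's algorithm tuned for $t$ marked items, and (II) outer amplitude amplification on a Grover base routine tuned as though $t=1$. Strategy (I) uses $\Theta(\sqrt{n/t})$ Grover iterations for constant-probability success per run, with $\Theta(\ell)$ independent repetitions (each verified by a single oracle query on the returned candidate) driving the failure probability below $\varepsilon$, for a total query cost of $O(\ell\sqrt{n/t})$. Strategy (II) runs Grover with $\Theta(\sqrt n)$ iterations as a base unitary whose good amplitude is bounded below by a constant, then wraps it with $\Theta(\sqrt\ell)$ rounds of amplitude amplification using the verification oracle as the good/bad discriminator; since each amplification round costs $\Theta(\sqrt n)$ queries, the total is $O(\sqrt{n\ell})$.

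For the first claim (no promise on $t$), strategy (II) directly yields the target bound $O(\sqrt{n\log(1/\varepsilon)})$, since a single call to a BBHT-style adaptive Grover can serve as the base routine and the subsequent amplification is insensitive to the actual value of $t \ge 1$. For the promise case with $t\geq 1$, I would take the minimum of the two strategies. The unified expression $O\bigl(\sqrt n(\sqrt{t+\ell}-\sqrt t)\bigr)$ captures this minimum: via the identity $\sqrt{t+\ell}-\sqrt t = \ell/(\sqrt{t+\ell}+\sqrt t)$, one checks that this quantity equals $\Theta(\sqrt{n\ell})$ when $t\leq \ell$ (matching strategy (II)) and $\Theta(\ell\sqrt{n/t})$ when $t\geq \ell$ (matching strategy (I)).

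The main obstacle will be the correctness analysis of strategy (II): the base Grover already achieves good amplitude close to $1$, and the outer amplitude amplification must be calibrated to push the residual bad amplitude below $\sqrt\varepsilon$ without overshooting the rotation angle $\pi/2$. This requires a careful choice of the base Grover iteration count---leaving enough room in the angle $\arcsin(\text{good amplitude})$ for $\Theta(\sqrt\ell)$ amplification rotations---together with a trigonometric argument on the recurrence $\sin\bigl((2j+1)\theta_0\bigr)$ governing amplitude amplification, so that the final angle lands in the window $[\pi/2 - \sqrt\varepsilon,\,\pi/2]$. The detailed trigonometric bookkeeping is the technical heart of \cite{Buh99}; once it is in place, the remaining case analysis and the algebraic identity above finish the proof.
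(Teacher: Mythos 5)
The paper does not actually prove this theorem---it imports it verbatim from \cite{Buh99}---so the comparison here is purely against the mathematics, not against a proof in the paper.

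Your Strategy~(I) is correct, and the algebraic check that $\sqrt{n}\left(\sqrt{t+\ell}-\sqrt{t}\right)$ interpolates between $\sqrt{n\ell}$ (for $t\le\ell$) and $\ell\sqrt{n/t}$ (for $t\ge\ell$) is also correct. The problem is Strategy~(II), which is the regime where the bound is nontrivial, and the gap is not a matter of ``detailed trigonometric bookkeeping''---the proposed mechanism cannot be repaired into a proof.

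Amplitude amplification applied to a base algorithm with good amplitude $\sin\theta_0$ produces, after $k$ rounds, amplitude $\sin\bigl((2k+1)\theta_0\bigr)$; the best achievable failure probability is $\cos^2\bigl((2k+1)\theta_0\bigr)$, and since $(2k+1)\theta_0$ moves in steps of $2\theta_0$, the closest it can land to $\pi/2$ is within $\theta_0$. Hence the residual error after AA is $\Omega(\sin^2\theta_0)=\Omega(p_0)$, where $p_0$ is the base success probability. To push the error down to $\varepsilon$ one therefore needs a base with $p_0\le\varepsilon$, which forces $\Theta(1/\sqrt{\varepsilon})$ rounds of AA, not $\Theta\bigl(\sqrt{\log(1/\varepsilon)}\bigr)$. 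Your calibration---choosing $\theta_0\approx\pi/(4\sqrt{\ell})$ to make room for $\Theta(\sqrt{\ell})$ rotations---gives a landing precision of $\theta_0=\Theta(1/\sqrt{\ell})$ and hence error $\Theta(1/\ell)=\Theta\bigl(1/\log(1/\varepsilon)\bigr)$, which is exponentially worse than the required $\varepsilon$. There is also an internal inconsistency: you first describe the base Grover as having ``good amplitude bounded below by a constant,'' which would make $\theta_0=\Theta(1)$ and the AA output oscillate uncontrollably for any $k\ge 1$; the calibration you then describe in fact makes the base amplitude \emph{small}, $\Theta(1/\sqrt{\ell})$, which contradicts the premise of Strategy~(II). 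Moreover, if $t$ is unknown then the base angle $\theta_0$ itself is unknown, so the landing point $(2k+1)\theta_0$ cannot be controlled at all; if $t$ were known exactly, one could use exact (fractional-angle) Grover to get zero error in $O(\sqrt{n/t})$ queries with no AA, which would make the $\log(1/\varepsilon)$ dependence vacuous---so the theorem necessarily concerns unknown $t$, precisely the case where your AA wrapper fails. The actual construction in \cite{Buh99} is not ``Grover base plus outer amplitude amplification''; it achieves the $O\bigl(\sqrt{n\log(1/\varepsilon)}\bigr)$ cost by a genuinely different device, and the corresponding lower bound (also in \cite{Buh99}) shows the $\sqrt{n\log(1/\varepsilon)}$ dependence is tight, whereas your Strategy~(II) as stated would, if correct, have no obstruction to being pushed further---another sign that it cannot be right.
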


   Theorem \ref{thm-amp-search} also holds for bounded-error oracles. For convenience, we use $\textbf{Search}(U_x, \varepsilon)$ to denote the algorithm of Theorem \ref{thm-amp-search}, which succeeds with probability $\geq 1-\varepsilon$.
   Note that Theorem \ref{thm-amp-search} does not cover the case that there can be $t \geq 2$ solutions without promise. In this case, we can obtain an $O\left(\sqrt{n/t} \log(1/\varepsilon)\right)$ bounded-error quantum algorithm with error probability $\leq \varepsilon$ by straightforward majority voting.

    \subsection{Quantum Minimum Finding}

We now turn to consider the minimum-finding problem.
Given $x_0, x_1, \dots, x_{n-1}$, the problem is to find an index $j \in [n]$ such that $x_j$ is the minimal element. Let $\text{cmp}(i, j)$ be the comparator to determine whether $x_i < x_j$:
    \[
        \text{cmp}(i, j) = \begin{cases}
            1 & x_i < x_j,\\
            0 & \text{otherwise}.
        \end{cases}
    \]
    The comparison oracle $U_\text{cmp}$ simulating $\text{cmp}$ is defined by
    \[
        U_{\text{cmp}} \ket{i, j, k} = \ket{i, j, k \oplus \text{cmp}(i, j)}.
    \]
    We measure the query complexity by counting the number of queries to this oracle $U_\text{cmp}$. A quantum algorithm was proposed by D\"{u}rr and H{\o}yer \cite{Dur96} and Ahuja and Kapoor \cite{Ahu99}  for finding the minimum:

    \begin{theorem} [Minimum finding, \cite{Dur96, Ahu99}] \label{thm-find-min}
        There is an $O\left(\sqrt n\right)$ bounded-error quantum algorithm for the minimum-finding problem.
    \end{theorem}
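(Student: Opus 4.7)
The plan is to prove this via the Dürr--Høyer algorithm, which reduces minimum finding to a sequence of Grover searches. Initialize a candidate index $y \in [n]$ uniformly at random, and then iterate the following step: run $\textbf{Search}(U_y)$, where $U_y$ is the oracle that marks indices $i$ with $\mathit{cmp}(i,y)=1$ (i.e. $x_i < x_y$); this oracle is implemented with a single query to $U_\mathit{cmp}$ per invocation. If the search returns a valid index $j$ (verified by one extra query to $U_\mathit{cmp}$ to guard against the bounded error of the search), update $y \leftarrow j$; if it returns $-1$, terminate and output $y$. The total number of queries is to be bounded in expectation by $O(\sqrt n)$, and a final Markov-style truncation converts this into a bounded-error worst-case algorithm with the same asymptotic cost.

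For correctness, observe that the sequence of candidates $y$ has strictly decreasing values $x_y$, so the algorithm terminates after at most $n$ iterations. When it terminates, $\textbf{Search}$ has reported ``no solution''; by Theorem~\ref{thm-search-bounded-error-oracle} this occurs with probability $\geq 2/3$ only when there are truly no indices $i$ with $x_i < x_y$, i.e. $y$ is (an index of) the minimum. The constant error of each search call can be amplified by running verification and constant-many repetitions, which costs only an $O(1)$ factor.

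The heart of the argument is bounding the expected total number of queries. I would rank the elements so that $x_{(0)} < x_{(1)} < \dots < x_{(n-1)}$ (breaking ties by index) and let $r_t$ denote the rank of the candidate after the $t$-th update. The crucial structural observation is that, conditional on the current rank being $r$, the next candidate produced by the search is uniformly distributed over $\{0,1,\dots,r-1\}$: this holds because the initial choice is uniform and, by symmetry of Grover's algorithm combined with random restart/marking arguments, each of the $r$ smaller indices is equally likely to be returned. Combined with the ``unknown-$t$'' form of Theorem~\ref{thm-search-bounded-error-oracle}, the expected cost of the search from rank $r$ with $t$ solutions is $O(\sqrt{n/t})$, and summing over the walk $r \to r' \to \dots \to 0$ gives a telescoping-type bound dominated by $\sum_{r=1}^{n} \frac{1}{r}\sqrt{n/1} + \sum \dots = O(\sqrt n)$ after the standard Dürr--Høyer computation.

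The main obstacle is precisely this expectation calculation: one must combine (i) the uniformity-on-smaller-ranks property of the updated candidate, (ii) the $O(\sqrt{n/t})$ expected cost of $\textbf{Search}$ when the number $t$ of smaller elements is unknown, and (iii) a careful double summation over possible ranks visited during the descent, to produce the $O(\sqrt n)$ bound. Once the expectation is established, Markov's inequality on the running time, together with standard amplification, yields a bounded-error algorithm with worst-case query complexity $O(\sqrt n)$, as required.
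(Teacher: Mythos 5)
Your proposal reconstructs the standard D\"urr--H{\o}yer argument, which is exactly the proof the paper implicitly relies on: the statement is cited without proof, and the later Lemma~\ref{lemma-algo-min-find} explicitly invokes the same two facts you use (expected queries before reaching the minimum is $O(\sqrt n)$, and expected number of iterations is $\leq \ln n$). So your route matches the intended one: random initial threshold, repeated Grover search for a strictly smaller element, uniformity of the returned solution among the marked indices, the resulting $1/r$ probability of ever visiting rank $r$, and a final Markov truncation to turn expected cost into worst-case bounded error. All of that is the right skeleton.

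One correction is worth flagging, because it is exactly the point that makes the bound $O(\sqrt n)$ rather than $O(\sqrt n\log n)$. The displayed sum ``$\sum_{r=1}^{n}\tfrac{1}{r}\sqrt{n/1}+\cdots$'' is not the right summand: $\sum_r \tfrac 1r \sqrt n = \Theta(\sqrt n\log n)$, which would lose the theorem. The correct accounting, using your own earlier observations (visit probability $1/r$ for rank $r$, expected search cost $O(\sqrt{n/(r-1)})$ from rank $r$ since there are $r-1$ smaller elements), gives
\[
\sum_{r\ge 2}\frac{1}{r}\cdot O\!\left(\sqrt{\frac{n}{r-1}}\right)
  = O\!\left(\sqrt n\sum_{r\ge 1} r^{-3/2}\right) = O(\sqrt n),
\]
because $\sum r^{-3/2}$ converges. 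With that fixed, the rest of your argument (uniformity of the successor rank, hence the $1/r$ visiting probability; the ``unknown-$t$'' form of Theorem~\ref{thm-search-bounded-error-oracle}; Markov truncation plus constant-factor amplification) closes the proof, and it is the same approach as the cited source.
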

    
    We also note that a generalized minimum-finding was developed in \cite{Ape17}, which only needs to prepare a superposition over the search space (rather than make queries to individual elements of the search space).

    \subsubsection{Optimal Quantum Minimum Finding on Bounded-Error Oracles}

 For the purpose of this paper, we need to generalize the above algorithm to one with a bounded-error version of $U_\text{cmp}$.
 For simplicity, we abuse a little bit of notation and define:
    \[
        U_\text{cmp} \ket{i, j, 0} \ket{0}_w = \sqrt{p_{ij}} \ket{i, j, \text{cmp}(i, j)} \ket{\psi_{ij}}_w + \sqrt{1 - p_{ij}} \ket{i, j, \overline{\text{cmp}(i, j)}} \ket{\phi_{ij}}_w,
    \]
    where $p_{ij} \geq 2/3$ for all $i, j \in [n]$, and $\ket{\psi_{ij}}_w$ and $\ket{\phi_{ij}}_w$ are ignorable work qubits. Moreover, for every index $j \in [n]$, we can obtain a bounded-error oracle $U_\text{cmp}^j$:
    \begin{equation} \label{eq:def-U_cmp^j}
        U_\text{cmp}^j \ket{i, 0} \ket{0}_w = \sqrt{p_{ij}} \ket{i, \text{cmp}(i, j)} \ket{\psi_{ij}}_w + \sqrt{1 - p_{ij}} \ket{i, \overline{\text{cmp}(i, j)}} \ket{\phi_{ij}}_w
    \end{equation}
    with only one query to $U_\text{cmp}$. Then we can provide a quantum algorithm for minimum finding on bounded-error oracles as Algorithm \ref{algo-min-find}.

    \begin{algorithm}
        \caption{$\textbf{Minimum}(U_\text{cmp})$: An algorithm for minimum finding on bounded-error oracles.}
        \label{algo-min-find}
        \begin{algorithmic}[1]
        \Require A bounded-error oracle $U_x$ for $x_0, x_1, \dots, x_{n-1}$.
        \Ensure An index $j \in [n]$ such that $x_j \leq x_i$ for every $i \in [n]$ with probability $\geq 2/3$.

        \If {$n \leq 2$}
            \State \Return the answer by classical algorithms.
        \EndIf
        \State $m \gets \ceil{12 \ln n}$, $q \gets \ceil{36 \ln m}$.
        \State Choose $j \in [n]$ uniformly at random.
        \For {$t = 1 \to m$}
            \If {the total number of queries to $U_\text{cmp} > 30C\sqrt{n} + mq$}
                \State \textbf{break}.
            \EndIf
            \State $i \gets \textbf{Search}(U_\text{cmp}^j)$, where $U_\text{cmp}^j$ is defined by Eq. (\ref{eq:def-U_cmp^j}).

            \For {$k = 1 \to q$}
            \State $b_k \gets \text{the measurement outcome of the third register of } U_{\text{cmp}}\ket{i,j,0}\ket{0}_w$.
            \EndFor

            \State $b \gets \rbra{i \neq -1} \land \operatorname{maj}\rbra{b_1, \dots, b_q}$, where $\operatorname{maj}\rbra{b_1, \dots, b_q}$ returns the majority of $b_1, \dots, b_q$.
            \If {$b$}
                \State $j \gets i$.
            \EndIf
        \EndFor
        \State \Return $j$.

        \end{algorithmic}
    \end{algorithm}

    The constant $C > 0$ in Algorithm \ref{algo-min-find} is given so that $\textbf{Search}(U_x)$ in Theorem \ref{thm-search-bounded-error-oracle} takes at most $C \sqrt{n/\max\{t, 1\}}$ queries to $U_x$ if there are $t$ solutions.

    \begin{lemma} \label{lemma-algo-min-find}
        Algorithm \ref{algo-min-find} is a bounded-error quantum algorithm for minimum finding on bounded-error oracles in $O\left(\sqrt n \right)$ queries.
    \end{lemma}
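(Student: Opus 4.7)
The plan is to combine the Dürr--Høyer rank-based argument for minimum finding with careful error control from the verification step. Label the indices so that $x_{\pi(0)} \le x_{\pi(1)} \le \cdots \le x_{\pi(n-1)}$ (breaking ties by index), and let $R_t$ denote the rank of the candidate $j$ at the start of the $t$-th iteration. Since $j$ is initialized uniformly at random, $R_0$ is uniform on $[n]$. I would show three things: (a) accepted updates strictly decrease the rank with high probability; (b) the expected Search cost summed over iterations is $O(\sqrt n)$; and (c) the outer loop reaches rank $0$ within $m = \lceil 12 \ln n \rceil$ iterations with constant probability.

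For (a), each iteration verifies $i \neq -1 \wedge x_i < x_j$ by $q = \lceil 36 \ln m \rceil$ direct queries to $U_{\mathit{cmp}}$, each biased at least $2/3$; a Chernoff bound drives the per-iteration verification error below $1/m^2$, and a union bound over $m$ iterations caps the total verification-failure probability by $1/m = o(1)$. Conditioned on no verification failure, every accepted update strictly decreases $R_t$.

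For (b), when the current rank is $r$ there are exactly $r$ indices strictly smaller than $j_t$, so Theorem~\ref{thm-search-bounded-error-oracle} gives expected cost at most $C\sqrt{n/r}$ per call to $\mathbf{Search}(U_{\mathit{cmp}}^{j_t})$. Following Dürr--Høyer, over the randomness of the uniform initialization and of Search, the probability that any specific rank $r \ge 1$ ever appears in the sequence $(R_t)$ is $O(1/r)$, so
\[
    \mathbb{E}\!\left[\sum_t C\sqrt{n/R_t}\right] \;=\; \sum_{r=1}^{n} O(1/r)\cdot C\sqrt{n/r} \;=\; O(\sqrt n).
\]
Adding the $mq = O(\log n \log \log n) = o(\sqrt n)$ verification queries keeps the expected total at $O(\sqrt n)$. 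Markov's inequality then bounds the probability of hitting the hard cap $30C\sqrt n + mq$ by a small constant (say $1/12$) when $C$ is taken large enough, so truncation only introduces a controllable additive failure probability. For (c), the same rank walk has expected depth $O(\log n)$ (a harmonic sum), and a Chernoff-type tail bound yields depth $\le m$ with probability $\ge 5/6$; a final union bound over the events in (a), (b), and (c) gives overall success probability $\ge 2/3$.

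The main obstacle I anticipate is justifying the $O(1/r)$ rank-visit bound in the bounded-error regime: the classical Dürr--Høyer argument exploits that Search returns a uniformly random marked element, whereas here $\mathbf{Search}$ on a bounded-error oracle may return a non-uniform element or fail outright with constant probability. I would sidestep this by observing that the property we actually need is only that, conditioned on current rank $r$, a single outer iteration decreases the rank with constant probability while spending $O(\sqrt{n/r})$ expected queries. A coupling with an idealized chain whose visit probabilities are explicitly $\le 1/r$ then recovers the desired telescoping sum up to a constant factor, and the rest of the analysis is bookkeeping.
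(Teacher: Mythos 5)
Your overall decomposition mirrors the paper's proof closely: the query cap is handled by Markov's inequality, the verification step at Line~11 is analyzed by a Chernoff/majority-voting argument that gives per-iteration failure $\le 1/m^2$ and union-bounds to $1/m$, and the core of the argument is the D\"urr--H{\o}yer rank analysis with expected search cost $\sum_r O(1/r)\cdot\sqrt{n/r} = O(\sqrt n)$. Up to the point where you discuss your anticipated obstacle, you and the paper are in lockstep.

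The problem is precisely the ``sidestep'' you propose at the end, and it does not work. You claim that it suffices to know that, at current rank $r$, one outer iteration decreases the rank with constant probability while costing $O(\sqrt{n/r})$ queries, and that a coupling with an idealized chain recovers the $O(1/r)$ visit bound. This is false: ``decreases with constant probability'' says nothing about the \emph{amount} of decrease, so the actual chain need not stochastically dominate any chain whose visit probabilities are $O(1/r)$. Concretely, if the search subroutine adversarially always returned the element of rank $R_t-1$, every rank would be visited, the expected iteration count would be $\Theta(n)$, and the total search cost would be
\[
\sum_{r=1}^{n} \sqrt{n/r} = \Theta(n),
\]
not $O(\sqrt n)$, even though your stated hypothesis (constant probability of decrease, $O(\sqrt{n/r})$ cost) is satisfied. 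The D\"urr--H{\o}yer bound genuinely needs the stronger fact that, when $\textbf{Search}$ succeeds, it returns a \emph{uniformly random} marked element; this is what yields both the $O(1/r)$ visit probabilities (via the order-statistics symmetry argument) and the $O(\log n)$ expected iteration count. The paper does not avoid this fact --- it invokes it explicitly (``if there are many, any of them is obtained with equal probability''), treating it as a property of the H{\o}yer--Mosca--de~Wolf search and then only having to account for a $3/2$ multiplicative inflation from the $\le 1/3$ failure probability of each call (bad outputs being filtered by the verification step, so the chain is lazy but not adversarial). To repair your proof, you should not attempt to circumvent the uniformity; instead, argue (or cite) that the bounded-error search primitive, when it succeeds, outputs a uniformly random solution, and then run the D\"urr--H{\o}yer analysis on the resulting lazy chain exactly as the paper does.
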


    \begin{proof}
        The query complexity is trivially $O(\sqrt{n})$ due to the guard (Line 7) of Algorithm \ref{algo-min-find}.

        The correctness is proved as follows. Let $m = \ceil{12 \ln n}$ and $q = \ceil{36 \ln m}$. We assume that $n \geq 3$ and therefore $m \geq 12$. In each of the $m$ iterations, Line 11-14 of Algorithm \ref{algo-min-find} calls $U_\text{cmp}$ for $q$ times and $b$ gets the value $\rbra{i \neq -1} \land \rbra{x_i < x_j}$ with probability $\geq 1-1/m^2$ (This is a straightforward majority voting. For completeness, its analysis is provided in Appendix \ref{appendix-maj-voting}).
        Here, $i$ is an index such that $x_i < x_j$ (with high probability) and $i = -1$ if no such $i$ exists; thus $b = 1$ means that there exists smaller element $x_i$ than $x_j$.

        We only consider the case that the values of $b$ in all iterations are as desired. This case happens with probability $\geq (1 - 1/m^2)^m \geq 1 - 1/m \geq 11/12$. In each iteration, $i$ finds a candidate index with probability $\geq 2/3$ such that $x_i < x_j$ if exists (and if there are many, any of them is obtained with equal probability). 
        It is shown in \cite[Lemma 2]{Dur96} that: if $i$ finds a candidate index with certainty, then the expected number of queries before $j$ holds the index of the minimal is $\leq \frac 5 2 C\sqrt{n}$; moreover, the expected number of iterations is $\leq \ln n$. In our case, $i$ finds a candidate index in expected $3/2$ iterations. Therefore, the expected number of queries to $U_\text{cmp}$ is $\leq \frac {15} 4 C\sqrt{n}$ and that of iterations is $\leq \frac 3 2 \ln n$. When Algorithm \ref{algo-min-find} makes queries to the oracle $\geq 30C\sqrt{n}$ times (except those negligible queries in Line 11-14) or iterations $\geq m$ times (that is, more than $8$ times their expectations), the error probability is $\leq 1/8 + 1/8 = 1/4$ by Markov's inequality. Therefore, the overall success probability is $\geq \frac {11} {12} \cdot \frac 3 4 \geq 2/3$.
    \end{proof}

    \subsubsection{Amplifying the Success Probability of Quantum Minimum Finding}

    We can amplify the success probability for quantum minimum finding better than a naive method, as shown in
    Algorithm \ref{algo-min-find-amp}.
 
    \begin{algorithm}
        \caption{$\textbf{Minimum}(U_\text{cmp}, \varepsilon)$: Amplification of the success of minimum finding.}
        \label{algo-min-find-amp}
        \begin{algorithmic}[1]
        \Require A bounded-error oracle $U_\text{cmp}$ for $x_0, x_1, \dots, x_{n-1}$ and $0 < \varepsilon < 1/2$.
        \Ensure An index $j \in [n]$ such that $x_j \leq x_i$ for every $i \in [n]$ with probability $\geq 1 - \varepsilon$.

        \While {\textbf{true}}
            \State $j \gets \textbf{Minimum}(U_\text{cmp})$.
            \If {$\textbf{Search}(U_\text{cmp}^j, \varepsilon) \neq -1$}
                \State \textbf{break}.
            \EndIf
        \EndWhile
        \State \Return $j$.

        \end{algorithmic}
    \end{algorithm}

    \begin{lemma} \label{lemma-algo-min-find-amp}
        Algorithm \ref{algo-min-find-amp} runs in expected $O\left(\sqrt{n \log {(1/\varepsilon)}}\right)$ queries with error probability $\leq \varepsilon$.
    \end{lemma}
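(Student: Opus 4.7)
The plan is to view Algorithm \ref{algo-min-find-amp} as a geometric loop in which each iteration independently draws a fresh candidate $j$ from $\textbf{Minimum}(U_\mathit{cmp})$ and then runs the amplified search $\textbf{Search}(U_\mathit{cmp}^j,\varepsilon)$ of Theorem \ref{thm-amp-search}, breaking exactly when the search returns $-1$. I will (i) lower-bound the per-iteration break probability by a constant, (ii) upper-bound the per-iteration query cost by $O(\sqrt{n\log(1/\varepsilon)})$, and (iii) combine these via independence across iterations to obtain both claimed bounds.

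For correctness, the key observation is that the search instance seen at a given iteration depends on whether $j$ is the true minimum: if it is, there are zero solutions (so the correct output is $-1$), and if it is not, there is at least one solution. Lemma \ref{lemma-algo-min-find} gives $\Pr[j \text{ is the true minimum}] \geq 2/3$, and Theorem \ref{thm-amp-search} guarantees the amplified search errs with probability $\leq \varepsilon$ in either regime. Hence per iteration the probability of breaking is at least $(2/3)(1-\varepsilon) \geq 1/3$ (using $\varepsilon < 1/2$), while the probability of an \emph{incorrect} break (breaking with a non-minimum $j$) is at most $(1/3)\varepsilon$. Since the iterations use fresh randomness and are therefore independent, the total error probability is bounded by
\[
\Pr[\text{error}] \leq \sum_{t \geq 1} (2/3)^{t-1} \cdot (\varepsilon/3) = \varepsilon.
\]

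For query complexity, each iteration spends $O(\sqrt n)$ queries inside $\textbf{Minimum}(U_\mathit{cmp})$ by Lemma \ref{lemma-algo-min-find}, plus the cost of $\textbf{Search}(U_\mathit{cmp}^j,\varepsilon)$. By Theorem \ref{thm-amp-search}, the latter is $O(\sqrt n\,(\sqrt{t+\log(1/\varepsilon)}-\sqrt t))$ when there are $t$ solutions, and the elementary inequality $\sqrt{t+L}-\sqrt t \leq \sqrt L$ (maximized at $t=0$) yields the uniform bound $O(\sqrt{n\log(1/\varepsilon)})$ regardless of $t$. Thus each iteration costs $O(\sqrt{n\log(1/\varepsilon)})$. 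The number of iterations is stochastically dominated by a geometric random variable with success parameter $\geq 1/3$, so its expectation is at most $3$; multiplying gives expected total cost $O(\sqrt{n\log(1/\varepsilon)})$.

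The main subtlety will be the application of Theorem \ref{thm-amp-search} to $U_\mathit{cmp}^j$, which is itself a bounded-error oracle rather than exact; I will invoke the remark following that theorem which extends the amplified search guarantee to bounded-error oracles, so that the $t=0$ case (certifying the true minimum) does return $-1$ with probability $\geq 1-\varepsilon$, and the $t \geq 1$ case finds a witness with the same probability. Everything else reduces to the geometric-series computation above.
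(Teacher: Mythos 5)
Your proof is correct and takes essentially the same approach as the paper: lower-bound the per-iteration break probability by a constant (so the loop terminates in expected $O(1)$ iterations), bound the per-iteration cost by $O(\sqrt{n\log(1/\varepsilon)})$, and observe that the error probability is controlled by the guard $\textbf{Search}(U_\mathit{cmp}^j,\varepsilon)$. Your explicit geometric-series calculation of the error probability is slightly more detailed than the paper's one-line assertion, and you derive the uniform per-iteration search cost from the $t$-parametrized bound of Theorem~\ref{thm-amp-search} rather than quoting the unconditional bound directly, but these are presentational refinements of the same argument rather than a different route.
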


    \begin{proof}
        Algorithm \ref{algo-min-find-amp} terminates with a guard by $\textbf{Search}(U_\text{cmp}^j, \varepsilon)$. Here, $\textbf{Search}(U_\text{cmp}^j, \varepsilon) \neq -1$ means that with probability $\geq 1-\varepsilon$, there is no index $i$ such that $\text{cmp}\rbra{i, j} = 1$ and thus $j$ is the desired answer. Therefore, it has error probability $\leq \varepsilon$ as the guard.
        Let $p \geq 2/3$ be the probability that $j$ holds the index of the minimal element with a single query to $\textbf{Minimum}(U_\text{cmp})$ by Lemma \ref{lemma-algo-min-find}. Let $q$ be the probability that Algorithm \ref{algo-min-find-amp} breaks the ``while'' loop at each iteration. Then
        \[
            q = p (1 - \varepsilon) + (1-p) \varepsilon \geq p(1-\varepsilon) \geq 1/3,
        \]
        which is greater than a constant. So, the expected number of iterations is $O(1)$. In a single iteration, $\textbf{Minimum}(U_\text{cmp})$ takes $O\left(\sqrt{n}\right)$ queries (by Lemma \ref{lemma-algo-min-find}) and $\textbf{Search}(U_\text{cmp}^j, \varepsilon)$ takes $O\left(\sqrt{n \log(1/\varepsilon)}\right)$ queries (by Theorem \ref{thm-amp-search}).
Therefore, the expected query complexity of Algorithm \ref{algo-min-find-amp} is $O\rbra{1} \cdot \rbra*{ O\rbra{\sqrt{n}}+O\left(\sqrt{n \log {(1/\varepsilon)}}\right) } = O\left(\sqrt{n \log {(1/\varepsilon)}}\right)$.
    \end{proof}

    \subsection{An Application: Searching for the First Solution}

In this subsection, we develop a tool needed in our quantum algorithm for LMSR as an application of the above two subsections. It solves the problem of
finding the first solution (i.e. leftmost solution, or solution with the minimal index rather than an arbitrary solution) and thus can be seen as
a generalization of quantum searching, but the solution is based on quantum minimum finding.

    Formally, the query oracle $U_x$ of $x_0, x_1, \dots, x_{n-1}$ is given. The searching-first problem is to find the minimal index $j \in [n]$ such that $x_j = 1$ or report that no solution exists.
    This problem can be solved by minimum-finding with the comparator
    \[
        \text{cmp}(i, j) = \begin{cases}
            1 & x_i = 1 \land x_j = 0 \\
            1 & x_i = x_j \land i < j \\
            0 & \text{otherwise}
        \end{cases},
    \]
    which immediately yields an $O(\sqrt n)$ solution if the query oracle $U_x$ is exact.

    In the case that the query oracle $U_x$ is bounded-error, a bounded-error comparison oracle $U_\text{cmp}$ corresponding to $\text{cmp}$ can be implemented with a constant number of queries to $U_x$. Therefore, the results in Lemma \ref{lemma-algo-min-find} and Lemma \ref{lemma-algo-min-find-amp} also hold for the searching-first problem. For convenience in the following discussions, we write $\textbf{SearchFirst}(U_x)$ and $\textbf{SearchFirst}(U_x, \varepsilon)$ to denote the algorithm for the searching-first problem based on the two algorithm $\textbf{Minimum}(U_\text{cmp})$ and $\textbf{Minimum}(U_\text{cmp}, \varepsilon)$, respectively. Symmetrically, we have $\textbf{SearchLast}(U_x)$ and $\textbf{SearchLast}(U_x, \varepsilon)$ for searching the last solution.

Recently, an $O(\sqrt{n})$ quantum algorithm for searching the first was proposed in \cite{Kap20}. Their approach is quite different from our presented above. It is specifically designed for this particular problem, but our approach is based on a more general framework of quantum minimum finding.

   {\vskip 4pt}

    \textit{We believe that the techniques presented in this section can be applied in solving other problems. For this reason, we present a description of them in a general framework of nested quantum algorithm in Appendix \ref{appendix-nested}. }

    \section{Quantum Deterministic Sampling} \label{sec:qmatching}

    In this section, we prepare another tool to be used in our quantum algorithm for LMSR, namely an efficient quantum algorithm for deterministic sampling. It is based on our nested quantum algorithm composed of quantum search and quantum minimum finding given in the last section. Deterministic sampling is also a useful trick in parallel pattern matching \cite{Vis90}. We provide a simple quantum lexicographical comparator in Section \ref{sec:lex-cmp}, and a quantum algorithm for deterministic sampling in Section \ref{sec:deterministic-sample}. As an application, we obtain quantum algorithms for string periodicity and pattern matching in Section \ref{sec:deterministic-app}.

    \subsection{Lexicographical Comparator} \label{sec:lex-cmp}

    Suppose there are two strings $s, t \in \Sigma^n$ of length $n$ over a finite alphabet $\Sigma = [\alpha]$. Let $U_s$ and $U_t$ be their query oracles, respectively. That is,
    \[
        U_s \ket{i, j} = \ket{i, j \oplus s[i]},\qquad\qquad  U_t \ket{i, j} = \ket{i, j \oplus t[i]}.
    \]
    In order to compare the two strings in the lexicographical order, we need to find the leftmost index $k \in [n]$ that $s[k] \neq t[k]$. If no such $k$ exists, then $s = t$. 
    To this end, we construct the oracle
    \begin{equation}
    \label{eq:def-Ux-comparator}
        U_x \ket{i, j} = \begin{cases}
            \ket{i, j}, & s\sbra{i} = t\sbra{i}, \\
            \ket{i, j \oplus 1}, & s\sbra{i} \neq t\sbra{i},
        \end{cases}
    \end{equation}
    using $1$ query to each of $U_s$ and $U_t$.
    A straightforward algorithm for lexicographical comparison based on the searching-first problem is described in Algorithm \ref{algo-lexcmp}.

    \begin{algorithm}
        \caption{$\textbf{LexicographicalComparator}(U_s, U_t)$: Lexicographical Comparator.}
        \label{algo-lexcmp}
        \begin{algorithmic}[1]
        \Require Two query oracles $U_s$ and $U_t$ for two strings $s$ and $t$, respectively.
        \Ensure If $s < t$, then return $-1$; if $s > t$, then return $1$; if $s = t$, then return $0$, with probability $\geq 2/3$.

        \State $k \gets \textbf{SearchFirst}(U_x)$, where $U_x$ is defined by Eq. (\ref{eq:def-Ux-comparator}).
        \If {$k = -1$}
            \State \Return $0$.
        \EndIf
        \If {$s[k] < t[k]$}
            \State \Return $-1$.
        \Else
            \State \Return $1$.
        \EndIf

        \end{algorithmic}
    \end{algorithm}

    \begin{lemma}
        Algorithm \ref{algo-lexcmp} is an $O\left(\sqrt n\right)$ bounded-error quantum algorithm that compares two strings by their oracles in the lexicographical order.
    \end{lemma}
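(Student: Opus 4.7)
The plan is straightforward: reduce lexicographical comparison to the searching-first problem on a derived oracle, and then read off the sign of the answer by one extra query to each of $s$ and $t$.

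First, I would build from $U_s$ and $U_t$ a (exact) query oracle $U_x$ for the bit string $x \in \{0,1\}^n$ defined by $x_i = 1$ iff $s[i] \neq t[i]$. This takes only a constant number of queries to $U_s$ and $U_t$: conditioned on the index register $\ket{i}$, compute $s[i]$ and $t[i]$ into two ancilla registers using one query each, XOR them into the output bit through a chain of controlled-NOT gates (producing $1$ iff the characters disagree), then uncompute the ancillae with one more query each. This gives $U_x$ with $O(1)$ queries per invocation.

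Next, by the $O(\sqrt{n})$ bounded-error searching-first algorithm discussed at the end of Section \ref{sec:qminimum} (an application of Lemma \ref{lemma-algo-min-find}), a single call to $\textbf{SearchFirst}(U_x)$ returns, with success probability $\geq 2/3$, the minimal index $k$ with $x_k = 1$, or $-1$ if no such index exists. Conditioning on this success event: if $k = -1$, then $s[i] = t[i]$ for all $i$, so $s = t$ and the algorithm correctly returns $0$; otherwise $k$ is the leftmost position of disagreement, which by the definition of the lexicographical order is exactly the position whose characters determine the comparison, so comparing $s[k]$ and $t[k]$ via one query to each of $U_s$ and $U_t$ yields the correct sign.

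For the cost and the error, the single call to $\textbf{SearchFirst}$ dominates, using $O(\sqrt{n})$ invocations of $U_x$, each of which costs $O(1)$ queries to $U_s, U_t$; the final two index queries add $O(1)$. Thus the total query complexity is $O(\sqrt{n})$. The overall error is bounded by the error of $\textbf{SearchFirst}$, which is at most $1/3$, matching the bounded-error requirement. The main point to be careful about is that $U_x$ must be exact (not bounded-error) so that Lemma \ref{lemma-algo-min-find} applies cleanly, which is why the uncomputation step in the construction of $U_x$ is important; this is the only potential pitfall, and it is handled by standard reversible computation.
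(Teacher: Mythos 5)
Your proof is correct and takes essentially the same approach as the paper, which states this lemma without further argument immediately after presenting Algorithm \ref{algo-lexcmp}: build the derived oracle $U_x$ for the positions where $s$ and $t$ disagree, invoke $\textbf{SearchFirst}(U_x)$ to locate the leftmost disagreement in $O(\sqrt n)$ queries, and decide the sign by inspecting $s[k]$ and $t[k]$. Your explicit construction of $U_x$ by compute/XOR/uncompute, and your remark that $U_x$ is thereby exact so that Lemma \ref{lemma-algo-min-find} applies directly, are correct implementation details that the paper leaves implicit.
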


    \begin{remark}
        We usually need to compare two strings in the lexicographical order as a subroutine nested as low-level quantum oracles in string algorithms. However, the lexicographical comparator (Algorithm \ref{algo-lexcmp}) brings errors. Therefore, the error reduction trick for nested quantum oracles proposed in Section \ref{sec:qminimum} is required here.
    \end{remark}

    \subsection{Deterministic Sampling} \label{sec:deterministic-sample}

    Deterministic sampling \cite{Vis90} is a useful technique for pattern matching in pattern analysis. In this subsection, we provide a quantum solution to deterministic sampling in the form of a nested quantum algorithm.

    For our purpose, we extend the definition of deterministic samples to the periodic case. The following is a generalised definition of deterministic samples.

    \begin{definition} [Deterministic samples] \label{def-deterministic-sampling}
    Let $s \in \Sigma^n$ and $d$ be its period. A \textit{deterministic sample} of $s$ consists of an offset $0 \leq \delta < \floor{n/2}$ and a sequence of indices $i_0, i_1, \dots, i_{l-1}$ (called \textit{checkpoints}) such that
    \begin{enumerate}
      \item $i_k-\delta \in [n]$ for $k \in [l]$;
      \item For every $0 \leq j < \floor{n/2}$ with $j \not\equiv \delta \pmod d$, there exists $k \in [l]$ such that $i_k-j \in [n]$ and $s[i_k - j] \neq s[i_k - \delta]$. We denote $c_k = s[i_k - \delta]$ when the exact values of $i_k$ and $\delta$ are ignorable.
    \end{enumerate}
    \end{definition}
    If $s$ is aperiodic (i.e. $d > n/2$), the second condition degenerates into ``for every $0 \leq j < \floor{n/2}$ with $j \neq \delta$'', which is consistent with the definition for aperiodic strings in \cite{Vis90}.

    \paragraph{The use of deterministic sampling.}
        Suppose that $T$ is a text and $P$ is a pattern. 
        If we have a deterministic sample $(\delta; i_0, i_1, \dots, i_{l-1})$ of $P$ with a small $l$, then we can test if an index of $T$ can be a starting position that matches $P$ using only $l$ comparisons according to the deterministic sample. 
        It is worth noting that one can disqualify two possible starting positions (that pass the above testing of deterministic sampling) by the Ricochet property proposed in \cite{Vis90} (see Lemma \ref{lemma-ricochet}).

    The following theorem shows that the size of the deterministic sample can be every small.

    \begin{theorem} [Deterministic sampling \cite{Vis90}]
        Let $s \in \Sigma^n$. There is a deterministic sample $(\delta; i_0, i_1, \dots, i_{l-1})$ of $s$ with $l \leq \floor{\log_2 n}$.
    \end{theorem}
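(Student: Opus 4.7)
The plan is to construct the deterministic sample by an iterative halving procedure, first reducing to the strictly aperiodic case and then greedily choosing checkpoints that shrink the set of surviving offsets by a factor of two at each step.

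First I would reduce to the aperiodic case $d > n/2$. If $s$ has period $d \leq n/2$, let $u$ be the prefix of $s$ of length $d$. Then $u$ is strictly aperiodic (its only period as a length-$d$ string is $d$ itself), and $s[i] = u[i \bmod d]$ throughout $[n]$. A deterministic sample for $u$ with offset $\delta_u$ and checkpoints $i_0, \dots, i_{l-1}$ lifts to one for $s$ by taking $\delta = \delta_u$ and translating each checkpoint by a suitable multiple of $d$ when needed so that $i_k - \delta \in [n]$; the condition $j \not\equiv \delta \pmod d$ for $j \in [0, \floor{n/2})$ then reduces exactly to $j \bmod d \neq \delta_u$ inside $u$. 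Since $\floor{\log d} \leq \floor{\log n}$, the bound on $l$ is preserved.

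For the aperiodic case, I would build the sample greedily while maintaining a shrinking set $S_k$ of candidate offsets, starting from $S_0 = [0, \floor{n/2})$. At step $k$, as long as $|S_k| \geq 2$, I look for an index $i_k \in [n]$ and a character $c_k$ such that the refinement $S_{k+1} = \{ j \in S_k : i_k - j \in [n],\ s[i_k - j] = c_k \}$ is nonempty and satisfies $|S_{k+1}| \leq |S_k|/2$. After at most $\floor{\log n}$ iterations we reach $|S_l| = 1$; the unique remaining offset is $\delta$, and any $j \in [0, \floor{n/2}) \setminus \{\delta\}$ was discarded at some step $k$ precisely because $s[i_k - j] \neq c_k = s[i_k - \delta]$, which is exactly the deterministic-sample condition.

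The main obstacle is the halving key lemma: for any $S \subseteq [0, \floor{n/2})$ with $|S| \geq 2$, some index $i \in [n]$ induces a partition of $S$ by the values $s[i - j]$ in which no block exceeds $|S|/2$. A naive averaging argument is too weak, since aperiodicity only guarantees one separator index per pair of distinct offsets, giving on average only $|S|/n$ separators per index. The stronger balanced-halving property is forced by Fine and Wilf's periodicity lemma: if at every index a single character dominated more than half of $S$, then $s$ would agree with itself under many shifts on long stretches, producing a period at most $n/2$ and contradicting $d > n/2$. Formalizing this dichotomy --- either some index achieves a balanced split, or the string admits a short period --- is the delicate step that completes the proof.
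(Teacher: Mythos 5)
Your reduction to the aperiodic case rests on a false claim: the prefix $u = s[0 \dots d-1]$ of a string with minimal period $d \leq n/2$ need not be aperiodic. Take $s = ababaababa$ (length $10$); its minimal period is $d = 5$, yet $u = ababa$ has period $2 \leq \lfloor |u|/2 \rfloor$ and so is periodic by the paper's convention. Consequently the lifting also fails: a deterministic sample of $u$ only separates offsets up to congruence modulo the period of $u$ (here $2$), whereas condition 2 of Definition~\ref{def-deterministic-sampling} for $s$ requires separating every $j$ with $j \not\equiv \delta \pmod{5}$; for instance $j = 2$ and $\delta = 0$ satisfy $2 \equiv 0 \pmod 2$ but $2 \not\equiv 0 \pmod 5$, so the lifted sample misses them. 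The correct move is not to recurse into $u$ but to run the greedy construction directly on one representative per residue class modulo $d$: minimality of $d$ (via Fine--Wilf) guarantees a separating index for any two offsets $p, q$ with $p \not\equiv q \pmod d$, and the candidate count $d \leq \lfloor n/2 \rfloor$ still gives the $\lfloor \log n \rfloor$ bound.

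Your halving key lemma is also false in the strong ``no block exceeds $|S|/2$'' form, even for aperiodic $s$. For $s = aaaaaaab$ (length $8$, aperiodic) and $S = \{0,1,2,3\}$, every index $i$ with $i - j \in [n]$ for all $j \in S$ (namely $i \in \{3, \dots, 7\}$) produces a largest block of size $4$ or $3$, never $\leq |S|/2 = 2$. What the construction actually needs is weaker and \emph{is} provable: take $p = \min S$ and $q = \max S$; since $q - p < \lfloor n/2 \rfloor$, aperiodicity gives some $i \in [q, p+n)$ with $s[i-p] \neq s[i-q]$, and for every $j \in S$ (so $p \leq j \leq q$) one has $i - j \in [n]$, whence the values $s[i-j]$ partition $S$ into at least two nonempty blocks, the \emph{smallest} of which has size $\leq \lfloor |S|/2 \rfloor$. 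Choosing $c_k$ to be the character of that smallest block halves $S$ at each step; the Fine--Wilf input belongs to producing this separating index, not to a balanced-split dichotomy that does not hold. Since you explicitly defer this ``delicate step'' and its stated form is refuted above, the proof as proposed is incomplete on both counts.
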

    \begin{proof}

        For the case that $s$ is aperiodic, a simple procedure for constructing a valid deterministic sample was given in \cite{Vis90}. We describe it as follows.
        \begin{enumerate}
            \item Initially, let $A_0 = \sbra{\floor{n/2}}$ be the set of candidates of $\delta$, and $S_0 = \emptyset$ be the set of checkpoints. 
            \item At step $k \geq 0$, let $\delta_k^{\min} = \min A_k$ and $\delta_k^{\max} = \max A_k$. 
            \begin{enumerate}
                \item[2.1.] If $\delta_k^{\min} = \delta_k^{\max}$, then set $\delta = \delta_k^{\min}$ and return the current set $S_k$ of checkpoints. 
                \item[2.2.] Otherwise, there must be an index $i_k$ such that $s\sbra{i_k-\delta_k^{\min}} \neq s\sbra{i_k-\delta_k^{\max}}$ (or $s$ is periodic).
                Let $\sigma_k$ be the symbol with least occurrences (but at least once) among $s\sbra{i_k - \gamma}$ for $\gamma \in A_k$ (and choose any of them if there are multiple possible symbols). Let $S_{k+1} = S_k \cup \cbra{i_k}$ and $A_{k+1} = \set{\gamma \in A_k}{s\sbra{i_k-\gamma} = \sigma_k}$, then go to the next step for $k+1$.
            \end{enumerate}
        \end{enumerate}
        It can be seen that the above procedure always stops as the set $A_k$ halves after each step, i.e., $\abs{A_{k+1}} \leq \abs{A_k}/2$. 
        It can be verified that the returned $\delta$ and checkpoints together form a valid deterministic sample. 
        The procedure will have at most $\floor{\log_2 n}$ steps and each step will add one checkpoint, which implies that there exist a deterministic sample with at most $\floor{\log_2 n}$ checkpoints.

        For the case that $s$ is periodic, the above procedure will still work if we set the initial set of candidates to be $A_0 = \sbra{d}$. 
        Intuitively, since $s$ is periodic, most symbols are redundant and thus we only have to consider the first $d$ offsets. 
        After this modification, the analysis of the modified procedure is almost identical to the original one. 
        Here, we note that if $i_k$ does not exist during the execution of the modified procedure, then $s$ has a smaller period than $d$. 
        Finally, the procedure will return at most $\floor{\log_2 d} \leq \floor{\log_2 n}$ checkpoints. 
    \end{proof}

    Now let us consider how a quantum algorithm can do deterministic sampling. We start from the case where $s \in \Sigma^n$ is aperiodic. Let $U_s$ be the query oracle of $s$, that is, $U\ket{i, j} = \ket{i, j \oplus s[i]}$.
    Suppose at step $l$, the sequence of indices $i_0, i_1, \dots, i_{l-1}$ is known as well as $c_k = s[i_k-\delta]$ for $k \in [l]$ (we need not know $\delta$ explicitly). For $0 \leq j < \floor{n/2}$, let $x_j$ denote whether candidate $j$ agrees with $\delta$ at all checkpoints, that is,
    \begin{equation} \label{eq-deter-1}
        x_j = \begin{cases}
            0 & \exists k \in [l], j \leq i_k < j+n \land s[i_k - j] \neq c_k \\
            1 & \text{otherwise}
        \end{cases}.
    \end{equation}
    Based on the search problem, there is a bounded-error oracle $U_x$ for computing $x_j$ with $O\left(\sqrt l\right) = O\left(\sqrt{\log n}\right)$ queries to $U_s$.

    A quantum algorithm for deterministic sampling is described in Algorithm \ref{algo-deterministic-sampling}. Initially, all offsets $0 \leq \xi < \floor{n/2}$ are candidates of $\delta$. The idea of the algorithm is to repeatedly find two remaining candidates $p$ and $q$ that differ at an index $j$ (if there is only one remaining candidate, the algorithm has already found a deterministic sample and terminates), randomly choose a character $c$ being $s[j-p]$ or $s[j-q]$, and delete either $p$ or $q$ according to $c$. It is sufficient to select $p$ and $q$ to be the first and the last solution of $x_j$ defined in Eq. (\ref{eq-deter-1}).
    To explicitly describe how to find an index $j$ where $p$ and $q$ differ, we note that $q \leq j < p+n$ and $j$ must exist because of the aperiodicity of $s$, and let
    \[
        y_j = \begin{cases}
            1 & q \leq j < p+n \land s[j-p] \neq s[j-q] \\
            0 & \text{otherwise}
        \end{cases}.
    \]
    It is trivial that there is an exact oracle $U_y$ for computing $y_j$ with $O(1)$ queries to $U_s$.

For the case of periodic $s \in \Sigma^n$, the algorithm requires some careful modifications. We need a variable $Q$ to denote the upper bound of current available candidates. Initially, $Q = \floor{n/2} - 1$. We modify the definition of $x_j$ in Eq. (\ref{eq-deter-1}) to make sure $0 \leq j \leq Q$ by
    \begin{equation} \label{eq-deter-2}
        x_j = \begin{cases}
            0 & j > Q \lor \left( \exists k \in [l], j \leq i_k < j+n \land s[i_k - j] \neq c_k \right) \\
            1 & \text{otherwise}
        \end{cases}.
    \end{equation}
    For an aperiodic string $s$, there is at least one $y_j$ such that $y_j = 1$, so the algorithm will reach Line 17-25 during its execution with small probability $\leq 1/m$, where $m = O(\log n)$. But for periodic string $s$, let $d$ be its period, if $q-p$ is divisible by $d$, then $y_j = 0$ for all $y_j$ and thus the algorithm once reaches Line 17-25 with high probability $\geq 1-1/6m^2$. In this case, there does not exist $q \leq j < p+n$ such that $y_j = 1$. We set $Q = q-1$ to eliminate all candidates $\geq q$. In fact, even for a periodic string $s$, the algorithm is intended to reach Line 17-25 only once (with high probability). If the algorithm reaches Line 17-25 the second (or more) time, it is clear that current $(\delta; i_0, i_1, \dots, i_{l-1})$ is a deterministic sample of $s$ (with high probability $\geq 1-1/m$), and therefore consequent computation does not influence the correctness and can be ignored.

    \begin{algorithm}
        \caption{$\textbf{DeterministicSampling}(U_s)$: Deterministic Sampling.}
        \label{algo-deterministic-sampling}
        \begin{algorithmic}[1]
        \Require The query oracles $U_s$ for string $s \in \Sigma^n$.
        \Ensure A deterministic sample $(\delta; i_0, i_1, \dots, i_{l-1})$.

        \If {$n = 1$}
            \State \Return $(0; 0)$.
        \EndIf
        \State $m \gets \ceil{8 \log_2 n}$.
        \State $\varepsilon \gets 1/6m^2$.
        \State $l \gets 0$.
        \State $Q \gets \floor{n/2}-1$.
        \For {$t = 1 \to m$}
            \State Let $U_x$ be the bounded-error oracle for computing $x_j$ defined in Eq. (\ref{eq-deter-2}).
            \State $p \gets \textbf{SearchFirst}(U_x, \varepsilon)$.
            \State $q \gets \textbf{SearchLast}(U_x, \varepsilon)$.
            \If {$p = q$}
                \State \textbf{break}.
            \EndIf
            \State $i_l \gets \textbf{Search}(U_y, \varepsilon)$.
            \If {$i_l = -1$}
                \State $Q \gets q - 1$.
                \State $q \gets \textbf{SearchLast}(U_x, \varepsilon)$.
                \If {$p = q$}
                    \State \textbf{break}.
                \EndIf
                \State $i_l \gets \textbf{Search}(U_y, \varepsilon)$.
                \If {$i_l = -1$}
                    \State \textbf{break}.
                \EndIf
            \EndIf
            \State $c_l \gets s[i_l - p]$ with probability $1/2$ and $s[i_l - q]$ with probability $1/2$.
            \State $l \gets l + 1$.
        \EndFor

        \State Let $U_x$ be the bounded-error oracle for computing $x_j$ defined in Eq. (\ref{eq-deter-1}).
        \State $\delta \gets \textbf{SearchFirst}(U_x, \varepsilon)$.
        \State \Return $(\delta; i_0, i_1, \dots, i_{l-1})$.

        \end{algorithmic}
    \end{algorithm}

    \begin{lemma} \label{lemma-deterministic-sampling}
        Algorithm \ref{algo-deterministic-sampling} is an $O\left(\sqrt{n \log^3 n \log \log n}\right)$ bounded-error quantum algorithm for deterministic sampling.
    \end{lemma}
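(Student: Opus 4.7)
The plan is to prove Lemma~\ref{lemma-deterministic-sampling} by separately bounding the query complexity and establishing correctness, treating the two nesting levels (outer \textbf{SearchFirst}/\textbf{SearchLast} over offsets, inner oracle $U_x$ defined by an OR over checkpoints) independently.

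For the query complexity, first I would bound the per-query cost of the bounded-error oracle $U_x$. Since $x_j$ defined in Eq.~(\ref{eq-deter-1}) or (\ref{eq-deter-2}) is an OR over at most $l \leq m = O(\log n)$ checkpoints, a bounded-error implementation of $U_x$ can be built by a single call to quantum search on those checkpoints (Theorem~\ref{thm-search-bounded-error-oracle}), costing $O(\sqrt{l}) = O(\sqrt{\log n})$ queries to $U_s$. Each iteration of the main loop performs $O(1)$ calls to \textbf{SearchFirst} or \textbf{SearchLast} on $U_x$ with error parameter $\varepsilon = 1/6m^2$; by Lemma~\ref{lemma-algo-min-find-amp} each such call uses $O(\sqrt{n \log(1/\varepsilon)}) = O(\sqrt{n \log \log n})$ queries to $U_x$, plus $O(1)$ queries to $U_y$ which cost $O(1)$ queries to $U_s$. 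Thus one iteration costs $O(\sqrt{n \log n \log \log n})$ queries to $U_s$, and summing over $m = O(\log n)$ iterations plus the final \textbf{SearchFirst} gives $O(\sqrt{n \log^3 n \log \log n})$, as desired.

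For correctness, the first step is a union bound: there are at most $O(m)$ bounded-error subroutine calls, each with failure probability $\varepsilon = 1/6m^2$, so all subroutines return the correct answer with probability $\geq 1 - O(1/m)$. Condition on this event. I would then establish the invariant that after iteration $t$, the pair $(c_0,\dots,c_{l-1}; i_0,\dots,i_{l-1})$ agrees with the true offset $\delta$ at every retained checkpoint: this follows because either $c_l = s[i_l-p]$ or $c_l = s[i_l-q]$ is chosen uniformly at random, and so whichever of $p,q$ agrees with $\delta$ at $i_l$ survives with probability $1/2$, but \emph{the retained} $c_l$ is always consistent with at least one live candidate (so $\delta$ is never eliminated prematurely when we condition on the right random choice; after $O(\log n)$ random choices the target is singled out by Vishkin's halving argument~\cite{Vis90} applied to the aperiodic prefix of length equal to the period). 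A final \textbf{SearchFirst}$(U_x, \varepsilon)$ on the updated $U_x$ then recovers $\delta$ as the unique remaining solution, completing the deterministic sample.

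The main obstacle is the periodic case, and I would handle it by a separate analysis using the variable $Q$. The key observation is that if $s$ has period $d$ and two live candidates $p < q$ satisfy $q - p \equiv 0 \pmod{d}$, then $s[j-p] = s[j-q]$ for all $q \leq j < p+n$, so $U_y$ has no solution and \textbf{Search}$(U_y,\varepsilon)$ returns $-1$ at Line~15. The algorithm then sets $Q = q-1$, effectively restricting attention to the aperiodic prefix $s[0\dots d-1]$, and resumes; from this point on the analysis reduces to the aperiodic case. The subtle point is to argue that Lines~17--25 are entered at most once with high probability, so that the extra failure probabilities from this branch also fit into the overall union bound; this follows because once $Q$ is correctly updated to $d-1$, all subsequent searches operate on the aperiodic string $s[0\dots d-1]$, for which the loop behaves exactly as in Vishkin's original analysis and a valid deterministic sample is produced in at most $\log d \leq \log n$ further iterations, well within the budget $m$.
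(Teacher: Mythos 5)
Your complexity accounting is correct and mirrors the paper's exactly: $O(\sqrt{l}) = O(\sqrt{\log n})$ $U_s$-queries per $U_x$-query, $O\bigl(\sqrt{n\log(1/\varepsilon)}\bigr) = O(\sqrt{n\log\log n})$ $U_x$-queries per call to \textbf{SearchFirst}/\textbf{SearchLast}, $O(1)$ such calls per iteration, $m = O(\log n)$ iterations, giving $O(\sqrt{n\log^3 n\log\log n})$. Your union bound over the $O(m)$ subroutine calls at failure level $\varepsilon = 1/6m^2$ likewise matches the paper's $(1-\varepsilon)^{6m} \geq 1-1/m$ step. So in outline you and the paper are doing the same thing.

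However, your correctness argument has two inaccuracies that leave the proof gappy. First, the ``invariant'' you state is not the right one: there is no fixed ``true offset $\delta$'' to be kept alive during the loop, because $\delta$ is only determined by the algorithm's random choices at the end (Line 28). The retained constraints are, by construction, always consistent with at least one live candidate, so your invariant holds vacuously and does not deliver the $O(\log n)$ iteration bound. What the paper actually uses (and what Vishkin's halving argument gives) is that each random choice of $c_l$ cuts the \emph{number} of surviving candidates by at least half with probability $1/2$; combined with a Markov argument against the iteration budget $m \geq 8\log_2 n$, this bounds the probability of not converging by $1/4$. Second, your periodic-case handling asserts that $Q$ is set to $d-1$ and that all subsequent searches then operate on the aperiodic prefix $s[0\dots d-1]$. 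The code actually sets $Q \gets q-1$, which is in general much larger than $d-1$, and the surviving candidates in $[0, Q]$ can still lie in several residue classes mod $d$. Consequently Lines 17--25 can be reached a second time; the paper handles this by observing that the second hit means all surviving candidates lie in a single residue class mod $d$ and the sample is already valid, so the extra hit is terminal and harmless. Your claim that the branch is entered at most once, and your claimed ``reduction to the aperiodic case,'' are therefore not justified as stated, and the periodic case remains unproved in your write-up.
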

    \begin{proof}
        Assume $n \geq 2$ and let $m = \ceil{8 \log_2 n}$ and $\varepsilon = 1/6m^2$. There are $m$ iterations in Algorithm \ref{algo-deterministic-sampling}. In each iteration, there are less than $6$ calls to $\textbf{Search}$, $\textbf{SearchFirst}$ or $\textbf{SearchLast}$, which may bring errors. It is clear that each call to $\textbf{Search}$, $\textbf{SearchFirst}$ or $\textbf{SearchLast}$ has error probability $\leq \varepsilon$. Therefore, Algorithm \ref{algo-deterministic-sampling} runs with no errors from $\textbf{Search}$, $\textbf{SearchFirst}$ or $\textbf{SearchLast}$ with probability $\geq (1-\varepsilon)^{6m} \geq 1-1/m$.

        Now suppose Algorithm \ref{algo-deterministic-sampling} runs with no errors from $\textbf{Search}$, $\textbf{SearchFirst}$ or $\textbf{SearchLast}$. To prove the correctness of Algorithm \ref{algo-deterministic-sampling}, we consider the following two cases:
        \begin{enumerate}
          \item \textbf{Case 1.} $s$ is aperiodic. In this case, Algorithm \ref{algo-deterministic-sampling} will never reach Line 17-25. In each iteration, the leftmost and the rightmost remaining candidates $p$ and $q$ are found. If $p = q$, then only one candidate remains, and thus a deterministic sample is found. If $p \neq q$, then there exists an index $q \leq j < p+n$ such that $s[j-p]$ and $s[j-q]$ differ. We set $i_l = j$ and set $c_l$ randomly from $s[j-p]$ and $s[j-q]$ with equal probability. Then with probability $1/2$, half of the remaining candidates are eliminated. In other words, it is expected to find a deterministic sample in $2 \log_2 n$ iterations. The iteration limit to $m \geq 8 \log_2 n$ will make error probability $\leq 1/4$. That is, a deterministic sample is found with probability $\geq 3/4$.
          \item \textbf{Case 2.} $s$ is periodic and the period of $s$ is $d \leq n/2$. In each iteration, the same argument for aperiodic $s$ holds if Algorithm \ref{algo-deterministic-sampling} does not reach Line 17-25. If Line 17-25 is reached for the first time, it means candidates between $q + 1$ and $\floor{n/2}-1$ are eliminated and $q - p$ is divisible by $d$. If Line 17-25 is reached for the second time, all candidates $p \not\equiv q \pmod d$ are eliminated, and therefore a deterministic sample is found.
        \end{enumerate}
        Combining the above two cases, a deterministic sample is found with probability $\geq 3/4 (1 - 1/m) \geq 2/3$.

        On the other hand, we note that a single call to $\textbf{SearchFirst}$ and $\textbf{SearchLast}$ in Algorithm \ref{algo-deterministic-sampling} has $O\left(\sqrt{n \log (1/\varepsilon)}\right) = O\left(\sqrt{n \log \log n}\right)$ queries to $U_x$ (by Lemma \ref{lemma-algo-min-find-amp}), and $\textbf{Seach}$ has query complexity $O\left(\sqrt{n \log (1/\varepsilon)}\right) = O\left(\sqrt{n \log \log n}\right)$ (by Theorem \ref{thm-amp-search}). Hence, a single iteration has query complexity $O\left(\sqrt{nl\log \log n}\right) = O\left(\sqrt{n \log n \log \log n}\right)$, and the total query complexity ($m$ iterations) is $O\left(m \sqrt{n \log n \log \log n}\right) = O\left(\sqrt{n \log^3 n \log \log n}\right)$.
    \end{proof}

  Algorithm \ref{algo-deterministic-sampling} is a $2$-level nested quantum algorithm (see Appendix \ref{appendix-deterministic-sampling} for a more detailed discussion), and is a better solution for deterministic sampling in $O\left(\sqrt{n \log^3 n \log \log n}\right)$ queries than the known $O\left(\sqrt n \log^2 n\right)$ solution in \cite{Ram03}.

\begin{remark} \label{remark:time-eff}
  In order to make our quantum algorithm for deterministic sampling time and space efficient, we need to store and modify the current deterministic sample $(\delta; i_0, i_1, \dots, i_{l-1})$ during the execution in the QRAM, which needs $O(l \log n) = O(\log^2 n)$ bits of memory. Moreover, only $O(\log n)$ qubits are needed in the computation (for search and minimum finding). In this way, the time complexity of the quantum algorithm is $O\rbra*{\sqrt{n \log^5 n \log \log n}}$, which is just an $O\rbra{\log n}$ factor bigger than its query complexity.
\end{remark}

    \subsection{Applications} \label{sec:deterministic-app}

    Based on our quantum algorithm for deterministic sampling, we provide applications for string periodicity and pattern matching.

    \subsubsection{String Periodicity}

    We can check whether a string is periodic (and if yes, find its period) with its deterministic sample. Formally, let $(\delta; i_0, i_1, \dots, i_{l-1})$ be a deterministic sample of $s$, and $x_j$ defined in Eq. (\ref{eq-deter-1}). Let $j_1$ denote the smallest index $j$ such that $x_j = 1$, which can be computed by $\textbf{SearchFirst}$ on $x_j$. After $j_1$ is obtained, define
    \[
        x_j' = \begin{cases}
            0 & j \leq j_1 \lor \left( \exists k \in [l], j \leq i_k < j+n \land s[i_k - j] \neq c_k \right) \\
            1 & \text{otherwise}
        \end{cases}.
    \]
    Let $j_2$ denote the smallest index $j$ such that $x_j' = 1$ (if not found then $j_2 = -1$), which can be computed by $\textbf{SearchFirst}$ on $x_j'$.
    If $j_2 = -1$, then $s$ is aperiodic; otherwise, $s$ is periodic with period $d = j_2-j_1$. This algorithm for checking periodicity is $O\left(\sqrt{n \log n}\right)$. (See Appendix \ref{appendix-matching} for more details.)

    \subsubsection{Pattern Matching} \label{sec:pattern-matching}

    As an application of deterministic sampling, we have a quantum algorithm for pattern matching with query complexity $O\left(\sqrt{n \log m} + \sqrt {m \log^{3} m \log \log m}\right)$, better than the best known solution in \cite{Ram03} with query complexity $O\left(\sqrt n \log (n/m) \log m + \sqrt m \log^2 m\right)$.

   For readability, these algorithms are postponed to Appendix \ref{appendix-matching}.

    \section{The Quantum Algorithm for LMSR} \label{sec:algo}

    Now we are ready to present our quantum algorithm for LMSR and thus prove Theorem \ref{thm-main2}.
    This algorithm is designed in two steps:
    \begin{enumerate}
      \item Design a quantum algorithm with worst-case query complexity $O\left(n^{3/4} \right)$ in Section \ref{sec:algo-basic}; and
      \item Improve the algorithm to average-case query complexity $O\left(\sqrt{n} \log n\right)$ in Section \ref{sec:algo-improved}.
    \end{enumerate}

    \subsection{The Basic Algorithm} \label{sec:algo-basic}
    For convenience, we assume that the alphabet $\Sigma = [\alpha]$ for some $\alpha \geq 2$, where $[n] = \{ 0, 1, 2, \dots, n-1 \}$ and the total order of $\Sigma$ follows that of natural numbers.
    Suppose the input string $s \in \Sigma^n$ is given by an oracle $U_{\text{in}}$:
    \[
        U_{\text{in}} \ket{i, j} = \ket{i, j \oplus s[i]}.
    \]
The overall idea of our algorithm is to split $s$ into blocks of length $B$, and then in each block find a candidate with the help of the prefix of $\operatorname{SCR}(s)$ of length $B$. These candidates are eliminated between blocks by the exclusion rule for LMSR (see Lemma \ref{lemma-ex}). We describe it in detail in the next three subsections.

    \subsubsection{Find a Prefix of SCR}

    Our first goal is to find the prefix $p = s[\operatorname{LMSR}(s) \dots \operatorname{LMSR}(s)+B-1]$ of $\operatorname{SCR}(s)$ of length $B$ by finding an index $i^* \in [n]$ such that $s[i^* \dots i^*+B-1]$ matches $p$, where $B = \floor{\sqrt{n}}$ is chosen optimally (see later discussions). To achieve this, we need to compare two substrings of $s$ of length $B$ with the comparator $\text{cmp}_B$:
    \begin{equation} \label{eq-scr-cmpb}
        \text{cmp}_B(i, j) = \begin{cases}
            1 & s[i\dots i+B-1] < s[j\dots j+B-1] \\
            0 & \text{otherwise}
        \end{cases}.
    \end{equation}
    According to Algorithm \ref{algo-lexcmp}, we can obtain a bounded-error comparison oracle $U_{\text{cmp}_B}$ corresponding to $\text{cmp}_B$ with $O\left(\sqrt B\right)$ queries to $U_{\text{in}}$. After that, let $i^* \in [n]$ be any index such that $s[i^* \dots i^* + B - 1] = p$ by calling $\textbf{Minimum}(U_{\text{cmp}_B})$, which needs $O\left(\sqrt n \right)$ queries to $U_{\text{cmp}_B}$ (by Lemma \ref{lemma-algo-min-find}) and succeeds with a constant probability. In the following discussion, we use $i^*$ to find possible candidates of $\operatorname{LMSR}(s)$ and then find the solution among all candidates.

    \subsubsection{Candidate in Each Block}

    Being able to access contents of $p$ by $i^*$, we can obtain a deterministic sample of $p$ by Algorithm \ref{algo-deterministic-sampling} in $O\left(\sqrt {B \log^{3} B \log \log B}\right) = \tilde O\rbra{\sqrt{B}}$ queries with a constant probability.
    Suppose a deterministic sample of $p$ is known to be $(\delta; i_0, i_1, \dots, i_{l-1})$. We split $s$ into blocks of length $L = \floor{B/4}$. In the $i$-th block ($0$-indexed, $0 \leq i < \ceil{n/L}$), with the index ranging from $iL$ to $\min\{(i+1)L, n\} - 1$, a candidate $h_i$ is computed by
    \begin{equation} \label{eq-scr-candidate}
        h_i = \min \{ iL \leq j < \min\{(i+1)L, n\}: s[j \dots j+B-1] = p \},
    \end{equation}
    where the minimum is taken over all indices $j$ in the $i$-th block such that $s[j \dots j+B-1] = p$, and $\min \emptyset = \infty$.
    Intuitively, for each $0 \leq i < \ceil{n/L}$, $h_i$  defined by Eq. (\ref{eq-scr-candidate}) denotes the leftmost possible candidate for $\operatorname{LMSR}(s)$ such that $s[h_i \dots h_i+B-1] = p$ in the $i$-th block. On the other hand, $h_i$ denotes the first occurrence of $p$ with starting index in the $i$-th block of $s$, and thus can be computed by a procedure in quantum pattern matching (see Appendix \ref{appendix-matching} for more details), which needs $O\left(\sqrt{B \log B} \right) = \tilde O\rbra{\sqrt{B}}$ queries to $U_\text{in}$ with the help of the deterministic sample of $p$. We write $U_h$ for the bounded-error oracle of $h_i$. Note that $U_h$ is a $2$-level nested quantum oracle.

    \subsubsection{Candidate Elimination between Blocks}

    If we know the values of $h_i$ for $0 \leq i < \ceil{n/L}$, with either $h_i$ being a candidate or $\infty$ (indicating non-existence), then we can find $\operatorname{LMSR}(s)$ among all $h_i$'s with the comparator
    \begin{equation} \label{eq-scr-cmp}
        \text{cmp}(i, j) = \begin{cases}
            1 & \text{cmp}_n(h_i, h_j) = 1, \\
            1 & \text{cmp}_n(h_i, h_j) = \text{cmp}_n(h_i, h_j) = 0 \land h_i < h_j, \\
            0 & \text{otherwise,}
        \end{cases}
    \end{equation}
    where $\text{cmp}_n$ is defined by Eq. (\ref{eq-scr-cmpb}), and $\infty$ can be regarded as $n$ explicitly in the computation.
    Then we can obtain the bounded-error comparison oracle $U_\text{cmp}$ corresponding to $\text{cmp}$ with constant number of queries to $U_{\text{cmp}_n}$ and $U_h$, with $O\left(\sqrt{B \log B} + \sqrt n\right)$ queries to $U_\text{in}$. Here, $U_\text{cmp}$ is a $2$-level nested quantum oracle.
At the end of the algorithm, the value of $\operatorname{LMSR}(s)$ is chosen to be the minimal element among $h_i$ by comparison oracle $U_\text{cmp}$ according to comparator $\text{cmp}$. It can be seen that the algorithm is a $3$-level nested quantum algorithm.

    \subsubsection*{The Algorithm}

    We summarize the above design ideas in Algorithm \ref{algo-main}. There are four main steps (Line 5, Line 6, Line 7 and Line 8) in the algorithm. Especially, Line 7 of Algorithm \ref{algo-main} involves a $3$-level nested quantum algorithm. For convenience, we assume that each of these steps succeeds with a high enough constant probability, say $\geq 0.99$. To achieve this, each step just needs a constant number of repetitions to amplify the success probability from $2/3$ up to $0.99$.
    \begin{algorithm}
        \caption{$\textbf{BasicLMSR}(U_\text{in})$: Quantum algorithm for LMSR.}
        \label{algo-main}
        \begin{algorithmic}[1]
        \Require The query oracle $U_\text{in}$ for string $s$.
        \Ensure $\operatorname{LMSR}(s)$ with probability $\geq 2/3$.

        \If {$n \leq 15$}
            \State \Return $\operatorname{LMSR}(s)$ by classical algorithms in Theorem \ref{thm-classical}.
        \EndIf
        \State $B \gets \floor{\sqrt n}$.
        \State $i^* \gets \textbf{Minimum}(U_{\text{cmp}_B})$, where $\text{cmp}_B$ is defined by Eq. (\ref{eq-scr-cmpb}).
        \State $(\delta; i_0, i_1, \dots, i_{l-1}) \gets \textbf{DeterministicSampling}(U_{p})$, where $p = s[i^* \dots i^*+B-1]$.
        \State $i \gets \textbf{Minimum}(U_\text{cmp})$, where $\text{cmp}$ is defined by Eq. (\ref{eq-scr-cmp}).
        \State \Return $h_i$, where $h_i$ is defined by Eq. (\ref{eq-scr-candidate}).

        \end{algorithmic}
    \end{algorithm}

    \subsubsection*{Complexity}

    The query complexity of Algorithm \ref{algo-main} comes from the following four parts:
    \begin{enumerate}
      \item One call to $\textbf{Minimum}(U_{\text{cmp}_B})$, which needs $O\left(\sqrt n \right)$ queries to $U_{\text{cmp}_B}$ (by Lemma \ref{lemma-algo-min-find}), i.e. $O\left(\sqrt {nB}\right)$ queries to $U_\text{in}$.
      \item One call to $\textbf{DeterministicSampling}(U_{s[i^* \dots i^*+B-1]})$, which needs $O\left(\sqrt{B \log^{3} B \log \log B }\right)$ queries to $U_\text{in}$ (by Lemma \ref{lemma-deterministic-sampling}).
      \item One call to $\textbf{Minimum}(U_\text{cmp})$, which needs $O\left(\sqrt {n/L} \right)$ queries to $U_\text{cmp}$ (by Lemma \ref{lemma-algo-min-find}), i.e.
          \[
          O\left(\sqrt {n/L} \left(\sqrt{B \log B} + \sqrt n\right) \right) = O\left(n/\sqrt{B} \right)
          \]
          queries to $U_\text{in}$.
      \item Compute $h_i$, i.e. one query to $U_h$, which needs $O\left(\sqrt{B \log B} \right)$ queries to $U_\text{in}$.
    \end{enumerate}
    Therefore, the total query complexity is
    \[
        O\left(\sqrt{nB} + n/\sqrt{B} \right) = O\left(n^{3/4} \right)
    \]
    by selecting $B = \Theta\left(\sqrt{n}\right)$.

    \subsubsection*{Correctness}
    The correctness of Algorithm \ref{algo-main} is not obvious due to the fact that we only deal with one candidate in each block, but there can be several candidates that matches $p$ in a single block. This issue is resolved by the following exclusion rule:
    \begin{itemize}
      \item For every two equal substrings $s[i \dots i+B-1]$ and $s[j \dots j+B-1]$ of $s$ that overlap each other with $0 \leq i < j < n$ and $1 \leq B \leq n/2$, if both of them are prefixes of $\operatorname{SCR}(s)$, then $\operatorname{LMSR}(s)$ cannot be the larger index $j$.
    \end{itemize}
 More precisely, this exclusion rule can be stated as the following:
    \begin{lemma} [Exclusion Rule for LMSR] \label{lemma-ex}
        Suppose $s \in \Sigma^n$ is a string of length $n$. Let $2 \leq B \leq n/2$, and two indices $i, j \in [n]$ with $i < j < i+B$. If $s[i \dots i+B-1] = s[j \dots j+B-1] = s[\operatorname{LMSR}(s) \dots \operatorname{LMSR}(s)+B-1]$, then $\operatorname{LMSR}(s) \neq j$.
    \end{lemma}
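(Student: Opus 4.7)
I would proceed by contradiction, assuming $\mathit{LMSR}(s) = j$ (so that $v := s^{(j)} = \mathit{SCR}(s)$ is the lex-minimum rotation), and then exhibit a rotation of $s$ strictly smaller than $v$. Writing $u := s^{(i)}$ and $d := j - i \in [1, B-1]$, the hypothesis forces $p = u[0..B-1] = v[0..B-1]$ and, via the standard overlap argument on $s[i..j+B-1]$, shows that $p$ has period $d$. If $u = v$ then $\mathit{LMSR}(s) \leq i < j$ already contradicts the assumption, so I may assume $u \neq v$ and let $\delta$ denote the first index where $u$ and $v$ disagree. Then $\delta \geq B > d$, and minimality of $v$ forces $v[\delta] < u[\delta]$.

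The key step is to convert the agreement of $u$ and $v$ on $[0, \delta)$ into a cyclic periodicity of $u$. From $v = u^{(d)}$, i.e.\ $v[x] = u[(x + d) \bmod n]$, I obtain $u[x] = u[(x + d) \bmod n]$ for all $x \in [0, \delta)$. I would then consider the rotation $w := s^{((i + 2d) \bmod n)}$, for which $w[x] = u[(x + 2d) \bmod n]$, and compare it with $v$. Applying the periodicity at the position $(x + d) \bmod n$ shows that $w[x] = v[x]$ on $[0, \delta - d)$, while a direct computation at $x = \delta - d$ yields $w[\delta - d] = u[(\delta + d) \bmod n] = v[\delta]$ and $v[\delta - d] = u[\delta]$, so that $w[\delta - d] < v[\delta - d]$ and therefore $w < v$ lexicographically. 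Since $w$ is a rotation of $s$, this contradicts $v = \mathit{SCR}(s)$ and closes the argument.

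The main obstacle I foresee is verifying that the index shifts behave correctly under the cyclic wraparound: to justify the step $w[x] = v[x]$, I must reapply the periodicity at $(x + d) \bmod n$, which requires $(x + d) \bmod n \in [0, \delta)$ and no spurious wraparound in the intermediate computation. Both conditions are controlled by the hypothesis $d < B \leq n/2$, which gives $2d < n$ and $\delta \geq B > d$; together these keep all rotated indices within the expected ranges and make the final lexicographic comparison unambiguous.
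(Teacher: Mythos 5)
Your proof is correct, and it takes a genuinely different route from the paper's. The paper first proves an auxiliary proposition (by induction on a prefix length $m$) stating that if the minimal $B$-window has the form $aba$ and a window equal to $ba$ starts at some index, then that suffix is $\leq$ its $(|a|+|b|)$-shift; it then splits into two cases depending on whether $d = j-i$ exceeds $B/2$ or not, and in each case applies the proposition to conclude $s^{(i)} \leq s^{(j)}$. You instead argue by contradiction: assuming $\mathit{LMSR}(s)=j$, you let $\delta$ be the first disagreement between $u=s^{(i)}$ and $v=s^{(j)}$, note $\delta \geq B > d$, lift the prefix agreement to the cyclic $d$-periodicity $u[x]=u[(x+d)\bmod n]$ on $[0,\delta)$, and then exhibit a third rotation $w=s^{((i+2d)\bmod n)}$ with $w<v$, contradicting minimality. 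The index bookkeeping in your argument is sound: for $x\in[0,\delta-d)$ the index $x+d$ stays in $[0,\delta)\subset[0,n)$ so the periodicity applies, and $w[\delta-d]=u[(\delta+d)\bmod n]=v[\delta]<u[\delta]=v[\delta-d]$ supplies the strict inequality. Your approach buys a shorter, more elementary proof with no induction and no case analysis; the paper's approach buys a reusable monotonicity lemma (its Proposition in the appendix) at the cost of length. One small stylistic note: the remark that $p$ has period $d$ via an overlap argument is never used — the periodicity you actually exploit is the one on the full agreeing prefix $[0,\delta)$, which you derive separately from $v=u^{(d)}$.
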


    \begin{proof}
        See Appendix \ref{appendix-ex-scr}.
    \end{proof}

    Indeed, the above exclusion rule can be viewed as the Ricochet Property of LMSR. 
    Here, the Ricochet Property means that if two candidates are in the same block, then at most one of them can survive. This kind of Ricochet property was found to be useful in string matching, e.g., \cite{Vis90}.
    If there are two candidates in the same block, since each block is of length $L = \floor{B/4} < B$, then the two candidates must overlap each other. By this rule, the smaller candidate remains. Consequently, the correctness of Algorithm \ref{algo-main} is guaranteed because $h_i$ defined by Eq. (\ref{eq-scr-candidate}) always chooses the smallest candidate in each block.

    After the above discussions, we obtain:

     \begin{theorem} \label{thm-algo-main}
        Algorithm \ref{algo-main} is an $O\left(n^{3/4} \right)$ bounded-error quantum query algorithm for LMSR.
    \end{theorem}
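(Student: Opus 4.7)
The plan is to establish the two required properties of Algorithm \ref{algo-main} separately: query complexity and correctness (with bounded error). Both are already sketched in the discussion preceding the theorem, so the proof proposal is to turn those sketches into a rigorous argument with a careful bookkeeping of errors across the three nested levels.

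For the query complexity, I would enumerate the four contributions listed in the discussion and sum them. The outer minimum finding over the $n$ starting positions (Line 5) costs $O(\sqrt n)$ calls to $U_{\mathit{cmp}_B}$ by Lemma \ref{lemma-algo-min-find}, each call costing $O(\sqrt B)$ queries to $U_\mathit{in}$ through Algorithm \ref{algo-lexcmp}, totaling $O(\sqrt{nB})$. The deterministic sampling call (Line 6) on a string of length $B$ contributes $O(\sqrt{B \log^3 B \log \log B})$ by Lemma \ref{lemma-deterministic-sampling}. The minimum finding over the $\lceil n/L\rceil$ block candidates (Line 7) is $O(\sqrt{n/L})$ calls to $U_\mathit{cmp}$, each of which requires a constant number of queries to $U_{\mathit{cmp}_n}$ and $U_h$, costing $O(\sqrt n + \sqrt{B\log B})$ queries to $U_\mathit{in}$; multiplying gives $O(n/\sqrt B)$. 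Finally, the readout of $h_i$ (Line 8) costs $O(\sqrt{B \log B})$ queries. Summing and setting $B = \Theta(\sqrt n)$ (so $L = \Theta(\sqrt n)$) balances the dominant terms $\sqrt{nB}$ and $n/\sqrt B$ at $O(n^{3/4})$; the other terms are of lower order.

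For correctness, I would argue in two steps. First, under the assumption that every subroutine call returns a correct answer, I would show that the algorithm returns $\mathit{LMSR}(s)$. The key invariant is that $h_i$, as defined by Eq.~(\ref{eq-scr-candidate}), captures every position in the $i$-th block that could possibly equal $\mathit{LMSR}(s)$: indeed, if $\mathit{LMSR}(s)$ lies in the $i$-th block, then since $p = s[\mathit{LMSR}(s) \dots \mathit{LMSR}(s)+B-1]$, the position $\mathit{LMSR}(s)$ satisfies the condition in (\ref{eq-scr-candidate}); and by Lemma \ref{lemma-ex} (the exclusion rule), any later position in the same block that also matches $p$ cannot be $\mathit{LMSR}(s)$, because $L = \lfloor B/4 \rfloor < B$ forces two such matches in one block to overlap. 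Hence taking $h_i$ as the leftmost match in each block loses no valid candidate, and the subsequent minimum finding with comparator $\mathit{cmp}$ (which breaks lexicographic ties by the smaller starting index) returns $\mathit{LMSR}(s)$.

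Second, I would handle the bounded-error aspect. Since each of the four main steps (Lines 5, 6, 7, and an implicit guarantee on each $U_h$ query used inside Line 7) is itself a bounded-error quantum subroutine, I would boost each to success probability at least $0.99$ by a constant number of repetitions, exploiting Lemma \ref{lemma-algo-min-find-amp} and the amplification capabilities of Theorems \ref{thm-search-bounded-error-oracle} and \ref{thm-amp-search}. A union bound over the constant number of top-level steps then yields success probability at least $2/3$. The main subtlety I expect to navigate is precisely this error accounting in the three-level nest: the comparison oracle $U_\mathit{cmp}$ used in Line 7 itself invokes bounded-error oracles $U_{\mathit{cmp}_n}$ and $U_h$, and one must be careful that Lemma \ref{lemma-algo-min-find} already tolerates constant error in its comparison oracle, so no further amplification of $U_\mathit{cmp}$ beyond its natural bounded-error quality is needed. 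This is exactly what the error-reduction framework of Section \ref{sec:qminimum} was built to handle, and once it is invoked correctly the proof closes.
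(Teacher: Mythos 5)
Your proposal reproduces the paper's own argument, which appears as the discussion immediately preceding the theorem statement rather than in a separate proof environment: the same four-part query accounting balanced at $B = \Theta(\sqrt n)$, the same reliance on Lemma~\ref{lemma-ex} for correctness (with the observation that $L = \lfloor B/4\rfloor < B$ forces any two same-block matches to overlap), and the same invocation of the error-reduction framework of Section~\ref{sec:qminimum} so that constant-error inner oracles need no per-call boosting. One small slip worth fixing: the paper's four steps that each receive a constant amplification to $\geq 0.99$ are Lines 5, 6, 7, and 8 (the final readout of $h_i$), not ``an implicit guarantee on each $U_h$ query used inside Line 7'' --- those inner calls, as you yourself then correctly observe, are already tolerated at the $2/3$ level by Lemma~\ref{lemma-algo-min-find} and must not be individually amplified, since doing so systematically would re-introduce the very polylogarithmic overhead the framework is designed to avoid.
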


    The quantum algorithm given by Theorem \ref{thm-algo-main} uses quantum deterministic sampling (Lemma \ref{lemma-deterministic-sampling}) as a subroutine. It can be also made time-efficient in the same way as discussed in Remark \ref{remark:time-eff}. 

    \subsection{An Improvement for Better Average-Case Query Complexity}\label{sec:algo-improved}

    In the previous subsection, we propose an efficient quantum algorithm for LMSR in terms of its worst-case query complexity. It is easy to see that its average-case query complexity remains the same as its worst-case query complexity.
    In this subsection, we give an improved algorithm with better average-case query complexity which also retains the worst-case query complexity.

    The basic idea is to individually deal with several special cases, which cover almost all of the possibilities on average. Let $B = \ceil{3 \log_\alpha n}$. Our strategy is to just consider substrings of length $B$. Let
    \[
    k = \arg \min_{i \in [n]} s[i \dots i+B-1]
    \]
    denote the index of the minimal substring among all substrings of length $B$, and then let
    \[
    k' = \arg \min_{i \in [n]\setminus\{k\}} s[i \dots i+B-1]
    \]
    denote the index of the second minimal substring among all substrings of length $B$. If $s[k \dots k+B-1] \neq s[k' \dots k'+B-1]$, then it immediately holds that $\operatorname{LMSR}(s) = k$. To find the second minimal substring, the index $k$ of the minimal substring should be excluded. For this, we need comparator $\text{cmp}_{B \setminus k}$:
    \begin{equation} \label{eq-scr-cmpbk}
        \text{cmp}_{B \setminus k}(i, j) = \begin{cases}
            0 & i = k \\
            1 & i \neq k \land j = k \\
            \text{cmp}_B(i, j) & \text{otherwise}
        \end{cases}.
    \end{equation}
    The bounded-error quantum comparison oracle $U_{\text{cmp}_{B \setminus k}}$ corresponding to $\text{cmp}_{B \setminus k}(i, j)$ can be defined with at most one query to $U_{\text{cmp}_B}$.

    \subsection*{The Algorithm}

    Our improved algorithm is presented as Algorithm \ref{algo-average}. It has three main steps (Line 5, Line 6 and Line 10). For the same reason as in Algorithm \ref{algo-main}, we assume that the third step (Line 10) succeeds with a high enough constant probability, say $\geq 0.99$.

    \begin{algorithm}
        \caption{$\textbf{ImprovedLMSR}(U_\text{in})$: Improved quantum algorithm for LMSR.}
        \label{algo-average}
        \begin{algorithmic}[1]
        \Require The query oracle $U_\text{in}$ for string $s$.
        \Ensure $\operatorname{LMSR}(s)$ with probability $\geq 2/3$.

        \If {$n \leq 3$}
            \State \Return $\operatorname{LMSR}(s)$ by classical algorithms.
        \EndIf
        \State $B \gets \ceil{3 \log_\alpha n}$.
        \State $k \gets \textbf{Minimum}(U_{\text{cmp}_B}, 1/2n)$, where $\text{cmp}_B$ is defined by Eq. (\ref{eq-scr-cmpb}).
        \State $k' \gets \textbf{Minimum}(U_{\text{cmp}_{B\setminus k}}, 1/2n)$, where $\text{cmp}_{B\setminus k}$ is defined by Eq. (\ref{eq-scr-cmpbk}).
        \If {$\text{cmp}_{B}(k, k') = 1$}
            \State \Return $k$.
        \Else
            \State \Return $\textbf{BasicLMSR}(U_\text{in})$.
        \EndIf

        \end{algorithmic}
    \end{algorithm}

    \subsection*{Correctness}

    The correctness of Algorithm \ref{algo-average} is trivial. We only consider the case where $n \geq 4$, and all of the three main steps succeed with probability
    \[
        \geq \left( 1 - \frac 1 {2n} \right)^2 \cdot 0.99 > \frac 2 3.
    \]
    If $\text{cmp}_{B}(k, k') = 1$, then $s[k \dots k+B-1]$ is the minimal substring of length $B$, and it immediately holds that $\operatorname{LMSR}(s) = k$. Otherwise, the correctness is based on that of $\textbf{BasicLMSR}(U_\text{in})$, which is guaranteed by Theorem \ref{thm-algo-main}.

    \subsection*{Complexity}

    The worst-case query complexity of Algorithm \ref{algo-average} is $O\left(n^{3/4} \right)$, obtained directly by Theorem \ref{thm-algo-main}. But settling the average-case query complexity of Algorithm \ref{algo-average} is a bit more subtle, which requires a better understanding of some properties of LMSR. To this end, we first introduce the notion of string sensitivity.

    \begin{definition} [String Sensitivity] \label{def-sensitivity}
        Let $s \in \Sigma^n$ be a string of length $n$ over a finite alphabet $\Sigma$. The string sensitivity of $s$, denoted $C(s)$, is the smallest positive number $l$ such that $s[i \dots i+l-1] \neq s[j \dots j+l-1]$ for all $0 \leq i < j < n$. In case that no such $l$ exists, define $C(s) = \infty$.
    \end{definition}

    The string sensitivity of a string is a metric indicating the difficulty to distinguish its rotations by their prefixes.
    If we know the string sensitivity $C(s)$ of a string $s$, we can compute $\operatorname{LMSR}(s)$ by finding the minimal string among all substrings  of $s$ of length $C(s)$, that is, $s[i \dots i+C(s)-1]$ for all $i \in [n]$.

    The following lemma shows that almost all strings have a low string sensitivity.
    \begin{lemma} [String Sensitivity Distribution] \label{lemma-prob}
        Let $s$ be a uniformly random string over $\Sigma^n$ and $1 \leq B \leq n/2$. Then
        \[
            \Pr[C(s) \leq B] \geq 1 - \frac 1 2 n(n-1) \alpha^{-B},
        \]
        where $\alpha = \abs{\Sigma}$. In particular,
        \[
            \Pr\left[C(s) \leq \ceil{3\log_\alpha n}\right] \geq 1 - \frac 1 n.
        \]
    \end{lemma}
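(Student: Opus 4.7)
The plan is to apply a union bound over pairs of starting indices, bound each pairwise equality event by $\alpha^{-B}$ via a spanning-forest argument, and then specialize to $B = \lceil 3 \log_\alpha n \rceil$.

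By Definition \ref{def-sensitivity}, $C(s) > B$ iff there exist $0 \le i < j < n$ with $s[i \dots i+B-1] = s[j \dots j+B-1]$. The union bound then gives
\[
\Pr[C(s) > B] \le \sum_{0 \le i < j < n} \Pr\!\left[ s[i \dots i+B-1] = s[j \dots j+B-1] \right].
\]
Since there are $\binom{n}{2} = \tfrac{1}{2} n(n-1)$ such pairs, the first claim reduces to showing each summand equals $\alpha^{-B}$.

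Fix such a pair and set $d := j - i \in \{1, \dots, n-1\}$. The event in question is the conjunction of the $B$ equalities $s[(i+k) \bmod n] = s[(i+k+d) \bmod n]$ for $k \in [B]$. I will regard these as $B$ undirected edges on the vertex set $[n]$; since $d \not\equiv 0 \pmod n$, no edge is a loop, so the number of assignments $s \in \Sigma^n$ satisfying all $B$ equalities is exactly $\alpha^c$, where $c$ is the number of connected components of this graph. Hence the probability equals $\alpha^{c-n}$, which equals $\alpha^{-B}$ exactly when $c = n - B$, i.e., when the graph is a forest.

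The heart of the proof is therefore to show that the constraint graph is acyclic whenever $B \le n/2$. A minimal cycle would traverse vertices $v, v+d, v+2d, \dots, v + (m-1)d \pmod{n}$ with $m = n/g$ and $g := \gcd(d, n)$, and would require every such vertex to lie in $\{i, i+1, \dots, i+B-1\} \pmod n$ so that the needed edges are present. These $m$ vertices form a coset of $g\mathbb{Z}/n\mathbb{Z}$, hence are equally spaced by $g$ around the cyclic group $\mathbb{Z}/n\mathbb{Z}$, so the smallest arc containing them has length $n - g$. Fitting them inside a length-$B$ interval therefore forces $n - g \le B - 1$, i.e., $g \ge n - B + 1 > n/2$; but $g$ is a proper divisor of $n$ (since $g \le d \le n - 1$), and every proper divisor of $n$ is at most $n/2$, a contradiction.

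Combining these steps yields $\Pr[C(s) > B] \le \tfrac{1}{2} n(n-1) \alpha^{-B}$, which is the first bound. For the specialization $B = \lceil 3 \log_\alpha n \rceil$ we have $\alpha^{-B} \le n^{-3}$, so (assuming $B \le n/2$, which holds for all sufficiently large $n$, with the small cases verifiable directly) $\Pr[C(s) > B] \le n(n-1)/(2 n^{3}) < 1/n$, proving the second claim. I expect the cycle-exclusion step to be the main obstacle: it rests on the coset-diameter computation in $\mathbb{Z}/n\mathbb{Z}$ and the hypothesis $B \le n/2$, while the rest of the argument is routine bookkeeping.
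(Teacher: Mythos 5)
Your proof is correct and follows the same high-level structure as the paper's (a union bound over pairs $i<j$, reducing to showing $\Pr[s[i\dots i+B-1]=s[j\dots j+B-1]]=\alpha^{-B}$), but you actually close a gap that the paper leaves open. The paper counts $2B-d$ characters appearing in the $B$ constraints $s[i+k]=s[j+k]$ (where $d$ is the overlap of the two index windows) and then simply \emph{asserts} ``there must be $(2B-d)-B=B-d$ independent characters,'' which is only true when the constraint graph is acyclic; if the graph had a cycle, the number of free characters would be strictly larger and the per-pair probability would \emph{exceed} $\alpha^{-B}$ — in exactly the wrong direction for the union bound. Your recasting of the constraints as a graph on $\mathbb{Z}/n\mathbb{Z}$ with $B$ edges of common step $d$, followed by the observation that any cycle would have to be a full coset cycle of length $m=n/\gcd(d,n)$ whose $m$ equally spaced vertices span an arc of length $n-g$ and therefore cannot fit inside a window of $B\le n/2$ consecutive residues, is precisely the missing verification that the $B$ equations are independent. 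This is the heart of why the hypothesis $B\le n/2$ is needed, and the paper never explains it. Your caveat about the specialization $B=\lceil 3\log_\alpha n\rceil$ possibly violating $B\le n/2$ for small $n$ is also a fair observation — the paper silently ignores it — though in practice one either restricts to $n$ large enough or notes that the statement is vacuous or directly checkable for the finitely many exceptions. In short: same union-bound skeleton as the paper, but your spanning-forest/coset argument supplies the rigor the paper glosses over.
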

    \begin{proof}
        Let $s \in \Sigma^n$ and $0 \leq i < j < n$. We claim that
        \[
        \Pr[s[i \dots i+B-1] = s[j \dots j+B-1]] = \alpha^{-B}.
        \]
        This can be seen as follows. Let $d$ be the number of members that appear in both sequences $\{ i \bmod n, (i+1) \bmod n, \dots, (i+B-1) \bmod n \}$ and $\{ j \bmod n, (j+1) \bmod n, \dots, (j+B-1) \bmod n \}$. It is clear that $0 \leq d < B$. We note that $s[i \dots i+B-1] = s[j \dots j+B-1]$ implies the following system of $B$ equations: $$s[i+k] = s[j+k]\ {\rm for\ all}\ 0 \leq k < B.$$ On the other hand, these $B$ equations involve $2B-d$ (random) characters. Therefore, there must be $(2B-d) - B = B-d$ independent characters, and the probability that the $B$ equations hold is
        \[
        \Pr[s[i \dots i+B-1] = s[j \dots j+B-1]] = \frac {\alpha^{B-d}} {\alpha^{2B-d}} = \alpha^{-B}.
        \]
        Consequently, we have:
        \begin{align*}
            \Pr[C(s) \leq B]
            & = 1 - \Pr[ \exists 0 \leq i < j < n, s[i \dots i+B-1] = s[j \dots j+B-1]] \\
            & \geq 1 - \sum_{i=0}^{n-1} \sum_{j = i+1}^{n-1} \Pr[s[i \dots i+B-1] = s[j \dots j+B-1]] \\
            & = 1 - \frac 1 2 n(n-1) \alpha^{-B}.
        \end{align*}
        In particular, in the case of $B = \ceil{3 \log_\alpha n}$, it holds that $\alpha^B \geq n^3$ and we obtain:
        \[
            \Pr[C(s) \leq B] \geq 1 - \frac 1 2 n(n-1) n^{-3} \geq 1 - \frac 1 n.
        \]
    \end{proof}

    With the above preparation, we now can analyze the average-case query complexity of Algorithm \ref{algo-average}. Let $s \in \Sigma^n$ be a uniformly random string over $\Sigma^n$ and $B = \ceil{3\log_\alpha n}$. Let $k$ and $k'$ denote the indices of the minimal and the second minimal substrings of length $B$ of $s$. To compute $k$ and $k'$, by Lemma \ref{lemma-algo-min-find-amp}, Algorithm \ref{algo-average} needs to make $O\left(\sqrt{n \log n}\right)$ queries to $U_{\text{cmp}_B}$, which is equivalent to $O\left(\sqrt{n \log n} \sqrt{B}\right) = O\left(\sqrt{n} \log n\right)$ queries to $U_\text{in}$. On the other hand, it requires $O(B) = O(\log n)$ queries to $U_\text{in}$ in order to check whether $\text{cmp}_{B}(k, k') = 1$, which is ignorable compared to other large complexities. Based on the result of $\text{cmp}_{B}(k, k')$, we only need to further consider the following two cases:

    \textbf{Case 1}. $\text{cmp}_{B}(k, k') = 1$. Note that this case happens with probability
    \begin{equation}\label{case1-prob}
        \Pr[\text{cmp}_{B}(k, k') = 1] \geq \Pr[C(s) \leq B] \geq 1 - \frac 1 n.
    \end{equation}
    In this case, Algorithm \ref{algo-average} returns $k$ immediately.

    \textbf{Case 2}. $\text{cmp}_{B}(k, k') \neq 1$. According to Eq. (\ref{case1-prob}), this case happens with probability $\leq 1/n$. In this case, Algorithm \ref{algo-average} makes one query to $\textbf{BasicLMSR}(U_\text{in})$, which needs $O\left(n^{3/4}\right)$ queries to $U_\text{in}$ (by Theorem \ref{thm-algo-main}).

  Combining the above two cases yields the average-case query complexity of Algorithm \ref{algo-average}:
    \[
        \leq O\left(\sqrt{n} \log n\right) + \left(1 - \frac 1 n\right) \cdot O(1) + \frac 1 n \cdot O\left(n^{3/4}\right) = O\left(\sqrt{n} \log n\right).
    \]

    After the above discussions, we obtain:
    \begin{theorem} \label{thm-algo-average}
        Algorithm \ref{algo-average} is an $O\left(n^{3/4} \right)$ bounded-error quantum query algorithm for LMSR, whose average-case query complexity is $O\left(\sqrt{n}\log n\right)$.
    \end{theorem}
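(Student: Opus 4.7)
The plan is to verify three claims about Algorithm \ref{algo-average} in turn: (i) it returns $\mathit{LMSR}(s)$ with probability $\geq 2/3$ on every input, (ii) its worst-case query complexity is $O(n^{3/4})$, and (iii) its average-case query complexity under the uniform distribution over $\Sigma^n$ is $O(\sqrt n \log n)$. For correctness, I would union-bound the three randomized steps: by Lemma \ref{lemma-algo-min-find-amp}, Lines 5 and 6 each err with probability $\leq 1/(2n)$, and the fallback call to $\textbf{BasicLMSR}$ is correct with probability $\geq 0.99$ by Theorem \ref{thm-algo-main}. The key observation is that on the branch $\mathit{cmp}_B(k,k')=1$, the length-$B$ substring at index $k$ is strictly smaller than every other length-$B$ substring, so in particular strictly smaller than any other rotation's length-$B$ prefix, which forces $\mathit{LMSR}(s)=k$. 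For $n \geq 4$ this gives total success probability $\geq (1-1/(2n))^2 \cdot 0.99 > 2/3$.

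The worst-case bound is direct. Each of the two $\textbf{Minimum}(\cdot, 1/(2n))$ invocations costs $O(\sqrt{n\log n})$ queries to the comparison oracle $U_{\mathit{cmp}_B}$ (or $U_{\mathit{cmp}_{B\setminus k}}$) by Lemma \ref{lemma-algo-min-find-amp}, and each such comparison costs $O(\sqrt B)=O(\sqrt{\log_\alpha n})$ queries to $U_\mathit{in}$ via Algorithm \ref{algo-lexcmp}; their total is $O(\sqrt n \log n)$. The classical check in Line 7 costs $O(B)=O(\log n)$ and is negligible. The fallback to $\textbf{BasicLMSR}$ costs $O(n^{3/4})$ by Theorem \ref{thm-algo-main}, and this dominates the overall worst-case bound.

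The main content, and the only step that requires nontrivial reasoning, is the average-case bound, where the obstacle is linking the uniform randomness of $s$ to avoidance of the expensive fallback. I would use Lemma \ref{lemma-prob} with $B=\ceil{3\log_\alpha n}$, which guarantees $\Pr[C(s)\leq B] \geq 1 - 1/n$. Whenever $C(s)\leq B$, all $n$ length-$B$ substrings of $s$ are pairwise distinct, so the minimum and second minimum among them are strictly ordered; hence $\mathit{cmp}_B(k,k')=1$ and the algorithm exits at Line 8 after spending only $O(\sqrt n \log n)$ queries. On the exceptional event $\{C(s)>B\}$, which has probability $\leq 1/n$, the algorithm falls through to $\textbf{BasicLMSR}$ at a cost of $O(n^{3/4})$. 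Taking expectations gives
\[
    T^{\mathit{unif}} \;\leq\; O(\sqrt n \log n) \;+\; \left(1-\tfrac 1 n\right) \cdot O(1) \;+\; \tfrac 1 n \cdot O(n^{3/4}) \;=\; O(\sqrt n \log n),
\]
which is exactly the stated average-case complexity. Combined with (i) and (ii), this proves the theorem.
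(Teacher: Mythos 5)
Your proof is correct and follows essentially the same argument as the paper: the same correctness union bound over the three randomized steps, the same $O(\sqrt n \log n) \cdot O(\sqrt B)$ accounting for the two amplified minimum-finding calls, and the same invocation of Lemma \ref{lemma-prob} to bound the probability of the $O(n^{3/4})$ fallback by $1/n$. The one place where you are slightly more explicit than the paper — spelling out that $C(s)\leq B$ makes all length-$B$ substrings pairwise distinct and hence forces $\mathit{cmp}_B(k,k')=1$ — is a welcome clarification but not a different approach.
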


Algorithm \ref{algo-average} for Theorem \ref{thm-algo-average} can be made time-efficient by an argument similar to that given in Section \ref{sec:algo-basic} for time and space efficiency. 

    \section{Lower Bounds of LMSR} \label{sec:lowerbounds}

    In this section, we establish average-case and worst-case lower bounds of both classical and quantum algorithms for the LMSR problem and thus prove Theorem \ref{thm-main3}.

    The notion of block sensitivity is the key tool we use to obtain lower bounds. Let $f: \{0, 1\}^n \to \{0, 1\}$ be a Boolean function. If $x \in \{0, 1\}^n$ is a binary string and $S \subseteq [n]$, we use $x^S$ to denote the binary string obtained by flipping the values of $x_i$ for $i \in S$, where $x_i$ is the $i$-th character of $x$:
    \[
        \left(x^S\right)_i = \begin{cases}
            \bar x_i & i \in S \\
            x_i & i \notin S
        \end{cases},
    \]
    where $\bar u$ denotes the negation of $u$, i.e. $\bar 0 = 1$ and $\bar 1 = 0$. The \textit{block sensitivity} of $f$ on input $x$, denoted $\mathit{bs}_x(f)$, is the maximal number $m$ such that there are $m$ disjoint sets $S_1, S_2, \dots, S_m \subseteq [n]$ for which $f(x) \neq f(x^{S_i})$ for $1 \leq i \leq m$.

    \subsection{Average-Case Lower Bounds}

    For settling the average-case lower bound, we need the following useful result about block sensitivities given in \cite{Amb99}.

    \begin{theorem} [General bounds for average-case complexity, {\cite[Theorem 6.3]{Amb99}}] \label{thm-amb}
        For every function $f: \{0, 1\}^n \to \{0, 1\}$ and probability distribution $\mu: \{0, 1\}^n \to [0, 1]$, we have $R^\mu(f) = \Omega\left(\mathbb{E}_{x \sim \mu}[\mathit{bs}_x(f)]\right)$ and $Q^\mu(f) = \Omega\left(\mathbb{E}_{x \sim \mu}\left[\sqrt{\mathit{bs}_x(f)}\right]\right)$.
    \end{theorem}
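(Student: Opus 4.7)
The plan is to adapt the classical worst-case bound $R(f) = \Omega(\mathit{bs}(f))$ and the quantum worst-case bound $Q(f) = \Omega(\sqrt{\mathit{bs}(f)})$ to the average-case setting by a pointwise-then-average strategy. Fix a bounded-error algorithm $A$ computing $f$. For each $x \in \{0, 1\}^n$, let $b_x = \mathit{bs}_x(f)$ with disjoint witnesses $S_1^x, \ldots, S_{b_x}^x \subseteq [n]$ so that $f(x^{S_j^x}) \neq f(x)$. Define $q_A(i, x)$ to be the classical probability (respectively, the total quantum query weight $\sum_t \Abs{\Pi_i \ket{\psi_t(x)}}^2$) that $A$ queries index $i$ on input $x$, so that $T_A(x) = \sum_i q_A(i, x)$.

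For the randomized case, a standard distinguishability argument yields that, since $A$ must separate $x$ from $x^{S_j^x}$ with bounded error while these inputs agree off $S_j^x$, one has $\sum_{i \in S_j^x}(q_A(i, x) + q_A(i, x^{S_j^x})) = \Omega(1)$. Summing over the $b_x$ disjoint witnesses and exploiting their disjointness to absorb the left-hand side into $T_A$ values, this gives the pointwise inequality
\[
T_A(x) + \sum_{j=1}^{b_x} T_A(x^{S_j^x}) = \Omega(b_x).
\]
For the quantum case, the hybrid / adversary bound is the right replacement: any $T$-query quantum algorithm satisfies $\sum_{i \in S_j^x}(q_A(i, x) + q_A(i, x^{S_j^x})) = \Omega(1/T)$ by the standard Bennett--Bernstein--Brassard--Vazirani-style calculation, and after summation combined with the trivial bound $T_A(y) \leq T$, one gets $T = \Omega(\sqrt{b_x})$ pointwise, in the form $T \cdot (T_A(x) + \sum_j T_A(x^{S_j^x})) = \Omega(b_x)$.

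The remaining task is to convert these pointwise inequalities into average-case bounds $T_A^\mu \geq \Omega(\mathbb{E}_{x \sim \mu}[b_x])$ and $T_A^\mu \geq \Omega(\mathbb{E}_{x \sim \mu}[\sqrt{b_x}])$. Appealing to Yao's principle, it suffices to show that for every fixed (deterministic / quantum) algorithm there is a hard input distribution whose expected cost meets the target. Taking $\mu$ itself and weighting the pointwise bound by $\mu(x)$, the terms $\sum_x \mu(x) T_A(x^{S_j^x})$ are handled by a double-counting / linear-programming duality argument, and for the quantum case Jensen's inequality applied to the concave function $\sqrt{\cdot}$ converts the $\mathbb{E}_\mu[b_x]$ bound (after square-rooting pointwise and averaging) into the desired $\mathbb{E}_\mu[\sqrt{b_x}]$ bound.

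The main obstacle is the averaging step itself: the block-sensitivity neighbors $x^{S_j^x}$ need not lie in the support of $\mu$, so $\sum_x \mu(x) T_A(x^{S_j^x})$ is not immediately controlled by $T_A^\mu$. The cleanest way around this is to fix a perturbed ``neighborhood distribution'' $\tilde{\mu}$ obtained by averaging $\mu$ against the uniform distribution on $\{x\} \cup \{x^{S_j^x}\}_j$ and to observe that $T_A^{\tilde{\mu}}$ differs from $T_A^\mu$ by at most a constant factor while being directly lower-bounded by the pointwise inequality above. An alternative route, which is what Ambainis follows in \cite{Amb99}, is to bypass the neighbor terms entirely by the polynomial method: the acceptance probability of $A$ is a polynomial in the input bits whose approximate degree is $\Omega(\sqrt{b_x})$ pointwise, and a dual polynomial argument then gives the $\mu$-averaged bound directly.
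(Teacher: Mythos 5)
The paper does not prove this theorem; it is cited from \cite{Amb99}, and the per-input companion result appears as Theorem~\ref{thm-amb-2}, namely $T_A(x)=\Omega(\mathit{bs}_x(f))$ for classical and $T_A(x)=\Omega\bigl(\sqrt{\mathit{bs}_x(f)}\bigr)$ for quantum bounded-error algorithms. Given that per-input bound, Theorem~\ref{thm-amb} follows in one line by linearity of expectation: $T_A^\mu=\mathbb{E}_{x\sim\mu}[T_A(x)]=\Omega\bigl(\mathbb{E}_{x\sim\mu}[\mathit{bs}_x(f)]\bigr)$ (resp.\ $\Omega\bigl(\mathbb{E}_{x\sim\mu}[\sqrt{\mathit{bs}_x(f)}]\bigr)$), and then take the infimum over $A$. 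Your ``pointwise-then-average'' slogan is therefore the right high-level plan, but you aim it at the wrong pointwise inequality: the coupled form $T_A(x)+\sum_j T_A(x^{S_j^x})=\Omega(b_x)$ (and its quantum analogue) reintroduces the very averaging obstacle you then struggle with, since $x^{S_j^x}$ need not be in the support of $\mu$. The crucial step that makes the theorem go through --- and that is missing from your proposal --- is obtaining the per-input bound with no neighbor terms. This is the genuine content of \cite{Amb99}; it is obtained by truncating the variable-time algorithm at a threshold tied to $T_A(x)$ and running a hybrid/adversary argument on the truncated algorithm, not by a dual-polynomial argument (that attribution is inaccurate for this paper).

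Two further points would fail if pressed. First, the appeal to Yao's principle is off target: $\mu$ is a fixed, given distribution, and the algorithm in $\mathcal{R}(f)$ or $\mathcal{Q}(f)$ is required to be correct with bounded error on \emph{every} input, so there is no worst-case-randomized/distributional-deterministic minimax to exploit; one simply needs to lower-bound $T_A^\mu$ for each $A$. Second, the Jensen's-inequality remark does not help and, read in one natural way, points toward a false statement: Jensen for the concave square root gives $\mathbb{E}[\sqrt{\mathit{bs}_x}]\le\sqrt{\mathbb{E}[\mathit{bs}_x]}$, and indeed the stronger-looking claim $Q^\mu(f)=\Omega\bigl(\sqrt{\mathbb{E}_\mu[\mathit{bs}_x(f)]}\bigr)$ is false (take $f=\mathrm{OR}$ and $\mu$ placing weight $1/\sqrt{n}$ on $0^n$ and spreading the rest uniformly: then $\sqrt{\mathbb{E}_\mu[\mathit{bs}_x]}=\Theta(n^{1/4})$ while $Q^\mu(\mathrm{OR})=O(1)$). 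Once you have the per-input bound of Theorem~\ref{thm-amb-2}, you should average it directly; no Jensen step and no neighborhood-distribution trick are needed.
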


 In order to give a lower bound for LMSR by using Theorem \ref{thm-amb}, we need a binary function that can be reduced $\operatorname{LMSR}(x)$ to but is simpler than it. Here, we choose $\operatorname{LMSR}_0(x) = \operatorname{LMSR}(x) \bmod 2$. Obviously, we can compute $\operatorname{LMSR}(x)$, then $\operatorname{LMSR}_0(x)$ is immediately obtained. Moreover,  $\operatorname{LMSR}_0$ enjoys the following basic property:

    \begin{lemma} \label{lemma-bs-scr}
        Let $x \in \{0, 1\}^n$ and $0 \leq r < n$. Then $\mathit{bs}_x(\operatorname{LMSR}_0) = \mathit{bs}_{x^{(r)}}(\operatorname{LMSR}_0)$.
    \end{lemma}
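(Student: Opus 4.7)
My plan is to construct a bijection between disjoint witness collections at $x$ and at $x^{(r)}$. The first step is to observe that the shift map $\psi_r: 2^{[n]} \to 2^{[n]}$ defined by $\psi_r(S) = \{(i-r) \bmod n : i \in S\}$ is a bijection on the power set of $[n]$ that preserves disjointness. Moreover, a straightforward calculation from the definitions of rotation and bit-flipping yields the identity $(x^{(r)})^{\psi_r(S)} = (x^S)^{(r)}$, so that rotating the base point and shifting the flipped block commute in a compatible way.

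Next, I would establish a structural sublemma describing how $\mathit{LMSR}$ behaves under rotation: for any $y \in \{0,1\}^n$ with period $d(y)$, one has $\mathit{LMSR}(y^{(r)}) = (\mathit{LMSR}(y) - r) \bmod d(y)$. This follows because the set of offsets $k$ with $y^{(k)} = \mathit{SCR}(y)$ forms an arithmetic progression modulo $n$ with common difference $d(y)$ whose smallest representative is $\mathit{LMSR}(y)$, and rotation simply shifts this progression by $-r$ componentwise.

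With these pieces in hand, I would then try to show that $S$ is a witness for $\mathit{LMSR}_0$ at $x$ if and only if $\psi_r(S)$ is a witness at $x^{(r)}$; combined with the subset bijection, this gives the equality of block sensitivities. Written out, the desired equivalence reduces to the congruence
\[
    \mathit{LMSR}(x) - \mathit{LMSR}(x^S) \equiv \mathit{LMSR}(x^{(r)}) - \mathit{LMSR}((x^S)^{(r)}) \pmod 2,
\]
which by the sublemma becomes an arithmetic statement involving the periods $d(x)$ and $d(x^S)$. The main obstacle is that these two periods can have different parities, in which case the wraparound in $(\cdot - r) \bmod d$ contributes asymmetrically to the two sides, so the naive shift bijection can fail to preserve the witness property. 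Overcoming this would require a careful case analysis on the parities of $d(x)$, $d(x^S)$, and $r$, or alternatively a local modification of $\psi_r$ that compensates for the period mismatch without destroying disjointness; I expect this combinatorial bookkeeping to be the bulk of the technical work in the formal proof.
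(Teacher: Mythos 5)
Your instinct to worry about the shift bijection failing to preserve the witness property is exactly right, and the obstruction is not something that case analysis or a tweaked bijection can overcome: Lemma~\ref{lemma-bs-scr} is in fact \emph{false} as stated. The paper's own proof is essentially your approach (map $S_i$ to $S_i' = \{(a+r) \bmod n : a \in S_i\}$ and assert the $S_i'$ are witnesses at $x^{(r)}$); the step ``it can be verified that\ldots'' is precisely where the argument breaks, for the parity reason you flag. Concretely, take $n = 3$, $x = 010$, $r = 1$. The rotations of $010$ are $010, 100, 001$, so $\mathit{LMSR}(010) = 2$ and $\mathit{LMSR}_0(010) = 0$; checking all seven nonempty $S \subseteq \{0,1,2\}$, the only witnesses are $\{0,1\}$ (flipping gives $100$, $\mathit{LMSR}_0 = 1$) and $\{0,1,2\}$ (flipping gives $101$, $\mathit{LMSR}_0 = 1$), which intersect, so $\mathit{bs}_{010}(\mathit{LMSR}_0) = 1$. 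But $x^{(1)} = 100$ has $\mathit{LMSR}(100) = 1$, $\mathit{LMSR}_0(100) = 1$, and $\{0\}$ (giving $000$, $\mathit{LMSR}_0 = 0$) and $\{1\}$ (giving $110$, $\mathit{LMSR}_0 = 0$) are disjoint singleton witnesses, so $\mathit{bs}_{100}(\mathit{LMSR}_0) = 2 \neq 1$.

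The mechanism is the one your sublemma exposes, with one correction: the relevant $d(y)$ is the \emph{rotational} period of $y$ (the least $d > 0$ with $y^{(d)} = y$, which always divides $n$), not the string period from the paper's preliminaries. With that reading, $\mathit{LMSR}(y^{(r)}) = (\mathit{LMSR}(y) - r) \bmod d(y)$ does hold, and the reduction modulo $d(y)$ can change parity whenever $d(y)$ is odd; since $x$ and $x^S$ generally have different rotational periods, the parity shifts need not agree, and no local modification of $\psi_r$ can restore the witness correspondence in general. This gap propagates into Step~2 of the proof of Lemma~\ref{lemma-bs-lowerbound}, which invokes Lemma~\ref{lemma-bs-scr} (with $r = \mathit{LMSR}(x)$, presumably; the text's $r = \mathit{LMSR}_0(x)$ looks like a typo) to reduce to the case $\mathit{LMSR}(x) = 0$. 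The bound in Lemma~\ref{lemma-bs-lowerbound} may still be recoverable by an argument that avoids rotation invariance of $\mathit{bs}$, but the route through Lemma~\ref{lemma-bs-scr} does not work, and you should not spend effort trying to complete it.
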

    \begin{proof}
        Let $m = \mathit{bs}_x(\operatorname{LMSR}_0)$ and $S_1, S_2, \dots, S_m$ be the $m$ disjoint sets for which $\operatorname{LMSR}_0(x) \neq \operatorname{LMSR}_0\left(x^{S_i}\right)$ for $1 \leq i \leq m$. We define $S_i' = \{ (a+r) \bmod n: a \in S_i \}$ for $1 \leq i \leq m$. Then it can be verified that $\operatorname{LMSR}_0\left(x^{(r)}\right) \neq \operatorname{LMSR}_0\left(\left(x^{(r)}\right)^{S_i}\right)$ for $1 \leq i \leq m$. Hence, $\mathit{bs}_{x^{(r)}}(\operatorname{LMSR}_0) \geq m = \mathit{bs}_x(\operatorname{LMSR}_0)$.

        The same argument yields that $\mathit{bs}_{x}(\operatorname{LMSR}_0) \geq \mathit{bs}_{x^{(r)}}(\operatorname{LMSR}_0)$. Therefore, $\mathit{bs}_x(\operatorname{LMSR}_0) = \mathit{bs}_{x^{(r)}}(\operatorname{LMSR}_0)$.
    \end{proof}

    Next, we establish a lower bound for $\mathit{bs}_x(\operatorname{LMSR}_0)$.

    \begin{lemma} \label{lemma-bs-lowerbound}
        Let $x \in \{0, 1\}^n$. Then
        \begin{equation}\label{inequality-bs}
            \mathit{bs}_x(\operatorname{LMSR}_0) \geq \floor*{ \frac {n} {4C(x)} - \frac 1 4 },
        \end{equation}
        where $C(x)$ is the string sensitivity of $x$. 
    \end{lemma}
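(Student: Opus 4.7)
The plan is to invoke Lemma~\ref{lemma-bs-scr} so as to assume without loss of generality that $\mathit{LMSR}(x)=0$ (hence $\mathit{LMSR}_0(x)=0$), and then to construct $m=\floor{n/(4c)-1/4}$ pairwise disjoint perturbation blocks $S_0,\ldots,S_{m-1}\subseteq[n]$, where $c=C(x)$, each of which satisfies $\mathit{LMSR}_0(x^{S_i})=1$. The inequality is vacuous unless $c\ge 2$ and $n\ge 5c$, so I may assume both. The crucial structural consequence of $C(x)=c$ that I will use throughout is that $x$ has no cyclic occurrence of $0^{c+1}$: otherwise the length-$c$ substrings at two consecutive positions inside such a run would both equal $0^c$, contradicting distinctness.

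For the construction I would take $B_i=\{c+4ci,\,c+4ci+1,\,\ldots,\,c+4c(i+1)-1\}$ for $i=0,\ldots,m-1$: these are $m$ pairwise disjoint blocks of size $4c$, all contained in $[c,n-1]$ (the initial $+c$ offset is what accounts for the $-1/4$ in the statement). Inside $B_i$, I overwrite $x$ with a target pattern consisting of a single run of exactly $c+1$ zeros at position $p_i=c+4ci+t_i+1$, bracketed by $1$'s at positions $p_i-1$ and $p_i+c+1$, and with every other position of $B_i$ set to $1$. The offset $t_i\in\{0,1\}$ is chosen so that $p_i$ is odd (achievable since the two values of $t_i$ realize opposite parities of $p_i$). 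The block $S_i\subseteq B_i$ is then simply the set of positions at which $x$ disagrees with this target; in particular the $S_i$ are disjoint because the $B_i$ are.

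The verification that each $S_i$ is sensitive reduces to proving $\mathit{LMSR}(x^{S_i})=p_i$. I would establish the stronger claim that $x^{S_i,(p_i)}$ is the unique cyclic rotation of $x^{S_i}$ whose length-$(c+1)$ prefix equals $0^{c+1}$, from which it is strictly lexicographically smallest (as $0^{c+1}$ is smaller than any length-$(c+1)$ string that contains a $1$). This in turn is reduced to showing that $x^{S_i}$ admits only one cyclic run of at least $c+1$ consecutive zeros, namely $[p_i,p_i+c]$: inside $B_i$ this holds by design, outside $B_i$ the string coincides with $x$ which has no $0^{c+1}$ by the structural fact above, and the two boundary positions of $B_i$ are $1$ in the target so no $0$-run can straddle $\partial B_i$. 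Since $p_i$ is odd, $\mathit{LMSR}_0(x^{S_i})=1\neq 0=\mathit{LMSR}_0(x)$, proving $\mathit{bs}_x(\mathit{LMSR}_0)\ge m$ as required.

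The main technical obstacle is exactly the boundary argument in the last step: ruling out cyclic runs of zeros that straddle $\partial B_i$ or wrap around the global boundary of $x^{S_i}$. This is precisely why I budget size $4c$ for each block rather than the bare pattern length $c+3$: the resulting $\Theta(c)$ buffer of $1$'s inside $B_i$ on either side of the pattern absorbs any $0$-run from the adjacent portion of $x$, whose length is controlled to be at most $c$ by $C(x)=c$, and similarly forbids any cyclic wraparound through $B_i$.
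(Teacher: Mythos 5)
Your proposal is correct and follows essentially the same strategy as the paper's proof: reduce to $\mathit{LMSR}(x)=0$ via Lemma~\ref{lemma-bs-scr}, carve out roughly $n/(4C(x))$ disjoint blocks of size $\Theta(C(x))$, and in each block overwrite with a pattern whose zero-run is longer than any run of zeros that $x$ can have (using the observation that $C(x)=c$ forbids $0^{c+1}$), placed at a position whose parity forces $\mathit{LMSR}_0$ to flip. The only substantive difference is the pattern itself: the paper writes $1\,0^{2C(x)}$ and lets the run bleed rightward into the unmodified part of $x$, relying on $2C(x)$ strictly exceeding $x$'s maximal zero-run, whereas you write the tighter $1\,0^{c+1}\,1$ and fill the remainder of the block with $1$'s, which makes the uniqueness of the winning run slightly more explicit but requires you to check the right bracket fits inside the block. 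One small remark: your closing intuition that the block needs a $\Theta(c)$ ``buffer'' of $1$'s to absorb runs from the adjacent portion of $x$ is not actually what is being used -- a single $1$ at each block boundary already separates runs -- and in fact the left buffer in your construction is only $1$ or $2$ characters; the real reason for block size $4c$ is simply to make $m=\lfloor n/(4c)-1/4\rfloor$ match the statement. This does not affect the correctness of the argument.
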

    \begin{proof} We first note that inequality (\ref{inequality-bs}) is trivially true when $C(x) > n/5$ because the right hand side is equal to $0$. For the case of $C(x) \leq n/5$, our proof is carried out in two steps:

    {\vskip 3pt}

    \textbf{Step 1}.
        Let us start from the special case of $\operatorname{LMSR}(x) = 0$. Note that $\operatorname{LMSR}_0(x) = 0$.
Let $B = C(x)$. We split $x$ into $(k + 1)$ substrings $x = y_1 y_2 \dots y_k y_{k+1}$, where $\abs{y_i} = B$ for $1 \leq i \leq k$, $\abs{y_{k+1}} = n \bmod B$, and $k = \floor{n/B}$. By the assumption that $C(x) = B$, we have $y_1 < \min\{y_2, y_3, \dots, y_k\}$. It holds that $y_1y_2 > 0^{2B}$; otherwise, $y_1 = y_2 = 0^B$, and then a contradiction $C(x) > B$ arises.

        Let $m = \floor{(k-1)/4}$. We select some of $y_i$s and divide them into $m$ groups (and the others are ignored). For every $1 \leq i \leq m$, the $i$-th group is $z_i = y_{4i-2} y_{4i-1} y_{4i} y_{4i+1}$. Let $L_i$ be the number of characters in front of $z_i$. Then $L_i = (4i-3)B$. We claim that $\mathit{bs}_x(\operatorname{LMSR}_0) \geq m$ by explicitly constructing $m$ disjoint sets $S_1, S_2, \dots, S_m \subseteq [n]$ such that $\operatorname{LMSR}_0(x) \neq \operatorname{LMSR}_0\left(x^{S_i}\right)$ for $1 \leq i \leq m$:
        \begin{enumerate}
          \item If $L_i$ is even, then we define:
          \[
            S_i = \{ L_i \leq j \leq L_i+2B: x_j \neq \delta_{j, L_i} \},
          \]
          where $\delta_{x, y}$ is the Kronecker delta, that is, $\delta_{x, y}=1$ if $x = y$ and $0$ otherwise.
          \item If $L_i$ is odd, then we define:
          \[
            S_i = \{ L_i+1 \leq j \leq L_i+2B+1: x_j \neq \delta_{j, L_i+1} \}.
          \]
        \end{enumerate}
        Note that $S_i \neq \emptyset$, and $0^{2B}$ is indeed the substring of $x^{S_i}$ that starts at the index $L_i+1$ if $L_i$ is even and at $L_i+2$ if $L_i$ is odd. That is,
        \[
            \operatorname{LMSR}\left(x^{S_i}\right) = \begin{cases}
                L_i+1 & L_i \equiv 0 \pmod 2 \\
                L_i+2 & L_i \equiv 1 \pmod 2
            \end{cases}.
        \]
        Then we conclude that $\operatorname{LMSR}_0\left(x^{S_i}\right) = 1$ for $1 \leq i \leq m$. Consequently,
        \[
        \mathit{bs}_x(\operatorname{LMSR}_0) \geq m = \floor*{\frac {k-1} 4} = \floor*{ \frac {n} {4B} - \frac 1 4 }.
        \]

        \textbf{Step 2}. Now we remove the condition that $\operatorname{LMSR}(x) = 0$ in Step 1. Let $r = \operatorname{LMSR}_0(x)$ and we consider the binary string $x^{(r)}$. Note that $\operatorname{LMSR}\left(x^{(r)}\right) = 0$. By Lemma \ref{lemma-bs-scr}, we have:
        \[
            \mathit{bs}_x(\operatorname{LMSR}_0) = \mathit{bs}_{x^{(r)}}(\operatorname{LMSR}_0) \geq \floor*{ \frac {n} {4B} - \frac 1 4 }.
        \]
        Therefore, inequality (\ref{inequality-bs}) holds for all $x \in \{0, 1\}^n$ with $C(x) \leq n/5$.
    \end{proof}

    We remark that inequality (\ref{inequality-bs}) can be slightly improved:
    \[
        \mathit{bs}_x(\operatorname{LMSR}_0) \geq \floor*{\frac{n-1}{2C(x)+2}},
    \]
    by splitting $x$ more carefully. However, Lemma \ref{lemma-bs-lowerbound} is sufficient for our purpose. With it, we obtain a lower bound of the expected value of $\mathit{bs}_x(\operatorname{LMSR}_0)$ when $x$ is uniformly distributed:

    \begin{lemma}
        Let $\mathit{unif}: \{0, 1\}^n \to [0, 1]$ be the uniform distribution that $\mathit{unif}(x) = 2^{-n}$ for every $x \in \{0, 1\}^n$. Then
        \begin{align*}
            \mathbb{E}_{x \sim \mathit{unif}}\left[\mathit{bs}_x(\operatorname{LMSR}_0)\right] & = \Omega\left(n/\log n\right), \\
            \mathbb{E}_{x \sim \mathit{unif}}\left[\sqrt{\mathit{bs}_x(\operatorname{LMSR}_0)}\right] & = \Omega\left(\sqrt{n/\log n}\right).
        \end{align*}
    \end{lemma}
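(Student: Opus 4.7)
The plan is to combine Lemma \ref{lemma-bs-lowerbound} (which lower-bounds $\mathit{bs}_x(\mathit{LMSR}_0)$ in terms of $C(x)$) with Lemma \ref{lemma-prob} (which shows that a uniformly random string has low string sensitivity with overwhelming probability). Specifically, I would pick $B = \ceil{3 \log_2 n}$, which in the binary case ($\alpha = 2$) fits the hypothesis of Lemma \ref{lemma-prob} for large enough $n$, yielding $\Pr_{x \sim \mathit{unif}}[C(x) \leq B] \geq 1 - 1/n$.

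Next, on the event $\mathcal{E} = \{C(x) \leq B\}$, Lemma \ref{lemma-bs-lowerbound} gives
\[
    \mathit{bs}_x(\mathit{LMSR}_0) \;\geq\; \floor{\frac{n}{4C(x)} - \frac{1}{4}} \;\geq\; \floor{\frac{n}{4B} - \frac{1}{4}} \;=\; \Omega\!\left(\frac{n}{\log n}\right),
\]
where the middle inequality uses that the lower bound is monotone decreasing in $C(x)$. This lower bound holds pointwise on $\mathcal{E}$.

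Therefore, dropping the contribution from the complement event (which is nonnegative), I obtain
\[
    \mathbb{E}_{x \sim \mathit{unif}}\left[\mathit{bs}_x(\mathit{LMSR}_0)\right] \;\geq\; \Pr[\mathcal{E}] \cdot \Omega\!\left(\frac{n}{\log n}\right) \;\geq\; \left(1 - \frac{1}{n}\right) \cdot \Omega\!\left(\frac{n}{\log n}\right) \;=\; \Omega\!\left(\frac{n}{\log n}\right),
\]
and the same argument applied to $\sqrt{\mathit{bs}_x(\mathit{LMSR}_0)} \geq \Omega(\sqrt{n/\log n})$ on $\mathcal{E}$ yields the second bound. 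I do not expect any genuine obstacle here: the only minor care needed is to verify that the hypothesis $B \leq n/2$ of Lemma \ref{lemma-prob} and the regime $C(x) \leq n/5$ of Lemma \ref{lemma-bs-lowerbound} both hold once $n$ is sufficiently large, so that both lemmas apply simultaneously on $\mathcal{E}$; the small-$n$ cases are absorbed into the asymptotic notation.
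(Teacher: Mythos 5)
Your proposal is correct and follows essentially the same approach as the paper: both combine Lemma~\ref{lemma-prob} and Lemma~\ref{lemma-bs-lowerbound} with $B = \ceil{3\log_2 n}$, restrict attention to the high-probability event $C(x)\leq B$, and drop the nonnegative contribution from the complement. The paper writes this out as a sum over string-sensitivity values while you phrase it as conditioning on the event $\mathcal{E}$, but these are the same argument; your small caveat about the $C(x)\leq n/5$ regime is unnecessary since Lemma~\ref{lemma-bs-lowerbound} holds for all $x$ (it is merely vacuous when $C(x) > n/5$).
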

    \begin{proof}
        By Lemma \ref{lemma-prob} and Lemma \ref{lemma-bs-lowerbound}, we have:
        \begin{align*}
            \mathbb{E}_{x \sim \mathit{unif}} [ \mathit{bs}_x(\operatorname{LMSR}_0) ]
            & = \sum_{x \in \{0,1\}^n} 2^{-n} \mathit{bs}_x(\operatorname{LMSR}_0) \\
            & \geq \sum_{x \in \{0,1\}^n} 2^{-n} \floor*{ \frac {n} {4C(x)} - \frac 1 4 } \\
            & = \sum_{B = 1}^n \sum_{x \in \{0,1\}^n: C(x) = B} 2^{-n} \floor*{ \frac {n} {4B} - \frac 1 4 } \\
            & \geq \sum_{B = 1}^{\ceil{3 \log_2 n}} \sum_{x \in \{0,1\}^n: C(x) = B} 2^{-n} \floor*{ \frac {n} {4B} - \frac 1 4 } \\
            & \geq \sum_{B = 1}^{\ceil{3 \log_2 n}} \sum_{x \in \{0,1\}^n: C(x) = B} 2^{-n} \floor*{ \frac {n} {4 \ceil{3 \log_2 n}} - \frac 1 4 } \\
            & = \floor*{ \frac {n} {4 \ceil{3 \log_2 n}} - \frac 1 4 } \Pr_{x \sim \mathit{unif}}[C(x) \leq \ceil{3 \log_2 n}] \\
            & \geq \floor*{ \frac {n} {4 \ceil{3 \log_2 n}} - \frac 1 4 } \left( 1 - \frac 1 n \right) \\
            & = \Omega\left(\frac{n}{\log n}\right).
        \end{align*}
        A similar argument yields that $\mathbb{E}_{x \sim \mathit{unif}}\left[\sqrt{\mathit{bs}_x(\operatorname{LMSR}_0)}\right] = \Omega\left(\sqrt{n/\log n}\right)$.
    \end{proof}

    By combining the above lemma with Theorem \ref{thm-amb}, we obtain lower bounds for randomized and quantum average-case bounded-error algorithms for LMSR:
    $$R^{\mathit{unif}}(\operatorname{LMSR}_0) = \Omega(n/\log n)\ {\rm and}\ Q^{\mathit{unif}}(\operatorname{LMSR}_0) = \Omega\left(\sqrt{n/\log n}\right).$$

    \subsection{Worst-Case Lower Bounds}\label{worst-case-section}

    Now we turn to consider the worst-case lower bounds. The idea is similar to the average case. First, the following result  similar to Theorem \ref{thm-amb} was also proved in \cite{Amb99}.

    \begin{theorem} [\hspace{1sp}\cite{Amb99}] \label{thm-amb-2}
        Let $A$ be a bounded-error algorithm for some function $f: \{0, 1\}^n \to \{0, 1\}$. \begin{enumerate}\item If $A$ is classical, then $T_A(x) = \Omega\left(\mathit{bs}_x(f)\right)$; and \item If $A$ is quantum, then $T_A(x) = \Omega\left(\sqrt{\mathit{bs}_x(f)}\right)$.\end{enumerate}
    \end{theorem}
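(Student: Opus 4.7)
The plan is to prove both parts by an input-sensitive adversary argument: fix $x$, let $m = \mathit{bs}_x(f)$ with witnessing disjoint blocks $S_1,\dots,S_m \subseteq [n]$, and amortize the work needed to distinguish $x$ from each $x^{S_i}$ across all $m$ blocks. The key leverage is that $f(x) \neq f(x^{S_i})$, so a bounded-error algorithm cannot have nearly identical behavior on $x$ and on $x^{S_i}$; I just need to charge a unit of ``distinguishing effort'' to each $S_i$ and exploit disjointness.

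\textbf{Classical case.} For each $i$, let $p_i$ be the probability that $A$ on input $x$ queries at least one bit of $S_i$ during its (possibly adaptive) random run. Coupling the runs of $A$ on $x$ and on $x^{S_i}$ until the first query inside $S_i$ gives
$\bigl|\Pr[A(x) = f(x)] - \Pr[A(x^{S_i}) = f(x)]\bigr| \leq p_i$;
bounded error on both sides forces the left-hand side to be $\geq 1/3$, so $p_i \geq 1/3$. Because the $S_i$ are pairwise disjoint, the expected number of queries on $x$ satisfies $T_A(x) \geq \sum_{i=1}^m p_i \geq m/3$, which is the desired bound.

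\textbf{Quantum case.} I would invoke the Bennett--Bernstein--Brassard--Vazirani hybrid argument. First use Markov's inequality to truncate $A$ at $T = c \cdot T_A(x)$ queries for a large constant $c$, losing only a small constant in success probability. Define, for each block $S_i$, the query mass $q_i = \sum_{t=1}^{T} \sum_{j \in S_i} |\alpha^x_{t,j}|^2$, where $\alpha^x_{t,j}$ is the amplitude of the algorithm on input $x$ placed on a basis state querying index $j$ just before the $t$-th oracle call. A standard telescoping argument combined with Cauchy--Schwarz yields $\Abs{\ket{\psi_T^x} - \ket{\psi_T^{x^{S_i}}}} \leq 2\sqrt{T q_i}$. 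Since $A$ distinguishes $x$ from $x^{S_i}$ with bounded error, this distance must be $\Omega(1)$, so $q_i = \Omega(1/T)$. Summing over the disjoint blocks and using $\sum_i q_i \leq T$ gives $T \geq \sum_{i=1}^m q_i = \Omega(m/T)$, hence $T = \Omega(\sqrt{m})$, as required.

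\textbf{Main obstacle.} The delicate point is that $T_A(x)$ is the \emph{expected} (not worst-case) query count on the specific input $x$, whereas the hybrid bound is naturally stated for a fixed-length unitary algorithm. The cleanest remedy is the Markov truncation indicated above, but one must verify that the truncated algorithm still distinguishes $x^{S_i}$ from $x$ with constant probability after accounting for both the original bounded-error slack and the Markov loss, and that the truncation can be implemented coherently (e.g.\ by a counter qubit halting further queries). Once this bookkeeping is done, the rest is the textbook hybrid computation; no further novelty is required.
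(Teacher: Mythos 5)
This theorem is imported from \cite{Amb99}; the paper gives no proof of its own, so there is nothing internal to compare against. Your reconstruction --- a coupling argument for the classical bound and the BBBV hybrid argument for the quantum bound, both amortized over the $m = \mathit{bs}_x(f)$ disjoint sensitive blocks --- is the standard and correct route, and the classical half is complete as written.

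For the quantum half you correctly isolate the one genuine subtlety, namely reconciling the \emph{expected} per-input query count $T_A(x)$ with the fixed-length unitary model on which the hybrid bound operates, and you correctly propose truncating at $T = c\,T_A(x)$ by Markov. But you leave the essential verification as an assertion, and the naive version actually fails: $T_A(x^{S_i})$ may be arbitrarily large, so the truncated $A_T$ may time out with high probability on $x^{S_i}$, and with an arbitrary default output on timeout you have no lower bound on $\bigl|\Pr[A_T(x)=f(x)]-\Pr[A_T(x^{S_i})=f(x)]\bigr|$. Two clean fixes exist. (a) Set the timeout output to the constant $1-f(x)$; then $\Pr[A_T(x^{S_i})=f(x)] \le \Pr[A(x^{S_i})=f(x)] \le 1/3$ unconditionally, while $\Pr[A_T(x)=f(x)] \ge 2/3-1/c$, so the gap is $\ge 1/3-1/c = \Omega(1)$. (b) Keep the ``still running'' flag coherent; since timing out is then an observable of the final state, the same hybrid bound gives $\Pr[\text{timeout on }x^{S_i}] \le 1/c + 2\sqrt{Tq_i}$, hence $\Pr[A_T(x^{S_i})=f(x^{S_i})] \ge 2/3 - 1/c - 2\sqrt{Tq_i}$, and chasing the inequalities yields $4\sqrt{Tq_i} \ge 1/3 - 2/c$. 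Either way $q_i = \Omega(1/T)$ for every $i$, and together with $\sum_i q_i \le T$ (block disjointness) this gives $T = \Omega(\sqrt m)$ and thus $T_A(x) = \Omega\bigl(\sqrt{\mathit{bs}_x(f)}\bigr)$. With this filled in, the argument is sound.
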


    We still consider the function $\operatorname{LMSR}_0$ in this subsection. The following lemma shows that its block sensitivity can be linear in the worst case.

    \begin{lemma}
        There is a string $x \in \{0, 1\}^n$ such that $\mathit{bs}_x(\operatorname{LMSR}_0) \geq \floor{n/2}$.
    \end{lemma}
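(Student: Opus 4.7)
The plan is to take the all-zeros string $x = 0^n$ as the witness. Because every rotation of $0^n$ equals $0^n$, the tie-breaking rule of Definition~\ref{def:lmsr} gives $\mathit{LMSR}(x) = 0$, hence $\mathit{LMSR}_0(x) = 0$. For each index $i \in [n]$, I would take the singleton flip $S_i = \{i\}$: the string $x^{S_i}$ has a single $1$ at position $i$ and $0$s elsewhere. A direct computation using $s^{(k)}[j] = x[(k+j) \bmod n]$ shows that the unique ``$1$'' of $x^{S_i}$ lies at position $(i - k) \bmod n$ of the rotation $s^{(k)}$; pushing it into the rightmost slot $n-1$ minimises the rotation lexicographically, so $\mathit{LMSR}(x^{S_i}) = (i+1) \bmod n$ (no tie arises since the $1$ is unique).

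Consequently $\mathit{LMSR}_0(x^{S_i}) = 1 \ne \mathit{LMSR}_0(x)$ precisely when $(i+1) \bmod n$ is odd, which holds for every even $i$ with $i \le n-2$. The singleton family $\{S_i : i \text{ even},\ 0 \le i \le n-2\}$ is trivially disjoint, and each of its members flips the value of $\mathit{LMSR}_0$. Counting these qualifying even indices gives $n/2$ when $n$ is even and $(n-1)/2$ when $n$ is odd, i.e.\ $\floor{n/2}$ in both cases, yielding $\mathit{bs}_x(\mathit{LMSR}_0) \ge \floor{n/2}$.

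The only subtlety worth flagging is the wraparound case $i = n-1$ (even when $n$ is odd): there $(i+1) \bmod n = 0$ is even, so flipping this bit does not change $\mathit{LMSR}_0$ and the index must be discarded. Excluding it still leaves $\floor{n/2}$ useful singletons, so no real obstacle arises; the whole argument is a short self-contained construction requiring no deeper machinery than the definition of $\mathit{LMSR}$.
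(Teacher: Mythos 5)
Your proof is correct and takes essentially the same approach as the paper: both use a constant string with singleton bit-flips at positions of a fixed parity to change the parity of $\mathit{LMSR}$. The paper uses $x = 1^n$ with flips at odd positions, which gives $\mathit{LMSR}(x^{S_i})$ equal to the flipped position directly and avoids the modular wraparound case you had to handle with $x = 0^n$, but the two constructions are otherwise interchangeable.
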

    \begin{proof}
        Let $x = 1^n$. Then $\operatorname{LMSR}(x) = 0$ and $\operatorname{LMSR}_0(x) = 0$. We can choose $m = \floor{n/2}$ disjoint sets $S_1, S_2, \dots, S_m$ with $S_i = \{ 2i-1 \}$. It may be easily verified that $\operatorname{LMSR}_0\left(x^{S_i}\right) = 1$ for every $1 \leq i \leq m$. Thus, by the definition of block sensitivity, we have $\mathit{bs}_x(\operatorname{LMSR}_0) \geq \floor{n/2}$.
    \end{proof}

    Combining the above lemma with Theorem \ref{thm-amb-2}, we conclude that $R(\operatorname{LMSR}_0) = \Omega(n)$ and $Q(\operatorname{LMSR}_0) = \Omega\left(\sqrt{n}\right)$, which give a lower bound for randomized and one for quantum worst-case bounded-error algorithms for LMSR, respectively. We have another more intuitive proof for the worst-case lower bound for quantum bounded-error algorithms and postpone into Appendix \ref{appendix-a}.

     \section{Applications} \label{sec:app}

    In this section, we present some practical applications of our quantum algorithm for LMSR.

    \subsection{Benzenoid Identification}

    The first application of our algorithm is a quantum solution to a problem about chemical graphs. Benzenoid hydrocarbons are a very important class of compounds \cite{Dia87, Dia88} and also popular as mimics of graphene (see \cite{Was08, Wan08, Pap16}).
    Several algorithmic solutions to the identification problem of benzenoids have been proposed in the previous literature; for example, Ba\v{s}i\'{c} \cite{Bas16} identifies benzenoids by boundary-edges code \cite{Han96} (see also \cite{Kov14}).

    Formally, the boundary-edges code (BEC) of a benzenoid is a finite string over a finite alphabet $\Sigma_6 = \{ 1, 2, 3, 4, 5, 6 \}$. The canonical BEC of a benzenoid is essentially the lexicographically maximal string among all rotations of any of its BECs and their reverses. Our quantum algorithm for LMSR can be used to find the canonical BEC of a benzenoid in $O\left(n^{3/4}\right)$ queries, where $n$ is the length of its BEC.

    More precisely, it is equivalent to find the lexicographically minimal one if we assume that the lexicographical order is $6 < 5 < 4 < 3 < 2 < 1$.
    Suppose a benzenoid has a BEC $s$. Our quantum algorithm is described as follows:
    \begin{enumerate}
      \item[1] Let $i = \operatorname{LMSR}(s)$ and $i^R = \operatorname{LMSR}(s^R)$, where $s^R$ denotes the reverse of $s$. This is achieved by Algorithm \ref{algo-main} in $O(n^{3/4})$ query complexity.
      \item[2] Return the smaller one between $s[i \dots i+n-1]$ and $s^R[i^R \dots i^R+n-1]$. This is achieved by Algorithm \ref{algo-lexcmp} in $O(\sqrt{n})$ query complexity.
    \end{enumerate}
    It is straightforward to see that the overall query complexity is $O(n^{3/4})$.

    \subsection{Disjoint-Cycle Automata Minimization}

    Another application of our algorithm is a quantum solution to minimization of a special class of automata. Automata minimization is an important problem in automata theory \cite{Hop00, Ber10} and has many applications in various areas of computer science. The best known algorithm for minimizing deterministic automata is $O(n \log n)$ \cite{Hop71}, where $n$ is the number of states. A few linear algorithms for minimizing some special automata are proposed in \cite{Rev92, Alm08}, which are important in practice, e.g., dictionaries in natural language processing.

    We consider the minimization problem of disjoint-cycle automata discussed by Almeida and Zeitoun in \cite{Alm08}. The key to this problem is a decision problem that checks whether there are two cycles that are equal to each other under rotations. Formally, suppose there are $m$ cycles, which are described by strings $s_1, s_2, \dots, s_m$ over a finite alphabet $\Sigma$. It is asked whether there are two strings $s_i$ and $s_j$ ($i \neq j$) such that $\operatorname{SCR}(s_i) = \operatorname{SCR}(s_j)$. For convenience, we assume that all strings are of equal length $n$, i.e. $\abs{s_1} = \abs{s_2} = \dots = \abs{s_m} = n$.

   A classical algorithm solving the above decision problem was developed in \cite{Alm08} with time complexity $O(mn)$. With the help of our quantum algorithm for LMSR, this problem can be solved more efficiently. We employ a quantum comparison oracle $U_\text{cmp}$ that compares strings by their canonical representations in the lexicographical order, where the corresponding classical comparator is:
    \[
        \text{cmp}(i, j) = \begin{cases}
            1 & \operatorname{SCR}(s_i) < \operatorname{SCR}(s_j), \\
            0 & \text{otherwise},
        \end{cases}
    \]
    and can be computed by finding $r_i = \operatorname{LMSR}(s_i)$ and $r_j = \operatorname{LMSR}(s_j)$. In particular, it can be done by our quantum algorithm for LMSR in $O\left(n^{3/4}\right)$ queries. Then the lexicographical comparator in Algorithm \ref{algo-lexcmp} can be use to compare $s_i[r_i \dots r_i + n - 1]$ and $s_j[r_j \dots r_j + n - 1]$ in query complexity $O\left(n^{3/4}\right)$. Furthermore, the problem of checking whether there are two strings that are equal to each other under rotations among the $m$ strings may be viewed as the element distinctness problem with quantum comparison oracle $U_\text{cmp}$, and thus can be solved by Ambainis's quantum algorithm \cite{Amb07} with $\tilde O\left(m^{2/3}\right)$ queries to $U_\text{cmp}$. In conclusion, the decision problem can be solved in quantum time complexity $\tilde O\left(m^{2/3}n^{3/4}\right)$, which is better than the best known classical $O(mn)$ time.

\section*{Acknowledgment}

We would like to thank the anonymous referees for their valuable comments and suggestions, which
helped to improve this paper.
Qisheng Wang would like to thank Fran{\c{c}}ois Le Gall for helpful discussions in an earlier version of this paper. 

This work was supported in part by the National Natural Science Foundation of China under Grant 61832015.
Qisheng Wang was also supported in part by the MEXT Quantum Leap Flagship Program (MEXT Q-LEAP) grants No.~JPMXS0120319794.

\addcontentsline{toc}{section}{References}

\bibliographystyle{alpha}
\bibliography{main}

\appendix
\section{Probability Analysis of Majority Voting} \label{appendix-maj-voting}

For convenience of the reader, let us first recall the Hoeffding's inequality \cite{Hoe63}.

\begin{theorem} [The Hoeffding's inequality, \cite{Hoe63}] \label{thm-hoeffding}
    Let $X_1, X_2, \dots, X_n$ be independent random variables with $0 \leq X_i \leq 1$ for every $1 \leq i \leq n$. We define the empirical mean of these variables by
    \[
        \bar X = \frac 1 n \sum_{i=1}^{n} X_i.
    \]
    Then for every $\varepsilon > 0$, we have:
    \[
        \Pr[\bar X \leq \mathbb{E}[\bar X] - \varepsilon] \leq \exp(-2n\varepsilon^2).
    \]
\end{theorem}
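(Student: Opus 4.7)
The plan is to establish Theorem \ref{thm-hoeffding} via the standard Chernoff--Cram\'er exponential moment method, isolating the one nontrivial ingredient (Hoeffding's lemma) as an intermediate step. Let $Y_i = X_i - \mathbb{E}[X_i]$, so each $Y_i$ is a centered random variable with $Y_i \in [a_i, b_i]$ where $b_i - a_i \leq 1$, and $\bar X - \mathbb{E}[\bar X] = \frac{1}{n} \sum_i Y_i$. The event $\bar X \leq \mathbb{E}[\bar X] - \varepsilon$ is equivalent to $-\sum_i Y_i \geq n\varepsilon$, so for any $s > 0$, Markov's inequality applied to $\exp(-s \sum_i Y_i)$ yields
\[
    \Pr\!\left[\bar X \leq \mathbb{E}[\bar X] - \varepsilon\right] \leq e^{-s n \varepsilon} \, \mathbb{E}\!\left[\exp\!\left(-s \sum_{i=1}^n Y_i\right)\right] = e^{-s n \varepsilon} \prod_{i=1}^n \mathbb{E}\!\left[e^{-s Y_i}\right],
\]
where the factorization uses independence.

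The main obstacle, and the only nonroutine step, is bounding the moment generating function of each $Y_i$. I would prove the following form of Hoeffding's lemma: for any centered random variable $Y$ supported in an interval of length $L$, $\mathbb{E}[e^{s Y}] \leq \exp(s^2 L^2 / 8)$ for every $s \in \mathbb{R}$. The clean proof uses convexity of $y \mapsto e^{sy}$ on $[a,b]$ to write $e^{sy} \leq \frac{b-y}{b-a} e^{sa} + \frac{y-a}{b-a} e^{sb}$, take expectations (using $\mathbb{E}[Y]=0$), and then show that the logarithm of the resulting upper bound, viewed as a function of $u = s(b-a)$, has second derivative at most $1/4$ everywhere, so a Taylor expansion around $u=0$ gives the claimed bound $\exp(s^2(b-a)^2/8)$.

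Plugging this into the previous display with $L = b_i - a_i \leq 1$ gives
\[
    \Pr\!\left[\bar X \leq \mathbb{E}[\bar X] - \varepsilon\right] \leq \exp\!\left(-s n \varepsilon + \frac{s^2}{8} \sum_{i=1}^n (b_i - a_i)^2\right) \leq \exp\!\left(-s n \varepsilon + \frac{s^2 n}{8}\right).
\]
Finally I would optimize over $s > 0$: the exponent is minimized at $s = 4\varepsilon$, producing the exponent $-2 n \varepsilon^2$ and thus the stated bound $\exp(-2 n \varepsilon^2)$. No additional machinery is needed, and the argument is symmetric (applying it to $-X_i$ gives the matching upper-tail bound, though only the lower-tail form is used in the main text's majority-voting analysis).
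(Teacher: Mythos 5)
The paper states this theorem as a cited black-box result from \cite{Hoe63} and does not give a proof, so there is no in-paper argument to compare against. Your proposal is the standard and correct derivation: the Chernoff exponential-moment bound, factorization by independence, Hoeffding's lemma $\mathbb{E}[e^{sY}] \leq \exp(s^2(b-a)^2/8)$ for centered bounded $Y$, and optimization at $s = 4\varepsilon$ (using $b_i - a_i \leq 1$) to get exactly $\exp(-2n\varepsilon^2)$, which is precisely the argument in the original reference.
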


We only consider the case where $b$ is true.
In this case, we have $q = \ceil{36 \ln m}$ independent random variables $X_1, X_2, \dots, X_q$, where for $1 \leq i \leq q$, $X_i = 0$ with probability $\leq 1/3$ and $X_i = 1$ with probability $\geq 2/3$ (Here, $X_i = 1$ means the $i$-th query is true and $0$ otherwise). It holds that $\mathbb{E}[\bar X] \geq 2/3$. By Theorem \ref{thm-hoeffding} and letting $\varepsilon = 1/6$, we have:
\[
    \Pr\left[\bar X \leq \frac 1 2\right] \leq \Pr[\bar X \leq \mathbb{E}[\bar X] - \varepsilon] \leq \exp(-2q\varepsilon^2) \leq \exp(-2\ln m) = \frac 1 {m^2}.
\]
By choosing $\hat b$ to be true if $\bar X > 1/2$ and false otherwise, we obtain $\Pr\left[\hat b = b\right] = 1 - \Pr\left[\bar X \leq 1/2\right] \geq 1-1/m^2$ as required in Algorithm \ref{algo-min-find}.

\section{A Framework of Nested Quantum Algorithms}\label{appendix-nested}

    In this appendix, we provide a general framework to explain how the improvement given in Section \ref{sec:qminimum} can be achieved on nested quantum algorithms composed of quantum search and quantum minimum finding.

    This framework generalizes multi-level AND-OR trees. 
    In an ordinary AND-OR tree, a Boolean value is given at each of its leaves, and each non-leaf node is associated with an AND or OR operation which alternates in each level.
    A query-optimal quantum algorithm for evaluation of constant-depth AND-OR trees was proposed in \cite{Hoy03}. 
    Our framework can be seen as a generalization of fault tolerant quantum search studied in \cite{Hoy03}, which deals with the extended MIN-MAX-AND-OR trees which allow four basic operations and work for non-Boolean cases.
    Moreover, combining the error reduction idea in \cite{Buh99}, we also provide an error reduction for nested quantum algorithms. 
    Our results will be formally stated later in Lemma \ref{lemma-nested}.

    Suppose there is a $d$-level nested quantum algorithm composed of $d$ bounded-error quantum algorithms $A_1, A_2, \dots, A_d$ on a $d$-dimensional input $x(\theta_{d-1}, \dots, \theta_{2}, \theta_{1}, \theta_{0})$ given by an (exact) quantum oracle $U_0$:
    \[
        U_0 \ket{\theta_{d-1}, \dots, \theta_{2}, \theta_{1}, \theta_{0}, j} = U_0 \ket{\theta_{d-1}, \dots, \theta_{2}, \theta_{1}, \theta_{0}, j \oplus x(\theta_{d-1}, \dots, \theta_{2}, \theta_{1}, \theta_{0})},
    \]
    where $\theta_k \in [n_k]$ for $k \in [d]$ and $d \geq 2$.
For $1 \leq k \leq d$, $A_k$ is a bounded-error quantum algorithm given parameters $\theta_{d-1}, \dots, \theta_{k}$ that computes the function $f_k(\theta_{d-1}, \dots, \theta_k)$. In particular, $f_0 \equiv x$, and $A_d$ computes a single value $f_d$, which is considered to be the output of the nested quantum algorithm $A_1, A_2, \dots, A_d$.

Let $t_k \in \{s, m\}$ denote the type of $A_k$, where $s$ indicates the search problem and $m$ indicates the minimum finding problem. Let $S_m = \{ 1 \leq k \leq d: t_k = m \}$ denote the set of indices of the minimum finding algorithms. The behavior of algorithm $A_k$ is defined as follows:
    \begin{enumerate}
      \item \textbf{Case 1.} $t_k = s$. Then $A_k$ is associated with a checker $p_k(\theta_{d-1}, \dots, \theta_k, \xi)$ that determines whether $\xi$ is a solution (which returns $1$ for ``yes'' and $0$ for ``no''). $A_k$ is to find a solution $i \in [n_{k-1}]$ such that $$p_k(\theta_{d-1}, \dots, \theta_k, f_{k-1}(\theta_{d-1}, \dots, \theta_k, i)) = 1$$ (which returns $-1$ if no solution exists). Here, computing $p_k(\theta_{d-1}, \dots, \theta_k, f_{k-1}(\theta_{d-1}, \dots, \theta_k, i))$ requires the value of $f_{k-1}(\theta_{d-1}, \dots, \theta_k, i)$, which in turn requires a constant number of queries to $A_{k-1}$ with parameters $\theta_{d-1}, \dots, \theta_k, i$.
      \item \textbf{Case 2}. $t_k = m$. Then $A_k$ is associated with a comparator $c_k(\theta_{d-1}, \dots, \theta_k, \alpha, \beta)$ that compares $\alpha$ and $\beta$. $A_k$ is to find the index $i \in [n_{k-1}]$ of the minimal element such that
          $$c_k(\theta_{d-1}, \dots, \theta_k, f_{k-1}(\theta_{d-1}, \dots, \theta_{k}, j), f_{k-1}(\theta_{d-1}, \dots, \theta_{k}, i)) = 0$$
          for all $j \neq i$. Here, computing $c_k(\theta_{d-1}, \dots, \theta_k, f_{k-1}(\theta_{d-1}, \dots, \theta_{k}, j), f_{k-1}(\theta_{d-1}, \dots, \theta_{k}, i))$ requires the value of $f_{k-1}(\theta_{d-1}, \dots, \theta_{k}, j)$ and $f_{k-1}(\theta_{d-1}, \dots, \theta_{k}, i)$, which requires a constant number of queries to $A_{k-1}$ with parameters $\theta_{d-1}, \dots, \theta_k, j$ and $\theta_{d-1}, \dots, \theta_k, i$, respectively.
    \end{enumerate}

    The following lemma settles the query complexity of nested quantum algorithms, which shows that nested algorithms can do much better than naively expected.

    \begin{lemma} [Nested quantum algorithms] \label{lemma-nested}
        Given a $d$-level nested quantum algorithm  $A_1, A_2, \dots, A_d$, where $d \geq 2$, and $n_0, n_1, \dots, n_{d-1}$, $S_m, f_d$ are defined as above. Then:
        \begin{enumerate}\item There is a bounded-error quantum algorithm that computes $f_d$ with query complexity
        \[
            O\left( \sqrt{\prod_{k=0}^{d-1}  n_k} \right).
        \]
        \item There is a quantum algorithm that computes $f_d$ with error probability $\leq \varepsilon$ with query complexity
        \[
            O\left( \sqrt{\prod_{k=0}^{d-1}  n_k \log \frac 1 \varepsilon} \right).
        \]
        \end{enumerate}
    \end{lemma}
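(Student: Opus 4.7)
My plan is to prove both parts by induction on the level $k = 1, 2, \dots, d$, maintaining the invariant that the algorithm $A_k$ obtained at level $k$ is a \emph{bounded-error} quantum algorithm computing $f_k$ in $T_k = O\bigl(\sqrt{\prod_{j=0}^{k-1} n_j}\bigr)$ queries to the base oracle $U_0$. The entire argument then reduces to showing that one extra level contributes only a $\sqrt{n_{k-1}}$ multiplicative factor, and that the error does not have to be suppressed below $1/3$ as we climb the nest.

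For the base case $k = 1$, the algorithm $A_1$ is either a search (with checker $p_1$) or a minimum finding (with comparator $c_1$) on the exact oracle $U_0$ of $x = f_0$; by the standard Grover algorithm and Theorem \ref{thm-find-min} this costs $O(\sqrt{n_0})$ queries and has success probability $\geq 2/3$, as required. For the inductive step, assume $A_{k-1}$ is a bounded-error algorithm of cost $T_{k-1}$. Fixing the parameters $\theta_{d-1},\dots,\theta_k$, I would regard $A_{k-1}$ (with the remaining parameter as its input) as a \emph{bounded-error oracle} for the $n_{k-1}$-input function $i \mapsto f_{k-1}(\theta_{d-1},\dots,\theta_k, i)$, at a per-call cost of $T_{k-1}$. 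Since $p_k$ and $c_k$ each require only $O(1)$ invocations of $A_{k-1}$, we obtain bounded-error oracles for the checker/comparator of the level-$k$ problem. Applying Theorem \ref{thm-search-bounded-error-oracle} (if $t_k = s$) or Lemma \ref{lemma-algo-min-find} (if $t_k = m$) then yields a bounded-error algorithm $A_k$ that makes $O(\sqrt{n_{k-1}})$ queries to this oracle, so
\[
    T_k \;=\; O(\sqrt{n_{k-1}}) \cdot T_{k-1} \;=\; O\!\left(\sqrt{\textstyle\prod_{j=0}^{k-1} n_j}\right).
\]
Taking $k = d$ gives Part 1.

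For Part 2, I would run the nested scheme unchanged through levels $1, \dots, d-1$, so that $A_{d-1}$ is still a bounded-error algorithm of cost $O\bigl(\sqrt{\prod_{k=0}^{d-2} n_k}\bigr)$. At the top level, however, I would replace the bounded-error routine by its error-reduced counterpart: Theorem \ref{thm-amp-search} for search, or Lemma \ref{lemma-algo-min-find-amp} for minimum finding, each run on the bounded-error oracle $A_{d-1}$. This makes $O\bigl(\sqrt{n_{d-1} \log(1/\varepsilon)}\bigr)$ oracle queries with final error $\leq \varepsilon$, and multiplying by $T_{d-1}$ gives the claimed bound $O\bigl(\sqrt{\prod_{k=0}^{d-1} n_k \log(1/\varepsilon)}\bigr)$.

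The main conceptual obstacle — and the reason this lemma is nontrivial — is that naive nesting would force us to amplify each intermediate $A_k$ down to error $\ll 1/\sqrt{\prod_j n_j}$ so that a union bound over the outer Grover/D\"urr--H\o yer queries still succeeds, which would cost an extra $\log$ factor per level and thus $\log^{d-1}$ overall. The key is that Theorem \ref{thm-search-bounded-error-oracle} and Lemma \ref{lemma-algo-min-find} already tolerate oracles with error $1/3$ at no asymptotic penalty, so the invariant ``$A_k$ has error $\leq 1/3$'' propagates up the nest for free and the only place a $\sqrt{\log(1/\varepsilon)}$ factor appears is the single outermost amplification step in Part 2. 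The induction should therefore go through cleanly once this invariant is stated correctly.
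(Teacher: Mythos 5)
Your proposal is correct and follows the paper's own approach: the paper's proof of this lemma is literally the one-line citation of Theorems \ref{thm-search-bounded-error-oracle} and \ref{thm-amp-search} together with Lemmas \ref{lemma-algo-min-find} and \ref{lemma-algo-min-find-amp}, and your inductive argument is precisely the intended unpacking of that citation, with the key invariant (bounded error at every intermediate level suffices, because the bounded-error-oracle variants of search and minimum finding tolerate $1/3$ error without extra cost) stated explicitly. The only cosmetic gap is that you should note, as the paper does in passing, that a bounded-error quantum algorithm $A_{k-1}$ is turned into a bounded-error \emph{oracle} of the form required by Theorem \ref{thm-search-bounded-error-oracle} by running it coherently (without intermediate measurement), but this is a standard step that the paper also treats as immediate.
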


\begin{proof}
    Immediately yields by Theorem \ref{thm-search-bounded-error-oracle}, Theorem \ref{thm-amp-search}, Lemma \ref{lemma-algo-min-find} and Lemma \ref{lemma-algo-min-find-amp}.
\end{proof}

 \begin{remark}
        Lemma \ref{lemma-nested} can be seen as a combination of Theorem \ref{thm-search-bounded-error-oracle}, Theorem \ref{thm-amp-search}, Lemma \ref{lemma-algo-min-find} and Lemma \ref{lemma-algo-min-find-amp}. For convenience, we assume $n_0 = n_1 = \dots = n_{d-1} = n$ in the following discussions. Note that traditional probability amplification methods for randomized algorithms usually introduce an $O\left(\log^{d-1} n\right)$ slowdown for $d$-level nested quantum algorithms by repeating the algorithm $O(\log n)$ times in each level. However, our method obtains an extremely better $O(1)$ factor as if there were no errors in oracles at all.
    \end{remark}

    \begin{remark}
        Lemma \ref{lemma-nested} only covers a special case of nested quantum algorithms. A more general form of nested quantum algorithms can be described as a tree rather than a sequence, which allows intermediate quantum algorithms to compute their results by queries to several low-level quantum algorithms. We call them \textbf{adaptively nested quantum algorithms}, and an example of this kind algorithm is presented in Appendix \ref{appendix-matching} for pattern matching (see Figure \ref{fig-nested}).
    \end{remark}

\section{Remarks for Quantum Deterministic Sampling} \label{appendix-deterministic-sampling}

Algorithm \ref{algo-deterministic-sampling} uses several nested quantum algorithms as subroutines, but they are not described as nested quantum algorithms explicitly. Here, we provide an explicit description for Line 10 in Algorithm \ref{algo-deterministic-sampling} as an example. Let $\theta_0 \in [l]$ and $\theta_1 \in [n]$. Then:
\begin{itemize}\item The $0$-level function is
\[
    f_0(\theta_1, \theta_0) = \begin{cases}
        1 & \theta_1 \leq i_{\theta_0} < \theta_1 + n \land s[i_{\theta_0} - \theta_1] \neq s[i_{\theta_0} - \delta], \\
        0 & \text{otherwise},
    \end{cases}
\]
which checks whether candidate $\theta_1$ does not match the current deterministic sample at the $\theta_0$-th checkpoint. \item The $1$-level function is
\[
    f_1(\theta_1) = \begin{cases}
        0 & \theta_1 > Q \lor \exists i, f_0(\theta_1, i) = 1, \\
        1 & \text{otherwise},
    \end{cases}
\]
which checks candidate $\theta_1$ matches the current deterministic sample at all checkpoints. \item The $2$-level function is
\[
    f_2 = \min\left\{ i: f_1(i) = 1 \right\},
\]
which finds the first candidate of $\delta$. \end{itemize}

By Lemma \ref{lemma-nested}, $f_2$ can be computed with error probability $\leq \varepsilon = 1/6m^2$ in

     \[
    O\left(\sqrt{mn \log (1/\varepsilon)}\right) = O\left(\sqrt{n \log n \log \log n}\right).
\]

\section{Quantum Algorithm for Pattern Matching} \label{appendix-matching}

In this appendix, we give a detailed description of our quantum algorithm for pattern matching.

\subsection{Quantum Algorithm for String Periodicity}

The algorithm will be presented in the form of a nested quantum algorithm.
Suppose a string $s \in \Sigma^n$ is given by a quantum oracle $U_s$ that $U_s\ket{i, j} = \ket{i, j\oplus s[i]}$. We are asked to check whether $s$ is periodic, and if yes, find its period.

Let $(\delta; i_0, i_1, \dots, i_{l-1})$ be a deterministic sample of $s$.
We need a $2$-level nested quantum algorithm. Let $\theta_0 \in [l]$ and $\theta_1 \in [n]$. Then:\begin{itemize} \item The $0$-level function is
\[
    f_0(\theta_1, \theta_0) = \begin{cases}
        1 & \theta_1 \leq i_{\theta_0} < \theta_1+n \land s[i_{\theta_0}-\theta_1] = s[i_{\theta_0}-\delta], \\
        0 & \text{otherwise},
    \end{cases}
\]
which checks whether offset $\theta_1$ matches the deterministic sample at the $\theta_0$-th checkpoint. There is obviously an exact quantum oracle that computes $f_0(\theta_1, \theta_0)$ with a constant number of queries to $U_s$.
\item The $1$-level function is
\[
    f_1(\theta_1) = \begin{cases}
        0 & \exists i \in [l], f_0(\theta_1, i) = 1, \\
        1 & \text{otherwise},
    \end{cases}
\]
where $f_1(\theta_1)$ means offset $\theta_1$ matches the deterministic sample of $s$.
The $2$-level function is
\[
    f_2 = \min\{ i \in [n]: f_1(i) = 1 \},
\]
which finds the minimal offset that matches the deterministic sample. \end{itemize}

By Lemma \ref{lemma-nested}, we obtain an $O(\sqrt{nl})$ bounded-error quantum algorithm that finds the smallest possible offset $\delta_1$ of the deterministic sample of $s$. According to $\delta_1$, we define another $2$-level function
\[
    f_2' = \min\{ i \in [n]: i > \delta_1 \land f_1(i) = 1 \},
\]
which finds the second minimal offset that matches the deterministic sample of $s$, where $\min \emptyset = \infty$. Similarly, we can find the second smallest offset $\delta_2$ of the deterministic sample of $s$ in query complexity $O\left(\sqrt{nl} \right)$ with bounded-error. If $\delta_2 = \infty$, then $s$ is aperiodic; otherwise, $s$ is periodic with period $d = \delta_2 - \delta_1$. Therefore, we obtain an $O\left(\sqrt{n \log n}\right)$ bounded-error quantum algorithm that checks whether a string is periodic and, if yes, finds its period.

\subsection{Quantum Algorithm for Pattern Matching}

Suppose text $t \in \Sigma^n$ and pattern $p \in \Sigma^m$, and a deterministic sample of $p$ is $(\delta; i_0, i_1, \dots, i_{l-1})$.
    The idea for pattern matching is to split the text $t$ into blocks of length $L = \floor{m/4}$, the $i$-th ($0$-indexed) of which consists of indices ranged from $iL$ to $\min\{(i+1)L, n\} - 1$.
    Our algorithm applies to the case of $m \geq 4$, but does not to the case $1 \leq m \leq 3$, where a straightforward quantum search is required (we omit it here).

    The key step for pattern matching is to find a candidate $h_i$ for $0 \leq i < \ceil{n/L}$, indicating the first occurrence in the $i$-th block with starting index $iL \leq j < \min\{(i+1)L, n\}$, where $j+m \leq n$ and $t[j \dots j+m-1] = p$. Formally,
    \begin{equation} \label{eq-deter-3}
        h_i = \min\{ iL \leq j < \min\{(i+1)L, n\}: j+m \leq n \land t[j \dots j+m-1] = p \},
    \end{equation}
    where $\min \emptyset = \infty$.
    Note that Eq. (\ref{eq-scr-candidate}) is similar to Eq. (\ref{eq-deter-3}) but without the condition of $j + m \leq n$, which can easily removed from our algorithm given in the following discussion.
    In the previous subsection, we presented an efficient quantum algorithm that checks whether the string is periodic or not. Then we are able to design an quantum algorithm for pattern matching. Let us consider the cases of aperiodic patterns and periodic patterns, separately:

\subsubsection{Aperiodic Patterns}

    \textbf{The Algorithm}. For an aperiodic pattern $p$, $h_i$ can be computed in the following two steps:
    \begin{enumerate}
      \item Search $j$ from $iL \leq j < \min\{(i+1)L, n\}$ such that $t[i_k - \delta + j] = p[i_k - \delta]$ for every $k \in [l]$ (we call such $j$ a candidate of matching). For every $j$, there is an $O(\sqrt l)$ bounded-error quantum algorithm to check whether $t[i_k - \delta + j] = p[i_k - \delta]$ for every $k \in [l]$. Therefore, finding any $j$ requires $O\left(\sqrt {Ll}\right) = O\left(\sqrt{m \log m}\right)$ queries (by Theorem \ref{thm-search-bounded-error-oracle}).
      \item Check whether the index $j$ found in the previous step satisfies $j+m \leq n$ and $t[j \dots j+m-1] = p$. This can be computed by quantum search in $O\left(\sqrt m\right)$ queries. If the found index $j$ does not satisfy this condition, then $h_i = \infty$; otherwise, $h_i = j$.
    \end{enumerate}
    It is clear that $h_i$ can be computed with bounded-error in $O\left(\sqrt{m \log m}\right)$ queries according to the above discussion. Recall that $p$ appears in $t$ at least once, if and only if there is at least a $0 \leq i < \ceil{n/L}$ such that $h_i = 1$. By Theorem \ref{thm-search-bounded-error-oracle}, this can be checked in $O\left(\sqrt {n/L} \sqrt{m \log m}\right) = O\left(\sqrt {n \log m}\right)$ queries.

  {\vskip 3pt}

    \textbf{Correctness}. Note that in the $i$-th block, if there is no such $j$ or there are more than two values of $j$ such that $t[i_k - \delta + j] = p[i_k - \delta]$ for every $k \in [l]$, then there is no matching in the $i$-th block. More precisely, we have:
    \begin{lemma} [The Ricochet Property \cite{Vis90}] \label{lemma-ricochet}
        Let $p \in \Sigma^m$ be aperiodic, $(\delta; i_0, i_1, \dots, i_{l-1})$ be a deterministic sample, and $t \in \Sigma^n$ be a string of length $n \geq m$, and let $j \in [n-m+1]$. If $t[i_k+j-\delta] = p[i_k-\delta]$ for every $k \in [l]$, then $t[j' \dots j'+m-1] \neq p$ for every $j' \in [n-m+1]$ with $j-\delta \leq j' < j-\delta+\floor{m/2}$.
    \end{lemma}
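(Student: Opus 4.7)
The plan is to argue by contradiction, leveraging the distinguishing property (condition~(2) of Definition~\ref{def-deterministic-sampling}) of the deterministic sample of $p$.

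Suppose toward contradiction that some $j' \in [n-m+1]$ with $j' \neq j$ lies in the window $[j-\delta,\, j-\delta+\floor{m/2})$ and satisfies $t[j' \dots j'+m-1] = p$. Introduce the auxiliary offset $\eta := \delta + (j' - j)$. The constraints on $j'$ translate exactly into $\eta \in [0, \floor{m/2})$ with $\eta \neq \delta$, placing $\eta$ in precisely the regime where the deterministic sample of $p$ is required to discriminate against $\delta$.

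Next, evaluate the full match $t[j'+r] = p[r]$, for $r \in [m]$, at the shifted indices $r = i_k - \eta$ for each checkpoint $k$ such that $i_k - \eta \in [m]$. A direct computation gives $j' + (i_k - \eta) = j + i_k - \delta$, so the match identity becomes $t[j+i_k-\delta] = p[i_k - \eta]$. Combined with the hypothesis $t[j + i_k - \delta] = p[i_k - \delta]$, this yields $p[i_k - \eta] = p[i_k - \delta]$ for every $k$ with $i_k - \eta \in [m]$. But condition~(2) of Definition~\ref{def-deterministic-sampling}, applied with $\eta \in [0, \floor{m/2}) \setminus \{\delta\}$, supplies some $k$ for which $i_k - \eta \in [m]$ and $p[i_k - \eta] \neq p[i_k - \delta]$, a contradiction.

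The main obstacle is purely index bookkeeping: one must verify that the window $[j-\delta,\, j-\delta+\floor{m/2})$ is exactly calibrated so that $\eta := \delta + (j'-j)$ lands in $[0, \floor{m/2})$, which is precisely the range over which the deterministic sample of $p$ is guaranteed to distinguish $\delta$ from every other offset. Once this translation is recognized, the contradiction drops out from a single application of the definition. Aperiodicity of $p$ is used only to ensure that the ``$\eta \neq \delta$'' branch of condition~(2) is non-degenerate (for periodic $p$ with period $d$ it would weaken to $\eta \not\equiv \delta \pmod{d}$, and the conclusion would genuinely fail whenever $j' - j$ is a nonzero multiple of $d$).
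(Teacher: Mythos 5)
Your proposal is correct and follows exactly the paper's argument: with $\eta = \delta + (j'-j)$ (the paper's $x$), the window constraint gives $\eta \in [0,\floor{m/2})$, the hypothesis plus the assumed match give $p[i_k-\eta] = t[j'+(i_k-\eta)] = t[i_k+j-\delta] = p[i_k-\delta]$ for all valid checkpoints, and condition~(2) of Definition~\ref{def-deterministic-sampling} for $\eta \neq \delta$ produces the contradiction. The only cosmetic difference is the explicit restriction to $j' \neq j$ at the outset, whereas the paper reaches $j'=j$ as the sole surviving case.
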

    \begin{proof}
        Assume that $t[j' \dots j'+m-1] = p$ for some $j-\delta \leq j' < j-\delta+\floor{m/2}$. Let $x = \delta-j+j'$. Note that $0 \leq x < \floor{m/2}$. We further assume that $x \neq \delta$. Then by the definition of a deterministic sample, there exists $k \in [l]$ such that $x \leq i_k < x+m$ and $p[i_k-x] \neq p[i_k - \delta]$. On the other hand, $p[i_k-x] = t[j'+i_k-x] = t[i_k+j-\delta] = p[i_k-\delta]$. A contradiction arises, which implies $x = \delta$ and therefore $j = j'$.
    \end{proof}
    We can see that if there are two different indices $j_1, j_2$ in the $i$-th block such that $t[i_k - \delta + j_r] = p[i_k - \delta]$ for every $k \in [l]$ and $r \in \{1, 2\}$, it must hold that $\abs{j_1-j_2} \leq m/4$, since each block has length $L = \floor{m/4}$. If we apply Lemma \ref{lemma-ricochet} on $j_1$, then $t[j_2 \dots j_2+m-1] \neq p$; and if we apply it on $j_2$, then $t[j_1 \dots j_1+m-1] \neq p$. Thus, we conclude that neither $j_1$ nor $j_2$ can be a starting index of occurrence $p$ in $t$. As a result, there is at most one candidate of matching in each block.

    {\vskip 3pt}

    \textbf{A Description in the Form of a Nested Quantum Algorithm}. The above algorithm can be more clearly described as a $3$-level nested quantum algorithm. \begin{itemize}\item The level-$0$ function is defined by
    \[
        f_0(\theta_2, \theta_1, \theta_0) = \begin{cases}
            1 & i_{\theta_0} - \delta + \theta_2L + \theta_1 \in [n] \land t[i_{\theta_0} - \delta + \theta_2L + \theta_1] = p[i_{\theta_0} - \delta] \\
            0 & \text{otherwise}
        \end{cases},
    \]
    which checks whether $t$ matches $p$ at the $\theta_0$-th checkpoint at the $\theta_1$-th index in the $\theta_2$-th block, where $\theta_0 \in [l]$, $\theta_1 \in [L]$ and $\theta_2 \in \left[\ceil{n/L}\right]$.
    \item The level-$1$ function is defined by
    \[
        f_1(\theta_2, \theta_1) = \begin{cases}
            0 & \exists i \in [l], f_0(\theta_2, \theta_1, i) = 0, \\
            1 & \text{otherwise},
        \end{cases}
    \]
    which checks whether $t$ matches the deterministic sample at the $\theta_1$-th index in the $\theta_2$-th block.
    \item The level-$2$ function $f_2(\theta_2)$ finds a solution $i \in [L]$ such that $f_1(\theta_2, i) = 1$, which indicates the (only) candidate in the $\theta_2$-th block.
    \item The level-$3$ function $f_3$ finds a matching among $f_2(i)$ over all $i \in [\ceil{n/L}]$ by checking whether $iL + f_2(i) + m \leq n$ and $t[iL + f_2(i) \dots iL + f_2(i) + m - 1] = p$, where the latter condition can be checked by a quantum searching algorithm, which can be formulated as a 1-level nested quantum algorithm: \begin{itemize}\item The level-$0$ function
    \[
        g_0(\xi_2, \xi_1, \xi_0) = \begin{cases}
            1 & t[\xi_2L + \xi_1 + \xi_0] = p[\xi_0], \\
            0 & \text{otherwise},
        \end{cases}
    \]
    which checks whether the $\xi_0$-th character of substring of $t$ starting at offset $\xi_1$ in the $\xi_2$-th block matches the $\xi_0$-th character of $p$, where $\xi_0 \in [m]$, $\xi_1 \in [L]$ and $\xi_0 \in \left[\ceil{n/L}\right]$.
    \item The $1$-level function
    \[
        g_1(\xi_2, \xi_1) = \begin{cases}
            0 & \exists i \in [m], g_0(\xi_2, \xi_1, i) = 0,\\
            1 & \text{otherwise},
        \end{cases}
    \]
    which checks whether the substring of $t$ starting at offset $\xi_1$ in the $\xi_2$-th block matches $p$. Finally, we have that $f_3$ finds a solution $i \in \left[\ceil{n/L}\right]$ such that $iL + f_2(i) + m \leq n$ and $g_1(i, f_2(i)) = 1$.
    \end{itemize}
\end{itemize}

The structure of our algorithm can be visualized as the tree in Figure \ref{fig-nested}. It is worth noting that $f_3$ calls both $f_2$ and $g_1$.

    \begin{figure} [!hbp]
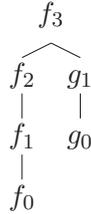
 
    \Tree [.$f_3$ [.$f_2$ [.$f_1$ $f_0$ ] ] [.$g_1$ [.$g_0$ ] ] ]
    \caption{Quantum pattern matching algorithm for aperiodic strings.}
    \label{fig-nested}
    \end{figure}

    By a careful analysis, we see that the query complexity of the above algorithm is $O\left(\sqrt{n \log m}\right)$.

\subsubsection{Periodic Patterns}

    For a periodic pattern $p$, a similar result can be achieved with some minor modifications to the algorithm for an aperiodic pattern. First, we have:
    \begin{lemma} [The Ricochet Property for periodic strings] \label{lemma-ricochet-period}
        Let $p \in \Sigma^m$ be periodic with period $d \leq m/2$, $(\delta; i_0, i_1, \dots, i_{l-1})$ be a deterministic sample, and $t \in \Sigma^n$ be a string of length $n \geq m$, and let $j \in [n-m+1]$. If $t[i_k+j-\delta] = p[i_k-\delta]$ for every $k \in [l]$, then $t[j' \dots j'+m-1] \neq p$ for every $j' \in [n-m+1]$ with $j-\delta \leq j' < j-\delta+\floor{m/2}$ and $j' \not\equiv j \pmod d$.
    \end{lemma}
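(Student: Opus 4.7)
The proof will closely parallel that of Lemma \ref{lemma-ricochet} for aperiodic strings, the only substantive change being that we invoke the extended second condition of Definition \ref{def-deterministic-sampling} (which allows for periodic $p$) rather than its aperiodic specialization. The overall strategy is proof by contradiction: assume that some $j'$ in the specified range, with $j' \not\equiv j \pmod d$, satisfies $t[j' \dots j'+m-1] = p$, and derive a conflict at one of the checkpoints $i_k$.

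The first step is the same change of variable used in the aperiodic proof: set $x = \delta - j + j'$. The range hypothesis $j - \delta \leq j' < j - \delta + \floor{m/2}$ then translates to $0 \leq x < \floor{m/2}$. Crucially, the additional noncongruence hypothesis $j' \not\equiv j \pmod d$ is exactly equivalent to $x \not\equiv \delta \pmod d$, because $x - \delta = j' - j$. This is precisely the condition needed to apply the periodic version of Definition \ref{def-deterministic-sampling}: there exists a checkpoint index $k \in [l]$ with $x \leq i_k < x + m$ (i.e., $i_k - x \in [m]$) and $p[i_k - x] \neq p[i_k - \delta]$.

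The second step is to derive the contradiction through the same three-step character identity as in Lemma \ref{lemma-ricochet}:
\[
    p[i_k - x] \;=\; t[j' + (i_k - x)] \;=\; t[i_k + j - \delta] \;=\; p[i_k - \delta].
\]
Here the first equality uses the assumed match $t[j' \dots j'+m-1] = p$ (together with $0 \leq i_k - x < m$), the second equality is the algebraic identity $j' - x = j - \delta$, and the third equality is the checkpoint-agreement hypothesis of the lemma. This contradicts $p[i_k - x] \neq p[i_k - \delta]$ from the previous step, finishing the proof.

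There is no real obstacle here beyond bookkeeping: one must verify that the indices stay within the valid ranges (so that all character accesses are well-defined) and that the congruence equivalence $x \not\equiv \delta \pmod d \iff j \not\equiv j' \pmod d$ is precisely the condition that unlocks the generalized deterministic-sample guarantee. Both are routine once the change of variable is in place, and the structural content of the argument is identical to the aperiodic case.
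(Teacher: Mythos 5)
Your proof is correct and matches the paper's intent exactly: the paper only remarks that the periodic case is "Similar to proof of Lemma~\ref{lemma-ricochet}," and your write-up spells out precisely the one nontrivial modification, namely replacing $x \neq \delta$ by $x \not\equiv \delta \pmod d$ so that the generalized second condition of Definition~\ref{def-deterministic-sampling} applies, with the rest of the contradiction chain $p[i_k-x]=t[j'+i_k-x]=t[i_k+j-\delta]=p[i_k-\delta]$ unchanged.
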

    \begin{proof}
        Similar to proof of Lemma \ref{lemma-ricochet}.
    \end{proof}

 Now in order deal with periodic pattern $p$, the algorithm for an aperiodic pattern can be modified as follows. For the $i$-th block, in order to compute $h_i$, we need to find (by minimum finding) the leftmost and the rightmost candidates $j_l$ and $j_r$, which requires $O\left(\sqrt{Ll} \right) = O\left(\sqrt{m \log m} \right)$ queries (by Algorithm \ref{algo-min-find}). Let us consider two possible cases:
    \begin{enumerate}
      \item If $j_l \not\equiv j_r \pmod d$, then by Lemma \ref{lemma-ricochet-period}, there is no matching in the $i$-th block and thus $h_i = \infty$;
      \item If $j_l \equiv j_r \pmod d$, find the smallest $j_l \leq q \leq R_i$ such that $t[q \dots R_i] = p[q-j_l \dots q-j_l+R_i-q]$, where $R_i = \min\{(i+1)L, n\} - 1$ denotes the right endpoints of the $i$-th block, by minimum finding in $O\left(\sqrt L \right) = O\left(\sqrt m \right)$ queries (by Algorithm \ref{algo-min-find}). Then the leftmost candidate will be
          \[
          j = j_l + \ceil*{\frac {q-j_l} d} d.
          \]
          If $j \leq \min \left\{ n - m, j_r \right\}$ and $t[j \dots j+m-1] = p$ (which can be checked by quantum search in $O(\sqrt m)$ queries), then $h_i = j$; otherwise, $h_i = \infty$.
    \end{enumerate}

    \textbf{Correctness}. It is not straightforward to see that the leftmost occurrence in the $i$-th block is found in the case $j_l \equiv j_r \pmod d$ if there does exist an occurrence in that block. By Lemma \ref{lemma-ricochet-period}, if there exists an occurrence starting at index $j_l \leq j \leq j_r$ in the $i$-th block, then $j_l \equiv j \equiv j_r \pmod d$. Let the leftmost and the rightmost occurrences of $p$ in the $i$-th block be $j_l'$ and $j_r'$, respectively. Then $j_l \leq j_l' \leq j_r' \leq j_r$ and $j_l \equiv j_l' \equiv j_r' \equiv j_r \pmod d$. By the minimality of $q$ with $j_l \leq q \leq R_i$ and $t[q \dots R_i] = p[q-j_l \dots q-j_l+R_i-q]$, we have $q \leq j_l'$, and therefore the candidate determined by $q$ is
    \[
        j_q = j_l + \ceil*{\frac {q-j_l} d} d \leq j_l'.
    \]
    On the other hand, if $j_q \neq j_l'$, the existence of $j_l'$ leads immediately to that $t[j_q \dots j_q + m - 1]$ matches $p$, i.e. $j_q < j_l'$ is also an occurrence, which contradicts with the minimality of $j_l'$. As a result, we have $j_q = j_l'$. That is, our algorithm finds the leftmost occurrence in the block, if exists.

    {\vskip 3pt}

      \textbf{Complexity}. According to the above discussion, it is clear that $h_i$ can be computed with bounded-error in $O\left(\sqrt{m \log m} \right)$ queries.
    Thus, the entire problem can be solved by searching on bounded-error oracles (by Theorem \ref{thm-search-bounded-error-oracle}) and the query complexity is
    \[
    O\left(\sqrt{n/L} \sqrt{m \log m} \right) = O\left(\sqrt{n \log m} \right).
    \]

    Combining the above two cases, we conclude that there is a bounded-error quantum algorithm for pattern matching in $O\left(\sqrt{n \log m} + \sqrt {m \log^{3} m \log \log m}\right)$ queries.

\section{Proof of Exclusion Rule for LMSR} \label{appendix-ex-scr}

In this appendix, we present a proof of Lemma \ref{lemma-ex}. To this end, we first observe:

\begin{proposition} \label{prop-ex}
        Suppose $s \in \Sigma^n$ and $s[\operatorname{LMSR}(s) \dots \operatorname{LMSR}(s)+B-1] = aba$, where $\abs{a} \geq 1$, $\abs{b} \geq 0$, and $B = 2\abs{a}+\abs{b} \leq n/2$. For every $m > 0$ and $i \in [n]$, if $s[i \dots i+\abs{b}+\abs{a}-1] = ba$, then $$s[i \dots i+m-1] \leq s[i+\abs{b}+\abs{a} \dots i+\abs{b}+\abs{a}+m-1].$$
    \end{proposition}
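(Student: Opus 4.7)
The plan is to reformulate the ``for every $m > 0$'' inequality as a single lexicographic comparison of infinite periodic strings and then locate the first position at which they differ. Let $k = \abs{a}$, $l = \abs{b}$, and $p = \mathit{LMSR}(s)$. For any index $j$, write $\sigma_j = s[j]\, s[j+1]\, s[j+2]\, \cdots$ for the infinite sequence obtained by extending $s$ periodically via the convention $s[k] \equiv s[k \bmod n]$. Then the statement that $s[i \dots i + m - 1] \leq s[i+k+l \dots i+k+l+m-1]$ for every $m > 0$ is equivalent to $\sigma_i \leq \sigma_{i+k+l}$ in the lexicographic order on infinite strings, so it suffices to prove this single comparison.

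To expose the first position at which they differ, first measure how far $\sigma_i$ is covered by repeated copies of $ba$. Let $r \geq 1$ be the largest integer (possibly $+\infty$) such that $\sigma_i$ begins with $(ba)^r$; the hypothesis $s[i \dots i+k+l-1] = ba$ gives $r \geq 1$. If $r = +\infty$, then $s$ (extended periodically) has $k+l$ as a period, which forces $\sigma_{i+k+l} = \sigma_i$ and the desired inequality is trivial. Otherwise there is a smallest $c^* \in \{0,1,\dots,k+l-1\}$ with $s[i+r(k+l)+c^*] \neq (ba)[c^*]$.

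The central step will be extracting the character-level inequality $(ba)[c^*] < s[i+r(k+l)+c^*]$ from the minimality of $\mathit{SCR}(s)$. I will apply the LMSR property to the rotation $s^{(q)}$ where $q := i + r(k+l) - k$. By the definition of $r$, the $k$ characters $s[q \dots q+k-1]$ form the trailing $a$ of the $r$-th copy of $ba$, so $s^{(q)}$ begins with $a$, matching the first $k$ characters of $s^{(p)}$, which begins with $aba$; minimality then gives $s^{(p)} \leq s^{(q)}$. After the agreed $a$-prefix, the next $k+l$ characters of $s^{(p)}$ read $ba$, while those of $s^{(q)}$ are $s[i+r(k+l)], s[i+r(k+l)+1], \dots$; by the choice of $c^*$ these two blocks match at positions $< c^*$ and first disagree at $c^*$, and combined with $s^{(p)} \leq s^{(q)}$ this disagreement sharpens into the required strict inequality.

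Finally, I will identify the first index at which $\sigma_i$ and $\sigma_{i+k+l}$ differ as $j^* := (r-1)(k+l) + c^*$. At $j^*$ we have $\sigma_i[j^*] = (ba)[c^*]$ (since $j^* < r(k+l)$ still lies inside $\sigma_i$'s $(ba)^r$-prefix), while $\sigma_{i+k+l}[j^*] = s[i+r(k+l)+c^*]$, so the preceding paragraph supplies $\sigma_i[j^*] < \sigma_{i+k+l}[j^*]$. Agreement on earlier positions $j < j^*$ splits into two subranges: $j < (r-1)(k+l)$, where both sequences lie in the $ba$-repeated region inherited from $(ba)^r$, and $(r-1)(k+l) \leq j < j^*$, where minimality of $c^*$ forces the match. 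The main obstacle I anticipate is choosing the reference rotation $s^{(q)}$ so that LMSR minimality delivers exactly the character inequality we need, together with the bookkeeping around the boundary where the $(ba)$-repetition breaks.
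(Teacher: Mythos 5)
Your proposal is correct, and it reformulates the paper's inductive argument as a direct one. The paper proves the claim by induction on $m$: the base case ($1 \le m \le \abs{a}+\abs{b}$) invokes LMSR minimality at the rotation starting at $i+\abs{b}$, and the inductive step either closes off with a strict inequality already present or, in the equality case, notes that $\sigma_i$ begins with one more copy of $ba$ and re-invokes the base case at the shifted index $i + k'(\abs{a}+\abs{b})$. You instead pass to the infinite periodic extensions $\sigma_i$ and $\sigma_{i+\abs{a}+\abs{b}}$, identify the exact block $r$ at which the $(ba)$-repetition in $\sigma_i$ first breaks (or conclude $\sigma_i = \sigma_{i+\abs{a}+\abs{b}}$ when $r=\infty$), and apply LMSR minimality a single time at the rotation $q = i + r(\abs{a}+\abs{b}) - \abs{a}$ --- which for $r$ finite is precisely the last rotation the paper's induction would use before it leaves the equality case. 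Both proofs therefore rest on the same mechanism (matching an $a$-prefix of $\mathit{SCR}(s)$ against the trailing $a$ of a $(ba)$-block and reading off the character that must increase), but yours telescopes the induction into a single location of the first disagreement, which makes it clearer where minimality is genuinely needed and keeps the bookkeeping local; the paper's induction is somewhat more mechanical but avoids having to define and reason about $r$, $c^*$, and the modular arithmetic around the wrap-around. One small point worth making explicit when you write this up: $\abs{a} + c^* < B \le n/2$, so the comparison of $s^{(p)}$ and $s^{(q)}$ occurs within the first $n$ characters of the rotations and the indices $q,\dots,q+\abs{a}+c^*$ are distinct modulo $n$, which is what lets you read $(ba)[c^*] < s[i + r(\abs{a}+\abs{b}) + c^*]$ off the rotation inequality.
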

    \begin{proof}
        We prove it by induction on $m$.

        \textbf{Basis}. For every $i \in [n]$ with $s[i \dots i+\abs{b}+\abs{a}-1] = ba$, we note that $s[i+\abs{b} \dots i+\abs{b}+B-1] = a s[i+\abs{b}+\abs{a} \dots i+\abs{b} + B-1]$. On the other hand, by the definition of LMSR, we have
        \[
            s[i+\abs{b} \dots i+\abs{b}+B-1] \geq s[\operatorname{LMSR}(s) \dots \operatorname{LMSR}(s)+B-1] = aba.
        \]
        Therefore, it holds that $s[i+\abs{b}+\abs{a} \dots i+\abs{b}-B-1] \geq ba = s[i \dots i+\abs{b}+\abs{a}-1]$. Immediately, we see that the proposition holds for $1 \leq m \leq \abs{b}+\abs{a}$.

        \textbf{Induction}. Assume that the proposition holds for $m' = k(\abs{b}+\abs{a})$ and $k \geq 1$, and we are going to prove it for the case $m' < m \leq (k+1)(\abs{b}+\abs{a})$. According to the induction hypothesis, we have
        \[
            s[i \dots i+m'-1] \leq s[i+\abs{b}+\abs{a} \dots i+\abs{b}+\abs{a}+m'-1]
        \]
        for every $0 \leq i < n$. Let us consider the following two cases:
        \begin{enumerate}
          \item $s[i \dots i+m'-1] < s[i+\abs{b}+\abs{a} \dots i+\abs{b}+\abs{a}+m'-1]$. In this case, it is trivial that $s[i \dots i+m-1] < s[i+\abs{b}+\abs{a} \dots i+\abs{b}+\abs{a}+m-1]$ for every $m > m'$.
          \item $s[i \dots i+m'-1] = s[i+\abs{b}+\abs{a} \dots i+\abs{b}+\abs{a}+m'-1]$. In this case, we have
          $s[i \dots i+(k+1)(\abs{b}+\abs{a})-1] = (ba)^{k+1}$, and $$s[i+\abs{b}+\abs{a} \dots i+(k+2)(\abs{b}+\abs{a})-1] = (ba)^k s[i+(k+1)(\abs{b}+\abs{a}) \dots i+(k+2)(\abs{b}+\abs{a})-1].$$ According to the induction hypothesis for $i' = i+m' = i+k(\abs{b}+\abs{a})$ (this can be derived from the index $i'' = i' \bmod n \in [n]$), we have:
          \[
            s[i+k(\abs{b}+\abs{a}) \dots i+(k+1)(\abs{b}+\abs{a})-1] \leq s[i+(k+1)(\abs{b}+\abs{a}) \dots i+(k+2)(\abs{b}+\abs{a})-1].
          \]
          Therefore, we obtain $s[i \dots i+(k+1)(\abs{b}+\abs{a})-1] \leq s[i+\abs{b}+\abs{a} \dots i+(k+2)(\abs{b}+\abs{a})-1]$, which means the proposition holds for $m = m' + \abs{b} + \abs{a}$. Immediately, we see that the proposition also holds for $m' < m \leq (k+1)(\abs{b}+\abs{a})$.
        \end{enumerate}
    \end{proof}

    Now we are ready to prove Lemma \ref{lemma-ex}.
    Let $\delta = j-i$, then we have $1 \leq \delta \leq B-1$. We consider the following two cases:

\begin{itemize}\item \textbf{Case 1}. $\delta > B/2$. In this case, $s[\operatorname{LMSR}(s) \dots \operatorname{LMSR}(s)+B-1] = aba$ for some strings $a$ and $b$, where $\abs{a} = B-\delta$ and $\abs{b} = 2\delta-B$. In order to prove that $\operatorname{LMSR}(s) \neq j$, it is sufficient to show that $s[i \dots i+n-1] \leq s[j \dots j+n-1]$. Note that
    \begin{align*}
        s[i \dots i+n-1] & = ababa s[j+B \dots j+n-\delta-1], \\
        s[j \dots j+n-1] & = aba s[j+B \dots j+n-1].
    \end{align*}
    We only need to show that $ba s[j+B \dots j+n-\delta-1] \leq s[j+B \dots j+n-1]$, that is, $s[i+B \dots i+n-1] \leq s[i+B+\delta \dots i+\delta+n-1]$, which can be immediately obtained from Proposition \ref{prop-ex} by letting $m \equiv n-B$ and $i \equiv i+B$.

\item \textbf{Case 2}. $\delta \leq B/2$. Let $t = s[\operatorname{LMSR}(s) \dots \operatorname{LMSR}(s)+B-1]$. Note that $t[i+\delta] = t[i]$ for every $i \in [B]$. We immediately see that $t$ has period $d = \gcd(B, \delta)$, that is, $t = a^k$, where $\abs{a} = d$ and $B = kd$ with $k \geq 2$. For convenience, we denote $\delta = ld$ for some $1 \leq l \leq k-1$. In order to prove that $\operatorname{LMSR}(s) \neq j$, it is sufficient to show that $s[i \dots i+n-1] \leq s[j \dots j+n-1]$. Note that
        \begin{align*}
            s[i \dots i+n-1] & = a^{k+m} s[j+B \dots j+n-\delta-1], \\
            s[j \dots j+n-1] & = a^k s[j+B \dots j+n-1].
        \end{align*}
    We only need to show that $a^m s[j+B \dots j+n-\delta-1] \leq s[j+B \dots j+n-1]$, i.e. $s[i+B \dots i+n-1] \leq s[i+B+\delta \dots i+\delta+n-1]$, which can be immediately obtained from Proposition \ref{prop-ex} by noting that $s[\operatorname{LMSR}(s) \dots \operatorname{LMSR}(s)+B-1] = a^l a^{k-2l} a^l$.\end{itemize}

\section{Worst-case Quantum Lower Bound} \label{appendix-a}

The worst-case quantum lower bound can be examined in a way different from that in Section \ref{worst-case-section}.
    Let us consider a special case where all strings are binary, that is, the alphabet is $\Sigma = \{0, 1\}$.
    A solution to the LMSR problem implies the existence of a $0$ character. That is, $s[i] = 0$ for some $i \in [n]$ if and only if $s[\operatorname{LMSR}(s)] = 0$. Therefore, the search problem can be reduced to LMSR. It is known that the search problem has a worst-case query complexity lower bound $\Omega\left(\sqrt n\right)$ for bounded-error quantum algorithms \cite{Ben97, Boy98, Zal99}, and $\Omega(n)$ for exact and zero-error quantum algorithms \cite{Bea01}. Consequently, we assert that the LMSR problem also has a worst-case quantum query complexity lower bound $\Omega\left(\sqrt n\right)$ for bounded-error quantum algorithms, and $\Omega(n)$ for exact and zero-error quantum algorithms.

\end{document}